\tikzset{commutative diagrams/.cd,arrow style=tikz,diagrams={>=latex'}}
\algnewcommand{\algorithmicassumption}{\textbf{Requirement:}}
\algnewcommand{\Assume}{\item[\algorithmicassumption]}
\algnewcommand{\InlineIf}[2]{
  \algorithmicif\ #1\ \algorithmicthen\ #2}
\algnewcommand{\InlineElse}[1]{
  \algorithmicelse\ #1}
\algnewcommand{\InlineIfElse}[3]{
  \algorithmicif\ #1\ \algorithmicthen\ #2\ \algorithmicelse\ #3}
\algnewcommand{\InlineFor}[2]{\algorithmicfor\ #1\ \algorithmicdo\ #2} 
\algnewcommand{\CommentLine}[1]{\(\triangleright\) \emph{\small #1}}
\algnewcommand{\algorithmicand}{\textbf{and}}
\algnewcommand{\algorithmicor}{\textbf{or}}
\algnewcommand{\FOR}{\algorithmicfor}
\algnewcommand{\OR}{\algorithmicor}
\algnewcommand{\AND}{\algorithmicand}
\algnewcommand{\IF}{\algorithmicif}
\algnewcommand{\THEN}{\algorithmicthen}
\algnewcommand{\ELSE}{\algorithmicelse}
\newcommand{\ZZ}{\mathbb{Z}}
\newcommand{\ZZpos}{\mathbb{Z}_{>0}}
\renewcommand{\AA}{\mathbb{A}}
\newcommand{\Syz}{\operatorname{Syz}}
\newcommand{\LT}{\operatorname{LT}}
\newcommand{\LM}{\operatorname{LM}}
\newcommand{\gin}{\operatorname{gin}}
\newcommand{\HF}{\operatorname{HF}}
\newcommand{\rank}{\operatorname{rk}}
\newcommand{\rows}{\operatorname{rows}}
\newcommand{\rref}{\operatorname{rref}}
\newcommand{\trace}{\operatorname{trace}}
\newcommand{\im}{\operatorname{im}}
\newcommand{\fraka}{\mathfrak{a}}
\newcommand{\field}{\Bbbk}
\newcommand{\algclosure}{{\overline{\Bbbk}}}
\newcommand{\pring}{\field[x_1,\dots,x_k]}
\newcommand{\ring}{\mathcal{R}}
\newcommand{\mat}{M}
\newcommand{\matmod}{M_n(\ring)}
\newcommand{\idmat}{I_n}
\newcommand{\detidealgen}[1]{\mathcal{I}_{r+1}\left(#1\right)}
\newcommand{\detideal}{\mathcal{I}_{r+1}(\mat)}
\newcommand{\detidealCorkOne}[1][]{\ifthenelse{\equal{#1}{}}{\mathcal{I}_{n-2}(\mat)}{\mathcal{I}_{n-2}(#1)}}
\newcommand{\detsystemCorkOne}[1][]{\ifthenelse{\equal{#1}{}}{F_{n-2}(\mat)}{F_{n-2}(#1)}}
\newcommand{\Mon}{\operatorname{Mon}}
\newcommand{\macmat}{\mathcal{M}}
\newcommand{\macmatred}{\widetilde{\macmat}}
\newcommand{\degdmon}{\Mon_d(\ring)}
\newcommand{\genericring}[1]{\ring\left[\left\{{#1}^{(\tau)}:\tau\in\degdmon\right\}\right]}
\newcommand{\genmat}[1][]{\ifthenelse{\equal{#1}{}}{\mathscr{A}_n^d}{\mathscr{A}_n^{#1}}}
\newcommand{\affspace}[1]{\AA^{#1}}
\newcommand{\affspaceCorkOne}{\affspace{4n^2}}
\newcommand{\degdaffspace}{\affspace{\binom{k+d-1}{k-1}\cdot n^2}}
\newcommand{\property}{\mathscr{P}}
\newcommand{\CM}{\operatorname{CM}}
\newcommand{\RL}{\operatorname{RL}}
\newcommand{\HS}{\operatorname{HS}} 
\newcommand{\modord}{\succ_{\mathrm{TOP}}} 
\newcommand{\LTmod}{\LT_{\modord}} 
\newcommand{\LMmod}{\LM_{\modord}} 
\newcommand{\signature}{\operatorname{sgn}} 
\newcommand{\modf}{\mathbf{f}} 
\newcommand{\modg}{\mathbf{g}} 
\newcommand{\modh}{\mathbf{h}} 
\newcommand{\modF}{\mathbf{F}} 
\newcommand{\modgb}{\textsc{ModGB}} 
\newcommand{\detgb}{\textsc{DetGB}} 
\newcommand{\freeresmod}{\mathcal{E}} 
\newcommand{\cofac}{M^{C}} 
\theoremstyle{definition}
\newtheorem{theorem}{Theorem}[section]
\newtheorem*{theorem*}{Theorem}
\newtheorem{definition}[theorem]{Definition}
\newtheorem{lemma}[theorem]{Lemma}
\newtheorem{proposition}[theorem]{Proposition}
\newtheorem{corollary}[theorem]{Corollary}
\newtheorem{remark}[theorem]{Remark}
\newtheorem{conjecture}[theorem]{Conjecture}
\newcommand{\myparagraph}[1]{\smallskip\emph{#1.}} 
\newlength\ubwidth
\newcommand\parunderbrace[2]{\settowidth\ubwidth{$\displaystyle{#1}$}\underbrace{#1}_{\parbox{1.25\ubwidth}{\scriptsize\RaggedRight#2}}}
\title[Gr\"obner bases of comaximal determinantal ideals]{On the arithmetic complexity of computing Gr\"obner bases of comaximal determinantal ideals}
\author{Sriram Gopalakrishnan}
\email{sriram.gopalakrishnan@lip6.fr}
\address{Author's affiliations: $\begin{cases}\text{Sorbonne Université, CNRS, LIP6, F-75005 Paris, France}\\ \text{University of Waterloo, Waterloo, ON, Canada}\end{cases}$}
\begin{document}

\begin{abstract}
	Let $M$ be an $n\times n$ matrix of homogeneous linear forms over a field
	$\Bbbk$. If the ideal $\mathcal{I}_{n-2}(M)$ generated by minors of size $n-1$
	is Cohen-Macaulay, then the Gulliksen-Neg{\aa}rd complex is a free resolution
	of $\mathcal{I}_{n-2}(M)$. It has recently been shown that by taking into
	account the syzygy modules for $\mathcal{I}_{n-2}(M)$ which can be obtained
	from this complex, one can derive a refined signature-based Gr\"obner basis
	algorithm \textsc{DetGB} which avoids reductions to zero when computing a
	grevlex Gr\"obner basis for $\mathcal{I}_{n-2}(M)$. In this paper, we establish
	sharp complexity bounds on \textsc{DetGB}.  To accomplish this, we prove
	several results on the sizes of reduced grevlex Gr\"obner bases of reverse
	lexicographic ideals, thanks to which we obtain two main complexity results
	which rely on conjectures similar to that of Fr\"oberg. The first one states
	that, in the zero-dimensional case, the size of the reduced grevlex Gr\"obner
	basis of $\mathcal{I}_{n-2}(M)$ is bounded from below by $n^{6}$
	asymptotically. The second, also in the zero-dimensional case, states that the
	complexity of \textsc{DetGB} is bounded from above by $n^{2\omega+3}$
	asymptotically, where $2\le\omega\le 3$ is any complexity exponent for matrix
	multiplication over $\Bbbk$.
\end{abstract}

\maketitle

\section{Introduction}

\myparagraph{The MinRank problem} Let $\field$ be a field and let $\algclosure$
be an algebraic closure of \(\field\). Let $\ring=\field[x_1,\dots,x_k]$ for
some $k\in\ZZpos$. Let $M$ be an $m\times n$ matrix whose entries are
homogeneous polynomials in \(\ring\) of degree $d$. Without loss of generality,
suppose $m\ge n$. Let $r\in\ZZpos$ with $r<n$. We denote by $\detideal\subset
\ring$ the ideal generated by the collection of all minors of size $(r+1)$ of
$M$, that is, by all determinants of submatrices of $M$ of size
$(r+1)\times(r+1)$. Note that for any point $x\in V_\algclosure(\detideal)$,
the evaluation of each entry of $M$ at $x$ yields a matrix $M(x)$ with entries
in $\algclosure$ whose rank is at most $r$. Ideals of the form $\detideal$ are
called \emph{determinantal ideals} and have been well-studied (see e.g.\
\cite{BrunsVetter1988}). For $d=1$, meaning that the entries of $M$ are linear
forms, the problem of computing $V(\detideal)$ is known as the \emph{MinRank}
problem. For $d\ge 1$, the problem is known as the \emph{generalized MinRank}
problem. The MinRank problem is known to be $\mathcal{NP}$-hard (see\
\cite{BussFrandsenShallit1999}).

The MinRank problem lies at the heart of many cryptographic schemes (e.g.\
\cite{Courtois2001,Patarin1996,KipnisShamir1999}) and in many cases, it is
possible to reduce the problem of breaking a cryptographic scheme to specific
structured instances of the MinRank problem (see e.g.\
\cite{FaugereLevyPerret2008,DingSchmidt2005,Beullens2022,
BaenaBriaudCabarcasPerlnerSmithToneVerbel2022,
BardetBriaudBrosGaboritNeigerRuattaTillich2020,
BardetBrosCabarcasGaboritPerlnerSmithToneTillitchVerbel2020}).

Outside of the realm of cryptography, many problems in effective algebraic
geometry can be modeled as (generalized) MinRank instances. Problems such as
those of computing critical points (see e.g.\ \cite{FSS12, Spa14}), polynomial
optimization (see e.g.\ \cite{GS14,BGHM14}), quantifier elimination (see e.g.\
\cite{HoSa09,HoSa12,LeSa21}), and others in real algebraic geometry (see e.g.\
\cite{SaSc03, BGHP05, BGHSS, SaSc17,BaSa15, LaSa21}) can all be viewed as
instances of the (generalized) MinRank problem.

\myparagraph{Gr\"obner basis algorithms} One possible technique to solve the
MinRank problem is to solve the polynomial system of $(r+1)$-minors of $M$.
Such systems, known as \emph{determinantal systems}, are well-studied and
highly structured. We refer the reader to
\cite{BrunsVetter1988,Lascoux1978,BCRV22} for a wealth of general theory about
determinantal systems. This structure suggests that existing general polynomial
systems solving techniques could be optimized in search of more efficient
algorithms to solve the MinRank problem. For example in \cite{HSSV21}, specific
properties of determinantal systems are taken into account to present adapted
homotopy continuation techniques for solving determinantal systems.

In this paper, we investigate Gr\"obner basis techniques for solving
determinantal polynomial systems. Many improvements have been made to the
original general-purpose Gr\"obner basis algorithm given by Buchberger in his
thesis \cite{bGroebner1965}. Perhaps one of the most important has been the
introduction of linear algebra via Macaulay matrices to perform $S$-pair
reduction in the $F_4$ algorithm given by Faug\`ere in \cite{Faugere1999}. The
other critical improvement has been the identification and elimination of
reductions to zero by way of a family of criteria, culminating in the $F_5$
algorithm given by Faug\`ere in \cite{Faugere2002}. The $F_5$ algorithm uses at
its core a simple data structure called \emph{signatures}, which keeps track of
the way the Gr\"obner basis was computed. Since the introduction of the $F_5$
algorithm, many signature-based Gr\"obner basis algorithms have been developed.
We refer to \cite{EderFaugere2016} for a survey of such algorithms.

\myparagraph{Arithmetic complexity} We work with the arithmetic complexity
model, counting basic arithmetic operations in $\field$ to estimate the
complexity of Gr\"obner basis algorithms. Under this model of computation, the
complexity of computing Gr\"obner bases using a linear-algebra based algorithm
boils down to that of echelonizing Macaulay matrices over $\field$. One key
requirement in order to obtain reasonable complexity estimates is sharp upper
bounds on the degrees of the polynomials comprising the output Gr\"obner basis.
For (non-determinantal) polynomial systems satisfying certain genericity
properties, such bounds have been given in \cite{Lazard1983,Giusti1984}.

For determinantal ideals, such bounds have been given in
\cite{FSS10,FaugereSafeySpaenlehauer2013}, together with corresponding
complexity analyses which simply compute the cost of echelonizing the
appropriate Macaulay matrices. In these works, the impact of the specific
structure of determinantal ideals on the shapes of the Macaulay matrices is not
taken into account. In particular, reductions to zero arising from the
structure of determinantal ideals are not studied or exploited. These
reductions to zero are in direct correspondence with elements in the syzygy
module of the considered determinantal ideal.  In \cite{GoNeSa23}, these syzygy
modules are described and used to introduce new criteria which avoid reductions
to zero when computing Gr\"obner bases of determinantal ideals. In particular,
when $r=n-2$, $m=n$, and under suitable genericity assumptions, all reductions
to zero are avoided. This implies that all linearly redundant rows in the
Macaulay matrices can be pruned a priori, leading to faster echelonization. A
subsequent complexity analysis is given, which takes into account these
reductions to zero, but still does not exploit the specific structure of the
Macaulay matrices encountered.

\myparagraph{Main results} In this paper, we consider the \emph{comaximal case}
for square matrices of linear forms. That is, we take $d=1$, $r=n-2$, and
$m=n$. The codimension of $\detideal$ is $k-(n-r)^2$ (see e.g.\
\cite[Theorem~10]{FaugereSafeySpaenlehauer2013}). We fix therefore $k=4$, so
that $\detidealCorkOne$ is of dimension zero. In this setting, under certain
genericity assumptions which we make explicit, and assuming certain conjectures
related to the generic grevlex staircase of $\detidealCorkOne$, we give two
main results.

First, we provide an exact formula for the size of the reduced grevlex
Gr\"obner basis of $\detidealCorkOne$ (see \cref{thm:analysis:lower-bound} for
a precise statement and proof): 

\begin{quotation}
	Under certain genericity assumptions and assuming that the ideal
	$\detidealCorkOne$ is reverse lexicographic, the number of elements of
	$\field$ in the dense representation for the reduced grevlex Gr\"obner
	basis of $\detidealCorkOne$ is asymptotically bounded from below by
	$n^{6}$.
\end{quotation}

Second, we give a sharp complexity analysis of the algorithm
\cite[Algorithm~3]{GoNeSa23}-which we call \textsc{DetGB}-taking into account
the specific structure of the Macaulay matrices encountered to obtain our
complexity bound (see \cref{thm:analysis:final-complexity} for the precise
statement and proof): 

\begin{quotation}
	Under certain genericity assumptions and assuming that the ideal
	$\detidealCorkOne$ is reverse lexicographic, the number of arithmetic
	operations in $\field$ performed by \cite[Algorithm~3]{GoNeSa23} when
	computing the reduced grevlex Gr\"obner basis for $\detidealCorkOne$ is in
	$O(n^{2\omega+3})$.
\end{quotation}

To accomplish both of these analyses, we establish results on the structure of
the grevlex staircase of a well-studied class of ideals known as reverse
lexicographic ideals. In \cref{sec:lefschetz}, we rely on the well-studied
notion of \emph{Lefschetz properties} (see \cite{Harima2013ch3}) to relate
\cref{conj:revlex:det-ideal-is-revlex} which states that $\detidealCorkOne$ is
generically reverse lexicographic to the $2$-SLP.

The statement and proof of \cref{thm:analysis:lower-bound} is obtained by
simply determining the size of the output reduced grevlex Gr\"obner basis. In
particular, we use a dense representation of the polynomials in the Gr\"obner
basis and provide a formula for the number of nonzero coefficients appearing in
this dense representation.

Inspired by the sharp complexity analysis of $F_5$ in the case of a regular
sequence in simultaneous Noether position given in
\cite{BardetFaugereSalvy2015}, we again use our results on on the structure of
the grevlex staircases of reverse lexicographic ideals to establish our
complexity upper bound, stated in \cref{thm:analysis:final-complexity},
assuming \cref{conj:revlex:det-ideal-is-revlex}. Our upper bound arises by
first giving an explicit estimate for the number of arithmetic operations over
$\field$ performed in \cite[Algorithm~3]{GoNeSa23}, then analyzing the
asymptotics of this formula. We conclude by showing that the asymptotic
analysis we perform is sharp, by comparing it to the explicit estimate we give.
The upper bound we obtain compares favorably to the bound $O(n^{4\omega+2})$ of
\cite[Theorem~20]{FaugereSafeySpaenlehauer2013}.

\section{Preliminaries}

Throughout, we denote by $\ring$ the ring $\field[x_1,\dots x_k]$, by
$\Mon(\ring)$ the set of all monomials of $\ring$, and by $\degdmon$ the set of
monomials of degree $d$ of $\ring$. We use the standard multi-index notation,
whereby for some $k$-tuple of integers
$\alpha=(\alpha_1,\dots,\alpha_k)\in\ZZ_{\ge 0}^{k}$, we abbreviate
$x^{\alpha}=x_1^{\alpha_1}\cdots x_k^{\alpha_k}$. We use $\affspace{s}$, to
denote the affine space of dimension $s$ over $\algclosure$, viewed as an
affine variety with the Zariski topology.

\subsection{Gr\"obner bases}
A wealth of general theory about Gr\"obner bases can be found, for example, in
\cite{CoxLittleOShea2005} and the references therein. We recall here only what
is necessary for our purposes.

We denote by $\succ$ an admissible monomial order on $\ring$. That is,
$\succ$ is a total order on $\Mon(\ring)$ such that if $\sigma,\tau$ are
monomials with $\sigma\succ\tau$, then for any monomial $m\in\Mon(\ring)$,
$m\sigma\succ m\tau$, and for which there is no infinitely decreasing sequence
of monomials. Given a polynomial $f\in\ring$, we use $\LM_{\succ}(f)$ to denote
the leading monomial of $f$ with respect to $\succ$, and $\LT_{\succ}(f)$ to
denote the leading term of $f$ with respect to $\succ$, that is,
$\LM_{\succ}(f)$ multiplied by its coefficient in $f$. Given a set
$F\subseteq\ring$ of polynomials, we define the sets
\[
	\LM_{\succ}(F):=\{\LM_{\succ}(f):f\in F\}\quad\text{and}\quad\LT_{\succ}(F):=\{\LT_{\succ}(f):f\in F\}
.\] 
When $\succ$ is clear from the context, we remove it as a subscript.

We use $e_i$ to denote the standard $i$-th basis element of the free
$\ring$-module $\ring^{m}$. A monomial of $\ring^{m}$ is an element of
$\ring^{m}$ of the form $x^{\alpha}e_i$, where $x^{\alpha}\in\Mon(\ring)$ and
$e_i$ is some standard basis element of $\ring^{m}$. We use $\Mon(\ring^{m})$
to denote the set of all monomials of $\ring^{m}$.

The free module $\ring^{m}$ carries a standard grading induced by the grading
by degree on $\ring$. That is, we can write
$\ring^{m}=\bigoplus_{d=0}^{\infty}\ring_d^{m}$, where
\[
\ring_d^{m}=\{f_1e_1+\dots+f_me_m:\deg(f_i)=d\text{ for all }1\le i\le m\}
\] 
is the additive group of \emph{homogeneous elements of degree $d$}. Note that
implicitly, we take $0\in\ring^{m}_d$ for all $d\ge 0$. With respect to this
grading, a monomial $x^{\alpha}e_i\in\Mon(\ring^{m})$ has degree
$\deg(x^{\alpha})$. We use $\Mon_d(\ring^{m})$ to denote the set of all
monomials of $\ring^{m}$ of degree $d$.

The monomial order $\succ$ on $\ring$ induces the \emph{term over position}
order on $\ring^{m}$ defined as follows: for $x^{\alpha}e_i,x^{\beta}e_j$
in $\Mon(\ring^{m})$, $x^{\alpha}e_i\modord x^{\beta}e_j$ if and only if either
$x^{\alpha}\succ x^{\beta}$ or $x^{\alpha}=x^{\beta}$ and $i>j$. We likewise
extend the leading term and leading monomial notation from $\ring$ to
$\ring^{m}$, so that for $\modf\in\ring^{m}$, $\LMmod(\modf)$ is the leading
monomial of $\modf$ and $\LTmod(\modf)$ is its leading term. Analogously, for a
subset $\modF\subseteq\ring^{m}$, we denote
$\LMmod(\modF)=\{\LMmod(\modf):\modf\in\modF\}$ and
$\LTmod(\modF)=\{\LTmod(\modf):\modf\in\modF\}$.

Given $\modf_1,\dots,\modf_s\in\ring^{m}$, we denote by
$\langle\modf_1,\dots,\modf_s\rangle$ the submodule of $\ring^{m}$ generated by
$\modf_1,\dots,\modf_s$.

If $\modf_1,\dots,\modf_s\in\ring^{m}$ are all homogeneous elements, then the
module $\modF=\langle\modf_1,\dots,\modf_s\rangle$ is itself graded. In this
setting, we denote by $\modF_d$ the additive group of homogeneous elements of
degree $d$ of $\modF$. 

\begin{definition}[Gr\"obner basis, {\cite[Chapter~5, Definition~2.6]{CoxLittleOShea2005}}]
	Let $\modf_1,\dots,\modf_s$ in $\ring^{m}$ and let $\modF$ be the
	submodule of $\ring^{m}$ generated by $\modf_1,\dots,\modf_s$. A finite
	set $G\subseteq\modF$ is called a \emph{$\modord$-Gr\"obner basis} of
	$\modF$ if $\langle\LMmod(G)\rangle=\LMmod(\modF)$.

	Suppose $\modf_1,\dots,\modf_s$ are homogeneous elements of
	$\ring^{m}$. Then $\modF$ is itself a graded module. In this case, for
	a given integer $D$, a finite set $G\subseteq\modF$ is called a
	\emph{($D,\modord$)-Gr\"obner basis} of $\modF$ if $G$ forms the
	elements of degree at most $D$ of a $\modord$-Gr\"obner basis of
	$\modF$.
\end{definition}
\begin{remark}
	If $G$ is a $\modord$-Gr\"obner basis of a submodule
	$\modF\subseteq\ring^{m}$, then $\langle G\rangle=\modF$ (see
	\cite[Chapter~5, Proposition~2.7(b)]{CoxLittleOShea2005}).
\end{remark}

\subsection{Macaulay matrices}

We recall here the basic construction of Macaulay matrices for a system of
homogeneous elements of $\ring^{m}$.

\begin{definition}
	Let $\mathbf{f} = (\modf_1,\dots,\modf_s)\subseteq\ring^{m}$ be homogeneous
	elements and for each $1\le i\le s$, let $d_i=\deg \modf_i$ . For a given
	integer $d\ge\min_{1\le i\le s}d_i$, the \textit{Macaulay matrix in degree $d$
	with respect to $\modord$}, $\macmat_{d, \succ}(\mathbf{f})$ is constructed
	as follows:
	\begin{itemize}
		\item its columns are indexed by $\Mon_d(\ring^{m})$, ordered in decreasing
		order with respect to $\modord$,
		\item and for each monomial $\tau\in\Mon_{d-d_i}(\ring)$, one inserts into
		the matrix a single row whose entry in the column indexed by the
		monomial $\sigma$ is the coefficient of $\sigma$ in $\tau \modf_i$.
	\end{itemize}
\end{definition}

The rows of $\macmat_{d, \succ}(\mathbf{f})$ are naturally interpreted as
module elements, and we will freely refer to them as such. Note that these rows
form a basis for $\ring^{m}_d$ as a finite-dimensional $\field$-vector space.

We denote by $\macmatred_{d,\succ}(\modf)$ the reduced row-echelon form of
$\macmat_{d,\succ}(\modf)$.

\begin{theorem}\label{thm:preliminaries:macmat-echelonization}
	Let $\modf=(\modf_1,\dots,\modf_s)\subseteq\ring^{m}$ be homogeneous
	elements. Then the rows of $\macmatred_{d, \succ}(\mathbf{f})$, form
	the elements of degree $d$ of a $\modord$-Gr\"obner basis for the
	module generated by $\modf$.
\end{theorem}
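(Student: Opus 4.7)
The plan is to combine two elementary facts. First, the rows of $\macmat_{d,\succ}(\modf)$ $\field$-linearly span the degree-$d$ piece $\modF_d$ of the submodule $\modF=\langle\modf_1,\dots,\modf_s\rangle$. Second, after row reduction to RREF with columns ordered in decreasing $\modord$-order, the $\modord$-leading monomials of the nonzero rows of $\macmatred_{d,\succ}(\modf)$ coincide with the set of $\modord$-leading monomials of elements of $\modF_d$. Assembling the reduced rows across all degrees then yields a $\modord$-Gr\"obner basis of $\modF$ whose degree-$d$ part is exactly what the theorem claims.

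For the first fact, every $\modf\in\modF_d$ admits a representation $\modf=\sum_{i=1}^{s} g_i\modf_i$ with $g_i\in\ring_{d-d_i}$; expanding each $g_i$ over the monomial $\field$-basis of $\ring_{d-d_i}$ rewrites $\modf$ as a $\field$-linear combination of products $\tau\modf_i$, which are precisely the rows of $\macmat_{d,\succ}(\modf)$. Conversely, every such row lies in $\modF_d$. Since RREF is obtained by elementary row operations, the nonzero rows of $\macmatred_{d,\succ}(\modf)$ in fact form a $\field$-basis of $\modF_d$.

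For the second fact, the columns of $\macmatred_{d,\succ}(\modf)$ are indexed by $\Mon_d(\ring^m)$ listed in decreasing $\modord$-order, so the pivot column of each nonzero row is its $\modord$-leading monomial. Denote the set of pivot monomials by $P_d$. The inclusion $P_d\subseteq\LMmod(\modF_d)$ is immediate since the rows lie in $\modF_d$. Conversely, any $\modf\in\modF_d$ expands uniquely as a $\field$-linear combination of the reduced rows, and the $\modord$-largest pivot occurring with nonzero coefficient equals $\LMmod(\modf)$, yielding $\LMmod(\modF_d)\subseteq P_d$.

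To conclude, set $G=\bigcup_{d\ge\min_{i} d_i}\{\text{nonzero rows of }\macmatred_{d,\succ}(\modf)\}$. Since $\modF$ is graded and monomials of distinct module-degrees are distinct elements of $\Mon(\ring^m)$, the $\modord$-leading monomial of any $\modf\in\modF$ equals that of one of its homogeneous components, hence $\LMmod(\modF)=\bigcup_d\LMmod(\modF_d)=\LMmod(G)$. Therefore $G$ is a $\modord$-Gr\"obner basis of $\modF$ whose degree-$d$ elements are, by construction, the nonzero rows of $\macmatred_{d,\succ}(\modf)$. The only point requiring real care is the pivot identification of step two; it reduces to the observation that the column ordering restricts to the total order $\modord$ on $\Mon_d(\ring^m)$, at which point RREF behaves exactly as in ordinary linear algebra.
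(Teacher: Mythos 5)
Your proposal is correct and takes essentially the same route as the paper: both arguments show that the nonzero rows of $\macmatred_{d,\succ}(\modf)$ form a $\field$-basis of $\modF_d$ with pairwise distinct leading monomials, deduce that every leading monomial of an element of $\modF_d$ is a pivot, and then pass from homogeneous components to general elements of $\modF$ using the grading. The only cosmetic difference is that you verify the ``largest pivot with nonzero coefficient'' step directly from the RREF structure, where the paper invokes the module analogue of \cite[Chapter~2, Lemma~8(ii)]{CoxLittleOShea2007}.
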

\begin{proof}
	Let $\modF$ be the module generated by $\modf$. Let $\modg\in\modF$.
	Since $\modf_1,\dots,\modf_s$ are homogeneous, the module $\modF$ is
	graded. Thus, there exist some $\modg_1,\dots,\modg_t\in\modF$ such
	that for each $1\le i\le t$, $\modg_i$ is homogeneous of degree $i$ and
	$\modg=\modg_1+\dots+\modg_t$.
	For each $1\le i\le t$, the rows of $\macmatred_{i,\succ}(\modf)$ form
	a basis for $\modF_i$. Therefore, for each $1\le i\le t$, there exist
	some $c_j\in\field$ such that 
	\[
		\modg_i=\sum_{j=1}^{\dim_\field(\modF_i)}c_j\modh_j
	\] 
	where the module elements $\modh_j$ are the (nonzero) rows of
	$\macmatred_{i,\succ}(\modf)$. Since $\macmatred_{d,\succ}(\modf)$ is
	in row echelon form, the leading monomials of the $\modh_j$ are
	pairwise distinct. Using a natural generalization of
	\cite[Chapter~2, Lemma~8(ii)]{CoxLittleOShea2007} to the setting of
	modules, there exists, for each $1\le i\le t$, some $1\le
	j\le\dim_\field(\modF_i)$ such that $\LMmod(\modg_i)=\LMmod(\modh_j)$.
	Now since the degrees of the $\modg_i$ are pairwise distinct, using
	once again a natural generalization of \cite[Chapter~2,
	Lemma~8(ii)]{CoxLittleOShea2007}, there exists some $1\le i\le t$ such
	that $\LMmod(\modg)=\LMmod(\modg_i)$.  
\end{proof}

Henceforth, we fix $\succ$ to be the \emph{graded reverse lexicographic}
(grevlex) order. When $\modf$ is clear from context, $\macmat_{d,
\succ}(\mathbf{f})$ and $\macmatred_{d, \succ}(\mathbf{f})$ will be denoted by
$\macmat_d$ and $\macmatred_{d}$ respectively.

\subsection{Hilbert series}

Since the complexity of computing Gr\"obner bases is governed by that of
echelonizing Macaulay matrices, understanding the sizes of these matrices is
key in our complexity analysis. Hilbert series of graded modules encode
precisely this information. Much general theory of Hilbert functions,
polynomials, and series can be found, for example, in \cite[Chapter~6,
Section~4]{CoxLittleOShea2005}.

\begin{definition}
	Given homogeneous elements $\modf_1,\dots,\modf_s\in\ring^{m}$, the
	\emph{Hilbert function} of the module $\modF$ generated by
	$(\modf_1,\dots,\modf_s)$ is defined by
	\[
		\HF_{\modF}(d)=\dim_\field(\modF_d)
	.\] 
	The \emph{Hilbert series} of $\modF$ is the generating series of the
	Hilbert function of $\modF$. That is,
	\[
		H_{\modF}(t)=\sum_{d\ge 0}\HF_{\modF}(d)t^d
	.\] 
\end{definition}

\begin{proposition}[{\cite[Corollary~4.1.8]{BrunsHerzog1998}}]
	Let $\modF\subseteq\ring^{m}$ be a graded module of Krull dimension
	zero. Then $H_{\ring^{m}/\modF}(t)$ is a polynomial.
\end{proposition}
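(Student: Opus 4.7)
The plan is to invoke the standard dictionary between Krull dimension zero and the Hilbert function being eventually zero for finitely generated graded modules over $\ring$. Since $\ring^{m}/\modF$ is a finitely generated graded $\ring$-module, I would first appeal to the graded version of the prime filtration theorem for Noetherian modules: there exists a finite filtration $0=N_0\subsetneq N_1\subsetneq\dots\subsetneq N_\ell=\ring^{m}/\modF$ by graded submodules whose successive quotients are of the form $N_j/N_{j-1}\cong(\ring/\frakp_j)(-a_j)$, where each $\frakp_j$ is a homogeneous prime ideal of $\ring$ and $a_j\in\ZZ$ is a degree shift. Since the Hilbert series is additive on short exact sequences of graded modules, this gives
\[
    H_{\ring^{m}/\modF}(t)=\sum_{j=1}^{\ell}t^{a_j}H_{\ring/\frakp_j}(t).
\]

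Next, I would use the zero-dimensionality hypothesis to pin down the primes $\frakp_j$. Each $\frakp_j$ is a support prime of $\ring^{m}/\modF$, so $\dim(\ring/\frakp_j)\le\dim(\ring^{m}/\modF)=0$. But the only homogeneous prime of $\ring=\field[x_1,\dots,x_k]$ of Krull dimension zero is the irrelevant ideal $\frakm=(x_1,\dots,x_k)$, so $\frakp_j=\frakm$ for every $j$. Hence $\ring/\frakp_j\cong\field$ sits entirely in degree zero, $H_{\ring/\frakp_j}(t)=1$, and the sum above reduces to $H_{\ring^{m}/\modF}(t)=\sum_{j=1}^{\ell}t^{a_j}$, which is a polynomial in $t$ as claimed.

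A slightly slicker alternative route, which avoids invoking the prime filtration, is to argue directly that the vanishing of $\dim(\ring^{m}/\modF)$ forces $\frakm^{N}\cdot(\ring^{m}/\modF)=0$ for some $N\in\ZZ_{\ge 0}$ (because $\frakm$ is the unique maximal homogeneous ideal containing the annihilator, and the module is Noetherian). Since $\frakm^{N}$ contains every monomial of degree at least $N$, this yields $(\ring^{m}/\modF)_d=0$ for all $d\ge N$, so $H_{\ring^{m}/\modF}(t)$ is automatically a polynomial of degree at most $N-1$.

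The only real technical point in either approach is the structural input from commutative algebra — the existence of the prime filtration in the graded setting, or equivalently the characterization of the support and annihilator of a zero-dimensional graded module. Both are classical, so I expect no serious obstacle; the cleanest presentation is probably the second one, which makes the polynomiality conclusion immediate from a simple annihilator argument and matches the spirit of the reference to \cite{BrunsHerzog1998}.
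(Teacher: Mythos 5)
The paper gives no proof of this proposition at all --- it is quoted verbatim from \cite[Corollary~4.1.8]{BrunsHerzog1998} --- so there is nothing to compare your argument against line by line; what matters is whether your argument is sound, and it is. Both of your routes are correct and standard: the prime-filtration argument reduces to observing that the only homogeneous prime of $\field[x_1,\dots,x_k]$ with zero-dimensional quotient is the irrelevant ideal $\frakm=(x_1,\dots,x_k)$ (any proper homogeneous ideal lies in $\frakm$, and a zero-dimensional homogeneous prime is maximal), and the annihilator argument correctly concludes $(\ring^{m}/\modF)_d=0$ for $d\ge N$ because $\ring^{m}/\modF$ is generated in degree $0$ by the images of $e_1,\dots,e_m$, so its degree-$d$ part is contained in $\frakm^{d}\cdot(\ring^{m}/\modF)$. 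The one point worth flagging is that you silently corrected the hypothesis: as literally stated, the proposition assumes $\modF$ has Krull dimension zero, whereas your proof (rightly) uses that $\ring^{m}/\modF$ has dimension zero --- the conclusion is false under the literal reading (take $\modF=0$), and the intended meaning, consistent with the paper's usage of ``zero-dimensional ideal,'' is the one you adopted. Your second, annihilator-based argument is the cleaner of the two and is the one I would keep.
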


\subsection{Free resolutions}
Free resolutions are a fundamental construction in commutative algebra. Many
general facts about free resolutions can be found in \cite[III]{Eisenbud1995}
and \cite[Chapter~6]{CoxLittleOShea2007}. Again, we recall below only what we
need for our purposes, and follow closely the exposition given in
\cite[Section~2]{gopalakrishnan2024optimized}

Let $\modF$ be a finitely generated $\ring$-module. An exact sequence
\[
	\cdots\xrightarrow{\partial_{j+1}}\freeresmod_j\xrightarrow{\partial_j}\cdots\xrightarrow{\partial_2}\freeresmod_1\xrightarrow{\partial_1}\freeresmod_0\xrightarrow{\epsilon}\modF\to 0
\] 
is a \emph{left resolution} of $\modF$. The maps $\partial_i$ are
\emph{boundary homomorphisms}, and the map $\epsilon$ is an \emph{augmentation
homomorphism}. If for each $i$, the module $\freeresmod_i$ is free, then the
resolution is a \emph{free resolution}. For the sake of brevity, we will often
refer to a resolution as above simply by
$(\freeresmod_\bullet\xrightarrow{\epsilon}\modF, \partial_\bullet)$. We call
$\sup\{i\in\ZZ:\freeresmod_i\ne 0\}$ the \emph{length} of the resolution
$(\freeresmod_\bullet\xrightarrow{\epsilon}\modF, \partial_\bullet)$. Note that
the length of $(\freeresmod_\bullet\xrightarrow{\epsilon}\modF,
\partial_\bullet)$ could be infinity. Free resolutions of finite length are
\emph{finite free resolutions}.

\begin{theorem}[Hilbert's syzygy theorem, {\cite[Corollary~19.7]{Eisenbud1995}}]
	Let $\modF$ be a finitely generated $\ring$-module. There exists a free
	resolution $(\freeresmod_\bullet\xrightarrow{\epsilon}\modF,
	\partial_\bullet)$ of length at most $k$.
\end{theorem}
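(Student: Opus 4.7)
The plan is to give a constructive proof via Schreyer's theorem, iteratively constructing a free resolution of $\modF$ and controlling how the leading module monomials evolve under successive syzygy operations so that after $k$ steps the syzygy module becomes trivial.

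First I would use the finite generation of $\modF$ to exhibit a surjection $\epsilon\colon\freeresmod_0:=\ring^{s_0}\twoheadrightarrow\modF$, reducing the problem to resolving $\ker\epsilon\subseteq\ring^{s_0}$, which is itself finitely generated by Hilbert's basis theorem. Proceeding inductively, at each stage $j\ge 1$ I would pick a minimal $\modord$-Gr\"obner basis $\modg_1^{(j)},\ldots,\modg_{s_j}^{(j)}$ of the $(j-1)$-th syzygy module with respect to a carefully chosen monomial order on $\ring^{s_{j-1}}$, and set $\freeresmod_j:=\ring^{s_j}$ with boundary $\partial_j(e_i):=\modg_i^{(j)}$. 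The engine of the argument is Schreyer's theorem, which asserts that the normalized $S$-pair syzygies
\[
\sigma_{k\ell}^{(j)} = \frac{\lcm\bigl(\LMmod(\modg_k^{(j)}),\LMmod(\modg_\ell^{(j)})\bigr)}{\LTmod(\modg_k^{(j)})}\,e_k \;-\; \frac{\lcm\bigl(\LMmod(\modg_k^{(j)}),\LMmod(\modg_\ell^{(j)})\bigr)}{\LTmod(\modg_\ell^{(j)})}\,e_\ell
\]
form a Gr\"obner basis of the next syzygy module with respect to the Schreyer order induced on $\ring^{s_j}$ from the order on $\ring^{s_{j-1}}$.

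Combining Schreyer's theorem with a generic linear change of coordinates in $x_1,\ldots,x_k$ (putting $\modF$ in sufficiently generic position), one shows via grevlex that the leading monomials of $\modg_i^{(j)}$ involve strictly fewer of the variables $x_1,\ldots,x_k$ as $j$ grows, losing one variable at each syzygy step because the lcm-over-leading-term quotient appearing in each $\sigma_{k\ell}^{(j)}$ forces cancellation of the smallest remaining variable. After at most $k$ iterations, the leading monomials become scalar multiples of standard basis vectors, forcing the $(k+1)$-th syzygy module to vanish and yielding a resolution of length at most $k$.

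The hard part will be the bookkeeping of the monomial orders and proving the variable-drop invariant cleanly: the order on each $\ring^{s_j}$ is not fixed but evolves as the Schreyer order induced from the previous stage, and the base case requires the generic coordinate change to guarantee that the leading monomials of the very first Gr\"obner basis already omit one of the variables. A slicker but less self-contained route would be to invoke the Auslander--Buchsbaum formula $\mathrm{pd}(\modF)+\operatorname{depth}(\modF)=\operatorname{depth}(\ring)=k$, or equivalently to compute $\operatorname{Tor}_i^{\ring}(\field,\modF)$ via the Koszul complex of $(x_1,\ldots,x_k)$, which vanishes for $i>k$ and bounds the length of a minimal graded free resolution directly; but the Schreyer route integrates better with the Gr\"obner basis machinery developed earlier in the paper.
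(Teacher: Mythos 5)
The paper offers no proof of this statement: it is quoted directly from \cite[Corollary~19.7]{Eisenbud1995}, where it is obtained by homological means (the Koszul complex, minimal resolutions, and the Auslander--Buchsbaum formula) --- essentially the ``slicker'' fallback you mention in your last sentence. Your primary route via Schreyer's theorem is the other classical proof (it is Eisenbud's own Corollary~15.11), and its engine is sound: the Schreyer syzygies of a suitably sorted Gr\"obner basis do form a Gr\"obner basis of the next syzygy module, with leading terms omitting one additional variable at each stage. However, as written your argument has a genuine gap in the termination count, and the device you propose to close it cannot work.

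Concretely: with no hypothesis on the first Gr\"obner basis $G_1$ of the kernel of a surjection $\freeresmod_0\to\modF$, the variable-drop lemma only gives that the leading terms of $G_{j+1}$ omit $x_1,\dots,x_j$, so the leading terms become pure basis vectors only at stage $k+1$, yielding a resolution of length $k+1$ --- one too long. You propose to recover the missing step by a generic linear change of coordinates forcing the leading terms of $G_1$ to omit a variable already. This is impossible in general: for $\modF=\field$ the kernel is $\langle x_1,\dots,x_k\rangle$, whose initial module contains every variable under every term order and every choice of coordinates, since the ideal is invariant under linear changes of variables. The correct repair --- and the one the classical proof uses --- needs no genericity at all: at stage $k$ the leading terms of a \emph{minimal} Gr\"obner basis $G_k$ involve only the single remaining variable $x_k$, and since $x_k^{a}e_i$ divides $x_k^{b}e_i$ whenever $a\le b$, minimality forces these leading terms to sit over pairwise distinct basis vectors; hence there are no S-pairs, the syzygy module of $G_k$ is zero, and the resolution terminates at length $k$. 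Two smaller points: the variable-drop lemma requires only an appropriate sorting of the elements of each $G_j$ (lexicographic on the monomial parts of leading terms sharing a basis vector), not generic position; and your displayed elements $\sigma_{k\ell}^{(j)}$ must carry the lower-order correction terms coming from the standard representation of the reduced S-pair, since without them they are not syzygies. If you wish to avoid this bookkeeping entirely, the Tor/Koszul argument you sketch at the end is complete for graded modules --- the only case the paper needs --- and matches the cited reference.
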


When $\modF$ is a graded module over $\ring$, it possesses a free resolution
$(\freeresmod_\bullet\xrightarrow{\epsilon}\modF, \partial_\bullet)$ where each
free module $\freeresmod_i$ is graded in such a way that the boundary maps
$\partial_i$ and the augmentation map $\epsilon$ are graded $\ring$-module
homomorphisms. In this setting, those free resolutions
$(\freeresmod_\bullet\xrightarrow{\epsilon}\modF, \partial_\bullet)$ such that
the ranks of each of the $\freeresmod_i$ are minimal are \emph{minimal free
resolutions}.

Let $(\freeresmod_\bullet\xrightarrow{\epsilon}\modF,\partial_\bullet)$ be a
free resolution of $\modF$. Upon fixing a set of generators
$(\modf_1^{(0)},\dots,\modf_{s_0}^{(0)})$ for $\modF$ and sets of generators
$(\modf_1^{(i)},\dots,\modf_{s_i}^{(i)})$ for each $\im\partial_i$, we can
define, for each $i$, the \emph{$i$-th syzygy module} of
$(\modf_1,\dots,\modf_s)$, as follows:
\[
	\Syz_i(\modF)=\{(g_1,\dots,g_{s_i})\in\ring^{s_i}:g_1\modf_1^{(i)}+\dots+g_{s_i}\modf^{(i)}_{s_i}=0\}
.\] 

\begin{remark}
	The way we define syzygy modules here is quite ad-hoc. In particular,
	as defined here, the syzygy modules of a given module $\modF$ depend on
	the choices of a free resolution of $\modF$ and generating sets for the
	images of each of the boundary homomorphisms in the chosen free
	resolution. In our setting this suffices, since we consider only a
	single free resolution\,---\,the Gulliksen-Neg{\aa}rd complex\,---\,and take
	explicit generating sets for the images of the boundary homomorphisms
	from \cite[Theorem~9]{GoNeSa23} and \cite[Proposition~19]{GoNeSa23}.
\end{remark}

The connection between free resolutions and Hilbert series is elucidated in the
following corollary.

\begin{corollary}{{\cite[Theorem~4.4]{CoxLittleOShea2005}}}\label{cor:prelim:hilb-free-res}
	Let $\modF$ be a finitely generated graded $\ring$-module, and let
	$(\freeresmod_\bullet\xrightarrow{\epsilon}\modF, \partial_\bullet)$ be
	a finite graded free resolution of $\modF$ of length $\ell$. For any
	$1\le i\le\ell$, let $s_i=\rank(\freeresmod_i)$ and write
	$\freeresmod_i=\bigoplus_{j=1}^{s_i}\ring\left(-d_i^{(j)}\right)$. Then
	\[
		\HF_\modF(d)=\sum_{i=0}^{\ell}(-1)^{i}\left(\sum_{j=1}^{s_i}\binom{k+d-d_i^{(j)}-1}{k-1}\right)
	.\] 
\end{corollary}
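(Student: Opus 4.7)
The plan is to exploit the additivity of the Hilbert function on short exact sequences of graded $\ring$-modules, and then reduce to the classical Hilbert function of a shifted free module. The key input is the standard lemma that for any short exact sequence $0 \to A \to B \to C \to 0$ of finitely generated graded $\ring$-modules with degree-preserving maps, taking the degree-$d$ graded pieces yields a short exact sequence of finite-dimensional $\field$-vector spaces, so that $\HF_B(d) = \HF_A(d) + \HF_C(d)$. This is the only nontrivial input; everything that follows is essentially bookkeeping.

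First I would break the finite exact sequence at the images of the boundary maps. Setting $K_i := \im(\partial_i) \subseteq \freeresmod_{i-1}$ for $1 \le i \le \ell$, exactness gives the short exact sequences $0 \to K_1 \to \freeresmod_0 \to \modF \to 0$, then $0 \to K_{i+1} \to \freeresmod_i \to K_i \to 0$ for $1 \le i \le \ell - 1$, together with $\freeresmod_\ell \cong K_\ell$ (since $\partial_{\ell+1} = 0$, so $K_{\ell+1} = 0$). Applying additivity to each of these sequences and telescoping the resulting alternating sum (formally, by induction on $\ell$) yields
\[
    \HF_\modF(d) = \sum_{i=0}^{\ell} (-1)^i \HF_{\freeresmod_i}(d).
\]

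Finally, I would compute $\HF_{\freeresmod_i}(d)$ using the decomposition $\freeresmod_i = \bigoplus_{j=1}^{s_i} \ring(-d_i^{(j)})$. Additivity of the Hilbert function over direct sums reduces this to the standard computation $\HF_{\ring(-a)}(d) = \dim_\field(\ring_{d-a}) = \binom{k+d-a-1}{k-1}$, with the usual convention that the binomial vanishes when $d < a$. Substituting into the telescoped identity yields the claimed formula. There is no substantive obstacle here; the most delicate point is simply bookkeeping\,---\,ensuring that the sign and indexing conventions match between the telescoping sum and the statement, and that the degree-shift conventions in $\ring(-d_i^{(j)})$ are tracked consistently.
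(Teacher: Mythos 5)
Your proof is correct and is essentially the standard argument behind the cited result: the paper itself offers no proof (it quotes \cite[Theorem~4.4]{CoxLittleOShea2005}), and the telescoping of short exact sequences $0\to K_{i+1}\to\freeresmod_i\to K_i\to 0$ combined with $\HF_{\ring(-a)}(d)=\binom{k+d-a-1}{k-1}$ is exactly how that theorem is established. The only point to keep straight is the binomial convention for $d<d_i^{(j)}$, which you flag appropriately.
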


\section{Genericity}\label{sec:genericity}

We begin by fixing a notion of genericity for determinantal ideals on which we
will rely to give subsequent results regarding the structure of Gr\"obner bases
of determinantal ideals and the complexity of computing them.

If $M$ is an $n\times n$ matrix over any ring $A$ and $1\le r<n$, we will
denote by $\detideal$ the ideal generated by the $(r+1)$-minors of $M$. We use
the notation of \cite[Section~2]{FaugereSafeySpaenlehauer2013} to formalize
various notions of genericity.

For some integer $d\in\ZZpos$, we call a polynomial of the form
\[
	f=\sum_{\tau\in\degdmon}\fraka^{(\tau)}\tau\in\genericring{\fraka}
\] 
a \emph{generic homogeneous polynomial of degree $d$}.

In what follows, we use $\genmat$ to denote the $n\times n$ matrix of generic
homogeneous polynomials of degree $d$, whose $i,j$ entry is
\[
	 f_{ij}=\sum_{\tau\in\degdmon}\fraka_{ij}^{(\tau)}\tau\in\genericring{\fraka_{ij}}
.\] 
For a point $a=\left(a_{ij}^{(\tau)}\right)\in\degdaffspace$ we will denote by
$\phi_a$ the specialization map
\begin{align*}
	\phi_a:\genericring{\fraka_{ij}}&\to\ring\\
	\fraka_{ij}^{(\tau)}&\mapsto a_{ij}^{(\tau)}
.\end{align*}

By abuse of notation, we will also use $\phi_a\left(\genmat\right)$ to denote the matrix
(over  $\ring$) whose entries are simply the images of the entries of $\genmat$
under $\phi_a$.

A map
\[
	\property:\{\text{Ideals of }\ring\}\to\{\texttt{true},\texttt{false}\}
\] 
is called a \textit{property}.

\begin{definition}
	A property $\property$ is  \textit{$(k,r,n,d)$-generic} if there exists
	a nonempty Zariski open subset $U\subseteq\degdaffspace$ such that for
	all $a\in U$,
	\[
		\property\left(\detidealgen{\phi_a\left(\genmat\right)}\right)=\texttt{true}
	.\] 
\end{definition}

The following proposition is a fundamental result on determinantal ideals.

\begin{proposition}\label{prop:genericity:det-ideal-is-cm}
	Let $\CM$ be the property given by
	 \[
		 \CM(I)=
		 \begin{cases}
			 \texttt{true}\quad &\text{if } I\text{ is Cohen-Macaulay}\\
			 \texttt{false} \quad &\text{otherwise}
		 \end{cases}
	\] 
	Then for any $n,d\in\ZZpos$, $r<n$, $\CM$ is $((n-r)^2,r,n,d)$-generic.
\end{proposition}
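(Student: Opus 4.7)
The plan is to reduce Cohen--Macaulayness to a dimension count. Since $k = (n-r)^2$, if I can exhibit a nonempty Zariski open subset $U \subseteq \degdaffspace$ on which the ideal $\detidealgen{\phi_a(\genmat)}$ has codimension $(n-r)^2 = k$ in $\ring$, we are done: the quotient $\ring/\detidealgen{\phi_a(\genmat)}$ is then a Noetherian graded $\field$-algebra of Krull dimension zero, hence Artinian, and every Artinian ring is Cohen--Macaulay. Equivalently, it suffices to show that the projective vanishing locus $V_+(\detidealgen{\phi_a(\genmat)}) \subseteq \mathbb{P}^{k-1}_{\algclosure}$ is empty for $a \in U$, which by the homogeneous Nullstellensatz forces $\sqrt{\detidealgen{\phi_a(\genmat)}} \supseteq (x_1,\dots,x_k)$.

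To carry this out, I would form the incidence variety
\[
  \mathcal{X} := \left\{ (a,x) \in \degdaffspace \times \mathbb{P}^{k-1}_{\algclosure} : \rank\bigl(\phi_a(\genmat)(x)\bigr) \le r \right\}
\]
and compute its dimension via the projection to $\mathbb{P}^{k-1}_{\algclosure}$. For any nonzero $x$, some coordinate $x_i$ is nonzero, so the monomial $x_i^d \in \degdmon$ satisfies $x_i^d(x) \neq 0$; consequently each entry $(i,j)$ of $\phi_a(\genmat)(x)$ is a nonzero linear form in its own disjoint family of parameters $\{\fraka_{ij}^{(\tau)}\}_{\tau \in \degdmon}$, so the evaluation map $a \mapsto \phi_a(\genmat)(x)$ from $\degdaffspace$ onto the space $M_n(\algclosure)$ of $n\times n$ matrices is surjective. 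Since the locus of matrices of rank $\le r$ has codimension $(n-r)^2$ in $M_n(\algclosure)$, every fiber of $\mathcal{X} \to \mathbb{P}^{k-1}_{\algclosure}$ has codimension $(n-r)^2$ in $\degdaffspace$, yielding $\dim \mathcal{X} = (k-1) + \dim \degdaffspace - (n-r)^2 = \dim \degdaffspace - 1$.

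Projecting $\mathcal{X}$ onto $\degdaffspace$ then gives a constructible image of dimension strictly less than $\dim \degdaffspace$, whose Zariski closure is a proper closed subset $Z \subsetneq \degdaffspace$; its complement $U := \degdaffspace \setminus Z$ is a nonempty Zariski open set on which no $x \in \mathbb{P}^{k-1}_{\algclosure}$ satisfies $\rank(\phi_a(\genmat)(x)) \le r$, completing the reduction of the first paragraph. The main technical point to nail down is the surjectivity of $a \mapsto \phi_a(\genmat)(x)$ at \emph{every} nonzero $x$, not merely a generic one; this is what guarantees that all fibers of $\mathcal{X} \to \mathbb{P}^{k-1}_{\algclosure}$ have the expected codimension uniformly, so that the total dimension formula applies without having to excise an exceptional locus. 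Everything else reduces to standard fiber-dimension and projection-of-constructible-set bookkeeping.
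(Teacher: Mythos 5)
Your argument is correct, but it takes a different route from the paper. The paper's proof is a two-line citation: it invokes \cite[Theorem~10]{FaugereSafeySpaenlehauer2013} for the fact that $\detidealgen{\phi_a(\genmat)}$ generically attains the maximal codimension $(n-r)^2$, and then the classical result from \cite{BrunsVetter1988} that a determinantal ideal of maximal codimension is Cohen--Macaulay. You instead (i) re-prove the generic codimension statement from scratch via the incidence variety $\mathcal{X}\subseteq\degdaffspace\times\PP^{k-1}_{\algclosure}$ --- which is essentially the argument underlying the cited theorem of Faug\`ere--Safey El Din--Spaenlehauer, and your key technical point (surjectivity of $a\mapsto\phi_a(\genmat)(x)$ at \emph{every} nonzero $x$, using the disjointness of the coefficient families $\fraka_{ij}^{(\tau)}$) is exactly what makes the uniform fiber-dimension count go through --- and (ii) deduce Cohen--Macaulayness from the elementary chain ``codimension $k$ in $k$ variables $\Rightarrow$ Artinian $\Rightarrow$ Cohen--Macaulay,'' which sidesteps the deeper Hochster--Eagon-type perfection theorem entirely. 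The trade-off: your proof is self-contained and uses only elementary commutative algebra, but the Artinian shortcut works only because of the specific numerology $k=(n-r)^2$ fixed in the statement; the paper's route via perfection of determinantal ideals of expected grade would survive unchanged if one took $k>(n-r)^2$, where the quotient is positive-dimensional and Cohen--Macaulayness is no longer automatic.
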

\begin{proof}
	By \cite[Theorem~2.5]{BrunsVetter1988}, the depth of
  $\detidealgen{\phi_a(\genmat)}$ is at most $(n-r)^2$. By
  \cite[Theorem~10]{FaugereSafeySpaenlehauer2013}, there exists a Zariski open
  subset $U\subseteq\degdaffspace$ such that for all $a\in U$,
  $\detidealgen{\phi_a\left(\genmat\right)}$ has codimension $(n-r)^2$, and is therefore 
  Cohen-Macaulay.
\end{proof}

\section{The Hilbert series of determinantal ideals}

One of the key inputs into our complexity analysis is the Hilbert series of
determinantal ideals. Here, we give explicit formul{\ae} for the Hilbert
functions of determinantal ideals in the case $r=n-2$, under certain genericity
assumptions. We accomplish this by analyzing a complex associated to these
ideals called the Gulliksen-Neg{\aa}rd complex, which under suitable genericity
assumptions turns out to be a free resolution. This allows us to obtain the
Hilbert series' we require by computing the ranks of the component modules.

\subsection{The complex of Gulliksen and Neg{\aa}rd}\label{subsec:gn}

We begin by recalling the construction of the complex of Gulliksen and
Neg{\aa}rd. This complex was originally given in \cite{GulliksenNegard1972}. A
detailed exposition of this complex can be found in \cite[Chapter~2,
Section~D]{BrunsVetter1988}. We reproduce here only what is necessary to obtain
the Hilbert series' we require.

In this section, we fix $n\in\ZZ$, $n\ge 3$. Let $\matmod$ be the set of
$n\times n$ matrices over $\ring$. The set $\matmod$ carries a natural
$\ring$-module structure, under which it is free of rank $n^2$. We will denote
by $\idmat$ the identity matrix in $\matmod$.

Consider the zero sequence
\[
	\ring\xrightarrow{\iota}\matmod\oplus\matmod\xrightarrow{\pi}\ring
\] 
where $\iota(x)=(x\idmat, x\idmat)$ and  $\pi(U,V)=\trace(U-V)$. It is immediate
that $\im\iota\subset\ker\pi$.

\begin{proposition}\label{prop:hilb:gn-E2-rank}
	The quotient module $\mathcal{E}_1=\ker\iota/\im\pi$ is free of rank
	$2n^2-2$.
\end{proposition}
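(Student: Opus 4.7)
The plan is to show that $\im\iota$ is a direct summand of $\ker\pi$ with a free complement of rank $2n^2 - 2$, so that $\mathcal{E}_1 = \ker\pi/\im\iota$ (the sensibly-interpreted quotient, given $\im\iota\subset\ker\pi$) is identified with this complement. First I observe that $\iota$ is injective, since $x\idmat = 0$ forces $x = 0$; hence $\im\iota \cong \ring$ is free of rank $1$. To split the inclusion $\im\iota \hookrightarrow \ker\pi$, I define the retraction $\sigma\colon \ker\pi \to \ring$ by $\sigma(U, V) = u_{11}$, the $(1,1)$-entry of $U$. One checks $\sigma(\iota(x)) = x$, so $\sigma$ splits $\iota$ and yields the internal direct sum $\ker\pi = \im\iota \oplus \ker\sigma$. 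Therefore $\mathcal{E}_1 \cong \ker\sigma$, and the task reduces to showing that $\ker\sigma$ is free of rank $2n^2 - 2$.

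The main step is to exhibit an explicit $\ring$-basis of $\ker\sigma$. Let $E_{ij}\in \matmod$ denote the standard matrix unit. I would take as basis the $2(n^2 - 1)$ elements consisting of $(E_{ij}, 0)$ and $(0, E_{ij})$ for each $i \ne j$ (yielding $2n(n-1)$ elements), together with $(E_{ii}, E_{11})$ and $(0, E_{ii} - E_{11})$ for each $i = 2, \ldots, n$ (yielding $2(n-1)$ more). A direct inspection confirms each of these pairs lies in $\ker\sigma$: each has $u_{11} = 0$ and satisfies $\trace U = \trace V$.

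Linear independence follows by projecting a putative dependence relation onto the two matrix components in turn: the first component forces the coefficients of $(E_{ij}, 0)$ for $i \ne j$ and of $(E_{ii}, E_{11})$ to vanish, after which the second component forces the coefficients of $(0, E_{ij})$ and of $(0, E_{ii} - E_{11})$ to vanish by comparing the $E_{ii}$ and $E_{11}$ entries. For spanning, given $(U, V) \in \ker\sigma$, I decompose $U = \sum_{i \ne j} u_{ij}E_{ij} + \sum_{i \ge 2} u_{ii}E_{ii}$ (using $u_{11} = 0$) and write
\[
(U, V) = \sum_{i \ne j} u_{ij}(E_{ij}, 0) + \sum_{i \ne j} v_{ij}(0, E_{ij}) + \sum_{i \ge 2} u_{ii}(E_{ii}, E_{11}) + \sum_{i \ge 2} v_{ii}(0, E_{ii} - E_{11}).
\]
The only nontrivial check is that the diagonal part of the second component reproduces $\sum_i v_{ii}E_{ii}$, which uses the trace identity $\sum_{i \ge 2} u_{ii} = \sum_i v_{ii}$ to absorb the $v_{11}E_{11}$ contribution. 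The main obstacle is precisely this span verification; otherwise the argument is a routine linear-algebra check that is characteristic-independent and valid over any commutative ground ring.
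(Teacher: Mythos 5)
Your proof is correct. It is a close cousin of the paper's argument rather than a genuinely different one, but the packaging differs enough to be worth noting. The paper quotes from Bruns--Vetter a list of $2n^2-1$ free generators of $\ker\pi$ (the same matrix units and combinations you use, but including the $i=1$ members $(E_{1,1},E_{1,1})$ and the zero element $(0,E_{1,1}-E_{1,1})$), observes that the generator $(\idmat,\idmat)$ of $\im\iota$ is a sum of listed generators each with unit coefficient, and concludes that the quotient is free of rank $2n^2-2$; the last step implicitly swaps $(\idmat,\idmat)$ into the basis in place of $(E_{1,1},E_{1,1})$. You instead split the inclusion $\im\iota\hookrightarrow\ker\pi$ by the retraction $(U,V)\mapsto u_{11}$ and exhibit an explicit basis of the complementary summand, which is exactly the paper's list with the two $i=1$ members deleted. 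What your route buys is self-containedness: injectivity of $\iota$ and freeness of the quotient come for free from the splitting, and you actually verify independence and spanning (the trace identity $\sum_{i\ge 2}u_{ii}=\sum_i v_{ii}$ being the only nontrivial point) instead of relying on the citation. Both arguments are valid over any commutative ring. You also correctly read the statement's $\ker\iota/\im\pi$ as a typo for $\ker\pi/\im\iota$, which is the intended module.
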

\begin{proof}
	We take as an $\ring$-module basis for $\matmod$ the elementary
	matrices: for each $1\le i,j\le n$, $E_{i,j}$ is the $n\times n$ matrix
	over $\ring$ with entry $1$ at $(i,j)$ and $0$ elsewhere.  

	Following \cite[Chapter~2, Section~D]{BrunsVetter1988}, $\ker\pi$ is
	generated by the following $2n^2-1$ elements of $\matmod\oplus\matmod$
	\begin{itemize}
		\item $(E_{i,j},0)$ for $1\le i,j,\le n$ and $i\ne j$
		\item $(0,E_{u,v})$ for $1\le u,v\le n$ and $u\ne v$
		\item $(E_{i,i},E_{1,1})$ for $1\le i\le n$
		\item $(0,E_{i,i}-E_{1,1})$ for $1\le i\le n$
	\end{itemize}
	It is clear that $\im\iota$ is generated by
	\[
		(\idmat,\idmat)=\sum_{i=1}^{n}(E_{i,i},E_{1,1})+(0,E_{i,i}-E_{1,1})
	.\] 
	Thus, $\ker\iota/\im\pi$ is free of rank $2n^2-2$.
\end{proof}

Equipped with $\mathcal{E}_1$, we can now give the full Gulliksen-Neg{\aa}rd
complex.

Let $M\in\matmod$ and let $M^{C}$ be the matrix of cofactors of $M$. Let
$\mathcal{E}_3=\ring$, $\mathcal{E}_2=\mathcal{E}_0=\matmod$, and
$\freeresmod_1=\ker\psi/\im\phi$. 

We define the boundary maps as follows: $\partial_3:x\mapsto x\cofac$,
$\partial_2:N\mapsto\overline{(MN,NM)}$,
$\partial_1:\overline{(N_1,N_2)}\mapsto N_1M-MN_2$. Finally, we define the
augmentation map by $\epsilon:N\mapsto\trace(M^{C}N)$.
\begin{proposition}[{\cite[Theorem~2.26]{BrunsVetter1988}}]\label{prop:hilb:gn-free-res}
	Let $M\in\matmod$. If $\detidealCorkOne$ is Cohen-Macaulay, then the sequence
	\[
		0\to\freeresmod_3\xrightarrow{\partial_3}\freeresmod_2\xrightarrow{\partial_2}\freeresmod_1\xrightarrow{\partial_1}\freeresmod_0\xrightarrow{\epsilon}\detidealCorkOne\to 0
	\]
	with modules, boundary maps, and augmentation map defined as above is a
	free resolution of $\detidealCorkOne$.
\end{proposition}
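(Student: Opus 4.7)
The plan is to verify that the given sequence is a complex and then use the Buchsbaum--Eisenbud acyclicity criterion to establish exactness under the Cohen--Macaulay assumption.

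First I would check that the composition of consecutive boundary maps vanishes. For $\partial_1\circ\partial_2$, direct computation yields $\partial_1(\overline{(MN,NM)}) = MNM-MNM = 0$. For $\partial_2\circ\partial_3$, applying $M M^C = M^C M = \det(M) I_n$ gives $\partial_2(xM^C) = \overline{(x\det(M)I_n, x\det(M)I_n)} = x\det(M)\,\overline{\iota(1)}$, which vanishes in $\freeresmod_1 = \ker\pi/\im\iota$. For $\epsilon\circ\partial_1$, cyclicity of trace together with the same cofactor identity yields $\epsilon(\partial_1(\overline{(N_1,N_2)})) = \det(M)\,\trace(N_1-N_2)$, which is zero because $(N_1,N_2)\in\ker\pi$. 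Finally, $\im\epsilon = \detidealCorkOne$ since $\epsilon(E_{ij}) = (M^C)_{ji}$ is, up to sign, the $(i,j)$ cofactor of $M$, and these collectively range over all $(n-1)$-minors.

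Next I would invoke the Buchsbaum--Eisenbud acyclicity criterion, which reduces exactness to two conditions: (a) the expected ranks of $\partial_1,\partial_2,\partial_3,\epsilon$ sum correctly against the ranks of the free modules, and (b) the grade of each ideal $I_{\rank\partial_i}(\partial_i)$ generated by size-$\rank(\partial_i)$ minors of the matrix of $\partial_i$ is at least $i$. Using \cref{prop:hilb:gn-E2-rank}, the expected ranks are $1,\,n^2-1,\,n^2-1,\,1$, matching the free module ranks $1,\,n^2,\,2n^2-2,\,n^2$; the rank of $\epsilon$ equals $1$ because its image is a nonzero ideal in a domain.

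The main obstacle is verifying the grade conditions in (b). The easiest is for $\partial_3$: its matrix is the column of the $n^2$ entries of $M^C$, so $I_1(\partial_3) = \detidealCorkOne$, which under the Cohen--Macaulay hypothesis has grade equal to its codimension, namely $4\ge 3$. For $\partial_2$ and $\partial_1$, one must identify the appropriate Fitting ideals explicitly and show they contain $\detidealCorkOne$ or closely related determinantal ideals of comparable codimension; this is the technically delicate part and rests on the interplay between the trace pairing and minor identities. A cleaner alternative, followed in \cite[Chapter~2, Section~D]{BrunsVetter1988}, is to first construct the complex over the universal matrix of indeterminates and verify acyclicity there (via an explicit double-complex or Koszul-type argument), then specialize: the CM hypothesis precisely guarantees that the specialized complex remains acyclic under flat base change.
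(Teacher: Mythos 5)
The paper offers no proof of this proposition at all: it is imported verbatim from \cite[Theorem~2.26]{BrunsVetter1988}, so there is no in-paper argument to compare against, and your sketch is essentially a reconstruction of the proof in that reference. The parts you carry out in detail are correct: the three compositions $\partial_1\circ\partial_2$, $\partial_2\circ\partial_3$, $\epsilon\circ\partial_1$ do vanish by exactly the cofactor identity $MM^C=M^CM=\det(M)I_n$ and cyclicity of trace (and you have correctly read $\freeresmod_1$ as $\ker\pi/\im\iota$ despite the paper's typos $\ker\iota/\im\pi$ and $\ker\psi/\im\phi$); the image of $\epsilon$ is generated by the entries of $M^C$, hence equals $\detidealCorkOne$; and the rank bookkeeping $1,\,n^2-1,\,n^2-1,\,1$ against $1,\,n^2,\,2n^2-2,\,n^2$ is right. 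Two cautions on the remaining steps. First, your grade conditions are indexed for the truncated complex $0\to\freeresmod_3\to\cdots\to\freeresmod_0$; if instead you run Buchsbaum--Eisenbud on the complex augmented by $\epsilon$, every requirement shifts by one, so $I_1(\partial_3)=\detidealCorkOne$ must have grade at least $4$, not $3$. This is harmless because the operative hypothesis (as in the cited theorem) is that $\detidealCorkOne$ attains the maximal grade $(n-(n-2))^2=4$, but you should either state the indexing consistently or work with the truncated complex and identify $\operatorname{coker}\partial_1$ with the ideal separately. Second, the specialization step is not flat base change: the map from the polynomial ring over the universal matrix to $\ring$ is not flat. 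The correct mechanism is the transfer of perfection for generically perfect modules (\cite[Chapter~3]{BrunsVetter1988}), or equivalently the observation that the complex is split exact after inverting any $(n-1)$-minor, so each $I(\partial_i)$ has radical containing $\detidealCorkOne$ and all grade conditions follow from $\operatorname{grade}\detidealCorkOne=4$. With that substitution, and granting the generic-case acyclicity you defer to the reference, your outline is the standard proof.
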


\subsection{An explicit Hilbert series}

Applying \cref{cor:prelim:hilb-free-res}, we extract from the
Gulliksen-Neg{\aa}rd complex the Hilbert series of $\detidealCorkOne$.

\begin{proposition}\label{prop:hilb:hilb}
	For any $n\ge 3$, $D\ge 1$, let 
	\begin{align*}
		H_{n,D}(t)=\sum_{d=D(n-1)}^{2Dn-3}\Bigg(n^2\binom{3+d-D(n-1)}{3}&-(2n^2-2)\binom{3+d-Dn}{3}\\
										&+n^2\binom{3+d-D(n+1)}{3}\Bigg)t^d
	.\end{align*}
	Then the property
	\[
		\HS(I)=
		\begin{cases}
			\texttt{true}\quad&\text{if }H_I(t)=H_{n,D}(t)\\
			\texttt{false}\quad&\text{otherwise}
		\end{cases}
	\] 
	is $(4,n-2,n,D)$-generic.
\end{proposition}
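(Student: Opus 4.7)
The plan is to apply \cref{cor:prelim:hilb-free-res} to the Gulliksen-Neg{\aa}rd complex, which \cref{prop:hilb:gn-free-res} identifies as a free resolution over the open subset produced by \cref{prop:genericity:det-ideal-is-cm}, and then match the resulting Hilbert function against $H_{n,D}(t)$.

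First I would invoke \cref{prop:genericity:det-ideal-is-cm} with $k=4$ and $r=n-2$ to obtain a nonempty Zariski open subset $U\subseteq\degdaffspace$ such that for every $a\in U$, $I:=\detidealCorkOne[\phi_a(\genmat[D])]$ is Cohen-Macaulay. Fixing such an $a$ and setting $M=\phi_a(\genmat[D])$, \cref{prop:hilb:gn-free-res} then yields a free resolution of $I$ by the Gulliksen-Neg{\aa}rd complex.

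The core step is identifying the graded shifts of its four terms. Since the entries of $M$ are homogeneous of degree $D$, each cofactor $(M^C)_{ij}$ is homogeneous of degree $D(n-1)$, so the augmentation $\epsilon(E_{ij})=(M^C)_{ji}$ forces $\freeresmod_0\cong\ring(-D(n-1))^{n^2}$. Requiring each boundary map to preserve degree and noting that multiplying a matrix by $M$ raises degree by $D$, I would then read off $\freeresmod_1\cong\ring(-Dn)^{2n^2-2}$ (using the rank computed in \cref{prop:hilb:gn-E2-rank}), $\freeresmod_2\cong\ring(-D(n+1))^{n^2}$, and $\freeresmod_3\cong\ring(-2Dn)$.

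Feeding these shifts into \cref{cor:prelim:hilb-free-res} with $k=4$ gives, under the convention $\binom{m}{3}=0$ for $m<3$,
\[
\HF_I(d)=n^2\binom{d+3-D(n-1)}{3}-(2n^2-2)\binom{d+3-Dn}{3}+n^2\binom{d+3-D(n+1)}{3}-\binom{d+3-2Dn}{3}.
\]
For $d<D(n-1)$ every binomial vanishes, and for $d\le 2Dn-3$ the final binomial also vanishes, so on the range $[D(n-1),2Dn-3]$ the formula reduces exactly to the coefficient of $t^d$ in $H_{n,D}(t)$, confirming $\HS(I)=\texttt{true}$ for every $a\in U$. The main obstacle is the shift bookkeeping in the third paragraph: one must check that the quotient presentation $\freeresmod_1=\ker\pi/\im\iota$ inherits a grading from $\matmod\oplus\matmod$ consistent with the proposed shift $Dn$, so that each of $\epsilon,\partial_1,\partial_2,\partial_3$ is a graded $\ring$-module homomorphism of degree zero.
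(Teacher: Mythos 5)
Your proposal follows essentially the same route as the paper: restrict to the Cohen--Macaulay locus from \cref{prop:genericity:det-ideal-is-cm}, invoke \cref{prop:hilb:gn-free-res} to get the Gulliksen--Neg{\aa}rd resolution, read off the graded shifts $D(n-1)$, $Dn$, $D(n+1)$, $2Dn$ of the four terms from the degrees of the entries of $M$ and of its cofactors, and substitute into \cref{cor:prelim:hilb-free-res}. Your explicit remark that the term $-\binom{3+d-2Dn}{3}$ vanishes on the stated degree range, and your flag about checking that the quotient $\ker\pi/\im\iota$ carries the grading making $\partial_1$ and $\partial_2$ degree-zero maps (which the paper handles via the explicit basis of \cref{prop:hilb:gn-E2-rank}), are both consistent with, and if anything slightly more careful than, the paper's own argument.
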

\begin{proof}
	By \cref{prop:genericity:det-ideal-is-cm}, there exists a Zariski open
	subset $U_\mathrm{CM}\subseteq\affspaceCorkOne$ such that for all $a\in
	U_\mathrm{CM}$, $\detidealCorkOne[\phi_a({\genmat[D]})]$ is
	Cohen-Macaulay. Subsequently, by \cref{prop:hilb:gn-free-res}, for any
	$a\in U_\mathrm{CM}$, the Gulliksen-Neg{\aa}rd complex is a free
	resolution of $\detidealCorkOne[\phi_a({\genmat[D]})]$. In order to
	apply \cref{cor:prelim:hilb-free-res} on the Gulliksen-Neg{\aa}rd
	complex, we must write, for each $0\le i\le 3$,
	$\freeresmod_i=\bigoplus_{j=1}^{\rank(\freeresmod_i)}\ring(-d_i^{(j)})$.

	First, $\rank(\freeresmod_0)=n^2$ and for each $1\le j\le n^2$, the
	image of $e_j$ under $\epsilon$ is an $(n-1)$-minor of
	$\phi_a({\genmat[D]})$. Such a minor is a polynomial of degree $D(n-1)$,
	so in order for $\epsilon$ to be graded, $e_i$ must be a member of the
	$D(n-1)$ graded piece of $\freeresmod_0$. This gives $d_0^{(j)}=D(n-1)$
	for all $1\le j\le n^2$.

	Next, by \cref{prop:hilb:gn-E2-rank}, $\rank(\freeresmod_1)=2n^2-2$.
	Consider the basis elements of $\freeresmod_1$ from
	\cref{prop:hilb:gn-E2-rank} of the form $\overline{(E_{i,j},0)}$ for
	$1\le i,j\le n$, $i\ne j$. These map, under $\partial_1$, to
	$E_{i,j}\phi_a({\genmat[D]})$. Each nonzero coefficient of
	$E_{i,j}\phi_a({\genmat[D]})$, written in terms of the standard basis
	elements of $\freeresmod_0=\matmod$, is an entry of
	$\phi_a({\genmat[D]})$, and is thus a polynomial of degree $D$. Thus, in
	order for the map $\partial_1:\freeresmod_1\to\freeresmod_0(-D(n-1))$ to
	be graded, $\overline{(E_{i,j},0)}$ must be a member of the $Dn$ graded
	piece of $\freeresmod_1$. This gives $d_1^{(1)}=Dn$. A similar
	computation for the other basis elements of $\freeresmod_1$, described
	in \cref{prop:hilb:gn-E2-rank} shows that, in fact, for all $1\le j\le
	2n^2-2$, $d_1^{(j)}=Dn$.

	By construction, $\rank(\freeresmod_2)=n^2$. For any basis element
	$E_{i,j}$ of $\freeresmod_2$, each nonzero coefficient of
	$\partial_2(E_{i,j})=\overline{(\phi_a({\genmat[D]})E_{i,j},E_{i,j}\phi_a({\genmat[D]}))}$,
	written in terms of the basis elements for $\freeresmod_1$ given in
	\cref{prop:hilb:gn-E2-rank} is again an entry of $\phi_a({\genmat[D]})$,
	and therefore a polynomial of degree $D$. In order for
	$\partial_2:\freeresmod_2\to\freeresmod_1(-Dn)$ to be graded, $E_{i,j}$
	must then be a member of the $D(n+1)$ graded piece of $\freeresmod_2$.
	This gives, for each $1\le j\le n^2$, $d_2^{(j)}=D(n+1)$.

	Finally, $\rank(\freeresmod_3)=1$, and
	$\partial_3(1)=\phi_a({\genmat[D]})^{C}$. When written in terms of the
	standard basis of $\freeresmod_2=\matmod$, the nonzero coefficients of
	$\phi_a({\genmat[D]})^{C}$ are precisely the cofactors of
	$\phi_a({\genmat[D]})$, which have degree $D(n-1)$, giving
	$d_3^{(1)}=2Dn$.

	Putting these quantities together and applying
	\cref{cor:prelim:hilb-free-res} gives precisely $H_{n,D}(t)$.
\end{proof}

We conclude this section with an auxiliary lemma regarding the structure of
the Hilbert series given in \cref{prop:hilb:hilb} which will be necessary in
order to perform the complexity analyses we give in later sections.

\begin{lemma}\label{lem:analysis:h-increasing}
	For all $n-1\le d<2n-3$, $h_{d+1}>h_d$. 
\end{lemma}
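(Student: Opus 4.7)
The plan is to prove this by a direct computation from the explicit Hilbert series formula of \cref{prop:hilb:hilb}. Specializing that formula with $D=1$ (which is the case that yields the index range $n-1 \le d \le 2n-3$ appearing in the lemma), the coefficient of $t^d$ in $H_{n,1}(t)$ is
\begin{equation*}
	h_d = n^2\binom{d-n+4}{3} - (2n^2-2)\binom{d-n+3}{3} + n^2\binom{d-n+2}{3}.
\end{equation*}
Since the range $n-1 \le d < 2n-3$ ensures both $h_d$ and $h_{d+1}$ lie within the support of $H_{n,1}$, I may compute $h_{d+1} - h_d$ termwise from this formula.

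The first step is to apply the Pascal-type identity $\binom{m+1}{3} - \binom{m}{3} = \binom{m}{2}$ to each of the three terms, reducing to a signed combination of binomial coefficients of upper index $2$:
\begin{equation*}
	h_{d+1} - h_d = n^2\binom{d-n+3}{2} - (2n^2-2)\binom{d-n+2}{2} + n^2\binom{d-n+1}{2}.
\end{equation*}
The second step is to recognize that $\binom{m+1}{2} - 2\binom{m}{2} + \binom{m-1}{2}$ is a second finite difference, and that two applications of Pascal's identity show it equals $1$. Regrouping the expression above as $n^2$ times this second difference plus a leftover contribution from the $+2$ in the middle coefficient, I obtain
\begin{equation*}
	h_{d+1} - h_d = n^2 + 2\binom{d-n+3}{2} = n^2 + (d-n+2)(d-n+3),
\end{equation*}
which is strictly positive for all $d \ge n-1$.

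The calculation is entirely routine; there is no real obstacle. The only minor subtlety is spotting the second-difference pattern so that the $n^2$-weighted contribution from the three binomial terms collapses to the constant $n^2$, leaving only the manifestly nonnegative leftover $2\binom{d-n+3}{2}$.
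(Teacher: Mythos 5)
Your overall strategy is sound and genuinely different from the paper's: you compute the exact first difference $h_{d+1}-h_d$ by discrete differencing of the binomial-coefficient form of the Hilbert function, whereas the paper substitutes $D=1$, writes $h_d$ as an explicit cubic in $d$, and argues via the second derivative of the corresponding real function that it is increasing on $[n-1,\infty)$. Your route has the advantage of producing a closed form for the increment, $h_{d+1}-h_d=n^2+(d-n+2)(d-n+3)$, which is indeed correct (for instance it gives $h_n-h_{n-1}=n^2+2$, consistent with $h_{n-1}=n^2$ and $h_n=2n^2+2$ as used elsewhere in the paper), and it avoids the continuous-calculus detour; the paper's route avoids binomial bookkeeping but, as written, leaves implicit the check that $f'(n-2)=n^2-\tfrac{1}{3}>0$ at the minimizer of $f'$.

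That said, your displayed intermediate formula has an off-by-one error: applying $\binom{m+1}{3}-\binom{m}{3}=\binom{m}{2}$ term by term to $h_{d+1}-h_d$ gives
\[
h_{d+1}-h_d=n^2\binom{d-n+4}{2}-(2n^2-2)\binom{d-n+3}{2}+n^2\binom{d-n+2}{2},
\]
with each upper index one larger than what you wrote. You then compensate by taking the leftover term to be $2\binom{d-n+3}{2}$, which is twice the middle binomial of the \emph{correct} difference rather than of the one you displayed (whose middle binomial is $\binom{d-n+2}{2}$ and would give the wrong value $n^2+(d-n+1)(d-n+2)$). The two slips cancel, so your final expression and the positivity conclusion are right, but as written the last display does not follow from the previous one. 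Correct the indices in the first difference and the argument is complete.
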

\begin{proof}
	Taking $D=1$ in \cref{prop:hilb:hilb} and simplifying the summand, we
	find that
	\[
		h_d=\frac{(2+d-n)(d^2+(-2n+4)d+4n^2-4n+3)}{3}
	.\] 
	Let $f(d)=h_d$ and view $f$ as a continuous function in one real
	variable. Then $f''(d)=2d-2n+4$ has its unique root at $d=n-2$. Thus,
	$f'(d)$ attains its minimum at $d=n-2$, and $f$ is therefore strictly
	increasing on the interval $[n-1,\infty)$.
\end{proof}

\section{Reverse lexicographic ideals}
\label{sec:revlex_ideals}

\emph{Reverse lexicographic ideals} have been studied for the special place
they hold amongst all ideals with a given Hilbert function (see e.g.\
\cite{Deery96}). After recalling their definition in
\cref{sec:revlex_ideals:grevlex_staircase}, we prove a general result about the
structure of the grevlex leading monomials in the case of a homogeneous reverse
lexicographic ideal (\cref{prop:revlex:new-gb-bound}). In
\cref{sec:revlex_ideals:determinantal}, we show that as long as a certain
Zariski open subset that we make explicit is nonempty, this result can be
applied to the determinantal ideals considered in this paper. In the following
section, \cref{sec:lefschetz}, we further investigate when the determinantal
ideals that we consider are generically reverse lexicographic, and relate this
property to the so-called $2$-\emph{strong Lefschetz property}.

Recall that we have fixed our monomial order $\succ$ to be the graded reverse
lexicographic order and suppressed the symbol $\succ$ from all relevant
notation.

\subsection{The grevlex staircase of a homogeneous reverse lexicographic ideal}
\label{sec:revlex_ideals:grevlex_staircase}

\begin{definition}
	Let $I\subseteq\ring$ be a nonzero ideal. We call $I$ \emph{reverse
	lexicographic} if for all $\tau\in\LM(I)$, 
	\[
		\{\sigma\in\Mon_{\deg\tau}(\ring):\sigma\succ\tau\}\subseteq\LM(I)
	.\] 
\end{definition}

We begin with a simple and helpful observation about the grevlex staircase of a
reverse lexicographic ideal.

\begin{lemma}\label{lem:revlex:collisions}
  Let $I\subseteq\ring$ be a reverse lexicographic ideal. Let $\tau\in\LM(I)$
  and $x_j$ be the smallest variable in $\tau$. For any variable \(y \in
  \{x_1,\ldots,x_k\}\) with $y\succ x_j$, the monomial \(\sigma =
  \frac{\tau}{x_j} y \in \Mon_{\deg\tau}(\ring)\) is such that
  $\sigma\succ\tau$ and $\sigma\in\LM(I)$.
\end{lemma}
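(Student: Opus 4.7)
The plan is essentially to unpack two definitions: that of the grevlex order and that of a reverse lexicographic ideal. The statement factors into two small claims, namely $\sigma\succ\tau$ and $\sigma\in\LM(I)$, and each follows directly once the hypotheses are translated carefully.

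First I would record what ``smallest variable in $\tau$'' means in the grevlex order $x_1\succ\cdots\succ x_k$: it says that $j$ is the \emph{largest} index with $x_j\mid\tau$. In particular $\tau/x_j$ is a genuine monomial of $\ring$, so $\sigma=(\tau/x_j)y$ is a monomial, and it has the same total degree as $\tau$ because we have divided and multiplied by variables of degree one. Thus $\sigma\in\Mon_{\deg\tau}(\ring)$.

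Next I would verify $\sigma\succ\tau$. Since $y\succ x_j$ we may write $y=x_i$ with $i<j$. The exponent vector of $\sigma$ differs from that of $\tau$ by $+1$ in coordinate $i$ and $-1$ in coordinate $j$, and agrees elsewhere. Because $i<j$, the rightmost (largest-index) nonzero coordinate of this difference is the $-1$ at position $j$. By the definition of grevlex, this is exactly the condition $\sigma\succ\tau$.

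Finally I would invoke the hypothesis that $I$ is reverse lexicographic: by definition, for $\tau\in\LM(I)$ the set $\{\rho\in\Mon_{\deg\tau}(\ring):\rho\succ\tau\}$ is contained in $\LM(I)$, and we have just placed $\sigma$ in this set, so $\sigma\in\LM(I)$. I do not expect any real obstacle here; the only care required is to respect the grevlex sign convention and to keep straight that the ``smallest variable'' in the order corresponds to the largest variable index, which is what makes the rightmost nonzero entry of the exponent difference land at position $j$.
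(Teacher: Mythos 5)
Your proof is correct, and it is exactly the unpacking of definitions that the paper intends -- indeed the paper states this lemma without proof, treating it as immediate. The one point that genuinely needs care, the grevlex sign convention (rightmost nonzero entry of the exponent difference being negative at position $j$ because $i<j$), is handled correctly in your argument.
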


We come now to the main result of this section, which provides a formula to
calculate the number of polynomials of degree $d+1$ in a grevlex Gr\"obner
basis of a homogeneous reverse lexicographic ideal whose leading monomials are
not divisible by any leading monomial of degree $d$.
\begin{proposition}\label{prop:revlex:new-gb-bound}
	Let $F\subseteq\pring$ be a sequence of homogeneous polynomials, all of
	degree $d_0$. Suppose $I=\langle F\rangle$ is a reverse lexicographic
	ideal. Let $\HF_{I}(d)=h_d$ be the Hilbert function of $I$ and let $D$
	be the largest degree of a polynomial appearing in the reduced grevlex
	Gr\"obner basis of $F$. For any $d_0\le d\le D$, let $G_d$ be the set of
	elements of degree at most $d$ in the reduced grevlex Gr\"obner basis for
	$F$ and let $\ell_d$ be the largest integer such that
	\[
		\binom{\ell_d+d-1}{\ell_d-1}<h_d
	.\]
	Then for any $d_0\le d\le D$,
	\begin{align*}
		\#(\LM(G_{d+1})\smallsetminus\LM(G_{d}))=h_{d+1}&+(\ell_d-k)h_d\\
								&+\sum_{j=1}^{\ell_d}\binom{j+d-2}{j-1}(j-1)\\
								&-\ell_d\binom{\ell_d+d-1}{\ell_d-1}
	\end{align*}
\end{proposition}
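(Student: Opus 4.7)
The plan is to reduce the count $\#(\LM(G_{d+1})\smallsetminus\LM(G_d))$ to a direct combinatorial calculation by first characterizing minimal generators of $\LM(I)$ in degree $d+1$, then exploiting the revlex structure of $\LM(I)_d$.

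First, I would establish the following characterization: for $\tau\in\LM(I)_{d+1}$ with $x_m$ the smallest variable dividing $\tau$ (equivalently, $m=\max\{j:x_j\mid\tau\}$), the monomial $\tau$ is a minimal generator of $\LM(I)$ if and only if $\tau/x_m\notin\LM(I)$. The nontrivial direction relies on \cref{lem:revlex:collisions}: if $\tau/x_i\in\LM(I)$ for some $i<m$ with $x_i\mid\tau$, then applying \cref{lem:revlex:collisions} to $\tau/x_i$ (whose smallest variable remains $x_m$, since $i<m$) with the choice $y=x_i$ would yield $\tau/x_m\in\LM(I)$, a contradiction.

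Next, armed with this characterization, I would count the non-minimal generators $T_{d+1}$ in $\LM(I)_{d+1}$ via the bijection $\tau\mapsto(m,\tau/x_m)$, which identifies them with pairs $(m,\sigma)$ such that $1\le m\le k$ and $\sigma$ is a degree-$d$ monomial in $x_1,\ldots,x_m$ belonging to $\LM(I)$. The $\binom{m+d-1}{m-1}$ degree-$d$ monomials in $x_1,\ldots,x_m$ are precisely the grevlex-largest degree-$d$ monomials of $\pring$, and since $I$ is reverse lexicographic, $\LM(I)_d$ is itself the set of the $h_d$ grevlex-largest degree-$d$ monomials; hence the number of degree-$d$ monomials in $x_1,\ldots,x_m$ belonging to $\LM(I)_d$ is $\min\!\bigl(h_d,\binom{m+d-1}{m-1}\bigr)$. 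Unwinding via the definition of $\ell_d$, this minimum equals $\binom{m+d-1}{m-1}$ precisely when $m\le\ell_d$, and $h_d$ otherwise, yielding
\[
T_{d+1}=\sum_{m=1}^{\ell_d}\binom{m+d-1}{m-1}+(k-\ell_d)h_d.
\]

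Since $\#(\LM(G_{d+1})\smallsetminus\LM(G_d))=h_{d+1}-T_{d+1}$, it remains to reconcile this with the formula of the proposition, which reduces to the combinatorial identity
\[
\ell_d\binom{\ell_d+d-1}{\ell_d-1}=\sum_{m=1}^{\ell_d}\binom{m+d-1}{m-1}+\sum_{j=1}^{\ell_d}(j-1)\binom{j+d-2}{j-1}.
\]
This follows via two applications of the hockey-stick identity (giving $\binom{\ell_d+d}{d+1}$ and $d\binom{\ell_d+d-1}{d+1}$ for the two sums on the right) together with the routine binomial relation $(d+1)\binom{\ell_d+d-1}{d+1}=(\ell_d-1)\binom{\ell_d+d-1}{d}$. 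The main conceptual obstacle will be the characterization of minimal generators in the first paragraph, where the reverse lexicographic hypothesis must be invoked in exactly the right way; the remainder is algebraic bookkeeping.
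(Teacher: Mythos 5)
Your proof is correct, and it reorganizes the count in a genuinely different (and arguably cleaner) way than the paper. The paper argues through the Macaulay matrices: it writes $\#(\LM(G_{d+1})\smallsetminus\LM(G_d))=h_{d+1}-kh_d+c_{d+1}$, where $c_{d+1}$ is the number of rows of $\macmat_{d+1}$ whose leading terms are altered during echelonization, and computes $c_{d+1}$ by showing via \cref{lem:revlex:collisions} that the row of $\macmatred_d$ with leading monomial $\tau$ whose smallest variable is $x_j$ accounts for exactly $j-1$ collisions; summing $j-1$ over the $h_d$ grevlex-largest degree-$d$ monomials yields the stated formula directly. You instead count the complementary quantity $T_{d+1}=kh_d-c_{d+1}$, the number of non-minimal generators of $\LM(I)$ in degree $d+1$, parametrized by the smallest variable $x_m$ of the degree-$(d+1)$ monomial rather than of the degree-$d$ one. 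Your characterization of minimality (non-minimal iff still in $\LM(I)$ after dividing by the \emph{smallest} variable) is exactly the right use of \cref{lem:revlex:collisions}, your bijection $\tau\mapsto(m,\tau/x_m)$ is injective because $m$ is recoverable as the smallest variable of $x_m\sigma$, and the fact that both $\LM(I)_d$ and $\Mon_d(\field[x_1,\dots,x_m])$ are initial segments of the grevlex order gives the $\min$ you claim. The closing identity
\[
\ell_d\binom{\ell_d+d-1}{\ell_d-1}=\sum_{m=1}^{\ell_d}\binom{m+d-1}{m-1}+\sum_{j=1}^{\ell_d}(j-1)\binom{j+d-2}{j-1}
\]
does hold: the two sums evaluate to $\binom{\ell_d+d}{d+1}$ and $d\binom{\ell_d+d-1}{d+1}$ as you say, and Pascal plus $(d+1)\binom{\ell_d+d-1}{d+1}=(\ell_d-1)\binom{\ell_d+d-1}{d}$ finish it. What your route buys is a statement purely about the monomial ideal $\LM(I)$, with no reference to an echelonization process; what the paper's route buys is that the intermediate quantities ($z_{d+1}$, $r_{d+1}$, $c_{d+1}$) are exactly the ones reused later in the complexity analysis of \cref{alg:modgb}.
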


\begin{proof}
	We begin by noting that $G_{d+1}$ can be obtained from $G_d$ by
	multiplying each of the $h_d$ nonzero rows of $\macmatred_d$ by each of
	the $k$ variables to build $\macmat_{d+1}$, echelonizing to obtain
  $\macmatred_{d+1}$, then discarding zero rows and rows that are redundant
  because their leading terms are already divisible by those in $\LM(G_d)$.
  Letting $z_{d+1}$ (resp.\ $r_d$) be the number of these zero (resp.\
  redundant) rows, we can write \begin{equation}
		\#(\LM(G_{d+1})\smallsetminus\LM(G_d))=kh_d-z_{d+1}-r_{d+1}\label{eq:prop:revlex:countlt:n}
	.\end{equation} 
	Note also that if the echelonization process alters the leading term of
	some row of $\macmat_{d+1}$ built in this way, then either that row
	reduces to zero, or its new leading term is no longer divisible by any
  monomial in $\LM(G_d)$. Thus, denoting by $c_{d+1}$ the number of rows of
  $\macmat_{d+1}$ which are to be reduced during the echelonization process,
  one has
	\begin{gather}
		c_{d+1}=z_{d+1}+\#(\LM(G_{d+1})\smallsetminus\LM(G_d))\label{eq:prop:revlex:countlt:c}\\
		\text{and}\;\;\; r_{d+1}=kh_d-c_{d+1} \label{eq:prop:revlex:countlt:r}.
	\end{gather} 
	Combining
	\cref{eq:prop:revlex:countlt:n,eq:prop:revlex:countlt:c,eq:prop:revlex:countlt:r}
	gives 
	\[
		\#(\LM(G_{d+1})\smallsetminus\LM(G_d))=h_{d+1}-kh_d+c_{d+1}
	.\] 
	The rest of the proof consists in computing $c_{d+1}$. Fix
	$\tau\in\LM(G_d)$, and let $x_j$ be the grevlex smallest variable in
	$\tau$. Then by \cref{lem:revlex:collisions}, for each of the $j-1$
	variables larger than $x_j$, there exists some $\sigma\in\LM(G_d)$ such
	that $\sigma x_j$ appears as the leading term of some row of
	$\macmat_{d+1}$ which can be reduced by a multiple of $\tau$. Thus, the
	row of $\macmatred_d$ with leading term $\tau$ generates exactly $j-1$
	rows of $\macmat_{d+1}$ which are to be reduced. The number of monomials
	of degree $d$ whose grevlex smallest variable is some $x_j$ is simply
	\[
		\parunderbrace{\binom{j+d-1}{j-1}}{number of monomials of degree $d$ in $j$ variables}
    \quad - \quad
    \parunderbrace{\binom{j+d-2}{j-2}}{number of monomials of degree $d$ in  $j-1$ variables}
    \quad = \quad
    \parunderbrace{\binom{j+d-2}{j-1}}{number of monomials of degree $d$ with grevlex smallest variable $x_j$}
	\] 
	Finally, since the leading monomials of $\macmatred_d$ are simply the
	$h_d$ grevlex largest monomials of degree $d$, we obtain the final
	result
	\begin{align*}
		\#(\LM(G_{d+1})\smallsetminus\LM(G_d))&=h_{d+1}-kh_d+c_{d+1}\\
						      &=h_{d+1}-kh_d+\sum_{j=1}^{\ell_d}\binom{j+d-2}{j-1}(j-1)\\
						      &\qquad+\ell_d\left(h_d-\binom{\ell_d+d-1}{\ell_d-1}\right)\\
						      &=h_{d+1}+(\ell_d-k)h_d+\sum_{j=1}^{\ell_d}\binom{j+d-2}{j-1}(j-1)\\
						      &\qquad-\ell_d\binom{\ell_d+d-1}{\ell_d-1}
	.
  \qedhere
  \end{align*} 
\end{proof}

\subsection{Reverse lexicographic determinantal ideals}
\label{sec:revlex_ideals:determinantal}

With the hope of applying \cref{prop:revlex:new-gb-bound}, we investigate here
the conditions under which the determinantal ideals we consider are indeed
reverse lexicographic. We begin by showing that the Macaulay matrices of
reverse lexicographic ideals possess a certain structure. Next, we construct an
explicit Zariski open subset whose points correspond to reverse lexicographic
ideals of the form $\detidealCorkOne$. That this Zariski open subset is
nonempty is left as a conjecture, which we give insight into in
\cref{sec:lefschetz}.

\begin{lemma}\label{lem:revlex:revlex-macaulay}
	Let $F\subseteq\ring$ be a sequence of homogeneous polynomials. Then
	$I=\langle F\rangle$ is a reverse lexicographic ideal if and only if
	for any $d\in\ZZpos$, the first $h_d$ columns of the Macaulay matrix in
	degree $d$, $\macmat_d$, have rank $h_d=\rank(\macmat_d)$, or
	equivalently, the echelonized Macaulay matrix in degree $d$,
	$\macmatred_d$ takes the form
	\[
	\macmatred_d=
	\left(
		\begin{array}{c|c}
			I & X
		\end{array}
	\right)
	\] 
	after having removed reductions to zero and up to a permutation of
	rows, where $I$ is the identity matrix of size $h_d\times h_d$.
\end{lemma}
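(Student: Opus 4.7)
The plan is to prove the lemma by chaining equivalences through the set $\LM(I_d)$ of leading monomials of degree-$d$ elements of $I$. The key preliminary observation is that, since $I$ is generated by homogeneous elements, $I_d=\sum_i\ring_{d-d_i}\modf_i$ is spanned by the rows of $\macmat_d$, and hence $\rank(\macmat_d)=\dim_\field I_d=h_d$. By \cref{thm:preliminaries:macmat-echelonization}, the $h_d$ nonzero rows of $\macmatred_d$ have pairwise distinct leading monomials whose collection is exactly $\LM(I_d)$, so $|\LM(I_d)|=h_d$.

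The first equivalence I would establish is that $I$ is reverse lexicographic if and only if, for every $d$, $\LM(I_d)$ consists of the $h_d$ grevlex-largest monomials of degree $d$. The backward direction is immediate from the definition. For the forward direction, fix $\tau\in\LM(I_d)$; by the reverse-lex hypothesis every $\sigma\succ\tau$ of degree $d$ lies in $\LM(I)\cap\Mon_d(\ring)=\LM(I_d)$, so $\LM(I_d)$ is upward-closed in grevlex, and any finite upward-closed subset of $\Mon_d(\ring)$ of cardinality $h_d$ must be the top $h_d$ monomials.

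Next, I would translate this condition on $\LM(I_d)$ into the two matrix conditions. Since the columns of $\macmat_d$ (and of $\macmatred_d$) are indexed by $\Mon_d(\ring)$ in decreasing grevlex order, the pivot columns of $\macmatred_d$ are precisely those indexed by the elements of $\LM(I_d)$. Hence $\LM(I_d)$ being the $h_d$ grevlex-largest monomials is equivalent to all $h_d$ pivots of $\macmatred_d$ lying in the first $h_d$ columns. By elementary properties of the reduced row-echelon form, this is equivalent on the one hand to the first $h_d$ columns of $\macmat_d$ being linearly independent (i.e.\ having rank $h_d=\rank(\macmat_d)$), and on the other hand to $\macmatred_d$ taking the block form $(I_{h_d}\mid X)$ once the zero rows are discarded and the remaining rows sorted by pivot position (since in rref each pivot column contains a single $1$ and zeros elsewhere).

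The entire argument is elementary linear algebra applied to the dictionary between pivot positions and leading monomials; no genericity, Lefschetz, or determinantal input is needed here. I do not foresee a genuine obstacle; the only point that demands care is the bookkeeping identifying $h_d=\dim_\field I_d$ with both $\rank(\macmat_d)$ and $|\LM(I_d)|$, which is what makes the three equivalent conditions fit together cleanly.
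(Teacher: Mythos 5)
Your proof is correct and takes essentially the same approach as the paper's: both reduce the lemma to the observation that the pivot columns of $\macmatred_d$ are exactly the columns indexed by $\LM(I_d)$, so that the reverse-lexicographic condition is equivalent to these pivots occupying the $h_d$ grevlex-largest (i.e.\ leftmost) columns. The paper's version is terser—it argues via the minimal degree-$d$ leading monomial rather than characterizing $\LM(I_d)$ as the top $h_d$ monomials—but the underlying dictionary between pivots and leading monomials is identical.
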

\begin{proof}
	Fix $d\in\ZZpos$.
	Suppose $I$ is reverse lexicographic, and let
	\[
	\tau_0=\min\{\tau\in\LM(I):\deg(\tau)=d\}
	.\] 
	Then $\tau_0$ appears as the rightmost pivot of $\macmatred_d$. Since 
	\[
		\{\sigma\in\degdmon:\sigma\succ\tau\}\subseteq\LM(I) 
	,\] 
	all columns to the left of that indexed by $\tau_0$ contain a pivot.

	Conversely, suppose $\macmatred_d$ takes the desired form. For any
	$\tau\in\LM(I)$ of degree $d$, the column indexed by $\tau$ contains a
	pivot, and thus belongs to the left identity block. Thus, any column to
	the left of that indexed by $\tau$ must also contain a pivot.
\end{proof}

We conclude by conjecturing that comaximal determinantal ideals of matrices of
linear forms are indeed reverse lexicographic, and describing the appropriate
Zariski open subset which must be nonempty in order for this to be true.

\begin{conjecture}\label{conj:revlex:det-ideal-is-revlex}
	Let $\RL$ be the property defined by
	 \[
		 \RL(I)=
		 \begin{cases}
			 \texttt{true}\quad &\text{if } I \text{ is reverse lexicographic}\\
			 \texttt{false} &\text{otherwise}
		 \end{cases}
	.\] 
	Then for any $n\ge 3$, $\RL$ is $(4,n-2,n,1)$-generic. 
\end{conjecture}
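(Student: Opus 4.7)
The plan is to translate the reverse lexicographic property of $\detidealCorkOne$ into a family of rank conditions on Macaulay matrices via \cref{lem:revlex:revlex-macaulay}, interpret those conditions as maximal-rank statements for multiplication-by-power maps in the Artinian quotient $A = \ring/\detidealCorkOne$, and then reduce the whole family to a single Lefschetz-type condition on $A$.

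First, I would invoke \cref{lem:revlex:revlex-macaulay} to reformulate the revlex property: $\detidealCorkOne$ is reverse lexicographic if and only if, for every $d$, the leftmost $h_d$ columns of $\macmat_d$ are linearly independent. Since columns of $\macmat_d$ are indexed by monomials of degree $d$ in decreasing grevlex order, and the grevlex-smallest monomials of degree $d$ are exactly those carrying the highest power of $x_4$, this rank condition is equivalent to saying that the $\binom{d+3}{3} - h_d$ grevlex-smallest monomials of degree $d$ descend to a basis of $A_d$ for every $d$. After performing a generic linear change of coordinates on $\ring$---which merely shifts the Zariski open set of admissible $a \in \affspaceCorkOne$ by a $\GL_4(\field)$-translate---the variable $x_4$ becomes a generic linear form $\ell \in A_1$, so the revlex condition becomes the strong Lefschetz property for $A$ realized through $\ell$: namely, maximal rank of $\cdot \ell^{j} : A_{d-j} \to A_d$ for every admissible pair $(d,j)$.

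Next, using the explicit Hilbert series $H_{n,1}(t)$ from \cref{prop:hilb:hilb}, the monotonicity result of \cref{lem:analysis:h-increasing}, and the self-duality of the Gulliksen--Neg{\aa}rd complex (which implies that $A$ is Gorenstein, hence that its Hilbert function is palindromic), I would argue that the entire family of maximal-rank conditions is already implied by the single case $j=2$: higher-power maps are obtained from $\cdot\ell^{2}$ by iteration, and the case $j=1$ follows by symmetry of the Hilbert function together with the extremal shape guaranteed by \cref{lem:analysis:h-increasing}. This would reduce \cref{conj:revlex:det-ideal-is-revlex} to the statement that the generic comaximal determinantal Artinian quotient in four variables satisfies the $2$-SLP.

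The main obstacle is to establish the $2$-SLP itself. Lefschetz properties are notoriously delicate to prove for specific families of Artinian algebras, and no off-the-shelf result in the literature immediately yields the $2$-SLP for this precise determinantal family; the usual approaches, such as generic-initial-ideal techniques or combinatorial Hard-Lefschetz arguments, do not apply in any direct way to the Gulliksen--Neg{\aa}rd quotient. The proposed plan therefore does not give an unconditional proof of \cref{conj:revlex:det-ideal-is-revlex} but reduces it to a focused Lefschetz-type conjecture about $A$, which is the subject of \cref{sec:lefschetz}.
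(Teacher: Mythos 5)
The statement you are asked about is a \emph{conjecture}: the paper does not prove it, and neither do you, so the honest comparison is between your proposed reduction and the paper's own partial evidence. Your endpoint --- reducing \cref{conj:revlex:det-ideal-is-revlex} to a $2$-SLP statement for the Artinian quotient $A=\ring/\detidealCorkOne$ --- is exactly where the paper lands in \cref{sec:lefschetz}, via the cited result \cite[Corollary~6.30]{Harima2013ch6}. The paper additionally gives a second, more elementary piece of evidence that you omit: it writes down the explicit open set $U_{\RL}$ as an intersection of distinguished opens cut out by the non-vanishing of the determinants $g_d(\fraka)$ of the leftmost $h_d\times h_d$ blocks of the generic Macaulay matrices (using \cref{lem:revlex:revlex-macaulay} and \cite[Corollary~19]{FaugereSafeySpaenlehauer2013} to bound the degrees that must be checked), so that the only missing ingredient is the non-vanishing of finitely many explicit polynomials.

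However, your own justification of the reduction to the $2$-SLP contains genuine errors. First, you conflate the $2$-SLP with ``maximal rank of $\times\ell^{2}:A_{d-2}\to A_d$ for a single generic $\ell$.'' By the definition in the paper, the $2$-SLP requires a linear form $\ell_1$ that is a strong Lefschetz element for $A$ \emph{and} a second linear form $\ell_2$ that is a strong Lefschetz element for $A/(\ell_1)$; these are different conditions, and the second quotient is essential in four variables. Second, the claim that the higher-power conditions $\times\ell^{s}$ follow from $\times\ell^{2}$ ``by iteration'' is false in general: a composite of two full-rank linear maps need not have full rank when the intermediate dimension is smaller than both endpoints, and since $A$ is Gorenstein its Hilbert function rises and then falls, which is precisely the situation where this failure occurs. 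Third, the asserted equivalence ``revlex staircase of $\LM(\detidealCorkOne)$ $\Leftrightarrow$ SLP of $x_4$'' is too strong: a single Lefschetz element only controls the powers of $x_4$ in the staircase, not the full staircase in four variables --- this is exactly why \cite[Corollary~6.30]{Harima2013ch6} requires the $2$-SLP, and even then it concludes only that the \emph{generic initial ideal} is the unique \emph{weakly} reverse lexicographic ideal with the given Hilbert function, two qualifications (gin versus $\LM$, weakly revlex versus revlex) that still need to be reconciled with the statement of the conjecture. If you replace your hand-rolled equivalences by a direct appeal to that cited theorem, your proposal becomes a correct account of the paper's Lefschetz-based sufficient condition, with the $2$-SLP itself remaining open, as you acknowledge.
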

By \cref{lem:revlex:revlex-macaulay}, it is sufficient to prove that there
exists some Zariski open subset $U\subseteq\affspaceCorkOne$ such that for all
$a\in U$, for any $d\in\ZZpos$, the reduced Macaulay matrix
$\macmatred_d(\detsystemCorkOne[\phi_a({\genmat[1]})])$ takes the form
\[
\macmatred_d(\detsystemCorkOne[\phi_a({\genmat[1]})])=
\left(
	\begin{array}{c|c}
		I & X
	\end{array}
\right)
\] 
after a suitable row permutation and removal of reductions to zero. Recall that
the notation used here was introduced in \cref{sec:genericity}.

By \cref{prop:hilb:hilb}, there exists a Zariski open subset
$U_{\HS}\subseteq\affspaceCorkOne$ such that for all $a\in U_{\HS}$, the
Hilbert series of $\detidealCorkOne[\phi_a({\genmat[1]})]$ is the one given in
\cref{prop:hilb:hilb}. For any $d\in\ZZpos$, let
$h_d=\HF_{\detidealCorkOne[\phi_a({\genmat[1]})]}(d)$. 

Now fix some $d\in\ZZpos$, $n-1\le d\le 2n-3$. The determinant of the square
submatrix of the Macaulay matrix
$\macmatred_d(\detsystemCorkOne[{\genmat[1]}])$ given by the first $h_d$
columns after removing zero rows is a polynomial in $\fraka$,
$g_d(\fraka)\in\field[\fraka]$.

The distinguished Zariski open set $\affspaceCorkOne\smallsetminus
V(g_d(\fraka))$ consists precisely of those $a\in\affspaceCorkOne$ such that
\[
\macmatred_d(\detsystemCorkOne[{\genmat[1]}])=\left(\begin{array}{c|c}I
& X\end{array}\right)
.\] 
By \cite[Corollary~19]{FaugereSafeySpaenlehauer2013}, there exists a Zariski
open subset $O\subseteq\affspaceCorkOne$ such that the largest degree Macaulay
matrix which needs to be reduced is $\macmat_{2n-3}$. Thus, letting
\[
	U_{\RL}=O\cap U_{\HS}\cap\bigcap_{d=n-1}^{2n-3}\left(\affspaceCorkOne\smallsetminus V_{\algclosure}(g_d(\fraka))\right)
,\] 
for any $a\in U_{\RL}$, the ideal $\detidealCorkOne[\phi_a({\genmat[1]})]$ is
reverse lexicographic.

It is not clear, however, that the set $U_{\RL}$ is nonempty. Equivalently, it
is not clear that the polynomials  $g_d(\fraka)\in\field[\fraka]$ are nonzero.

In \cite[Theorem~3]{Pardue10} (see also the references therein), necessary and
sufficient conditions are given in order for a given power series to be the
Hilbert series of a reverse lexicographic ideal. By
\cref{lem:analysis:h-increasing}, these assumptions are satisfied by the
Hilbert series given in \cref{prop:hilb:hilb}.

In the following section, we relate \cref{conj:revlex:det-ideal-is-revlex} to
the Lefschetz properties (see \cite{Harima2013ch3}), and show that under
certain genericity assumptions, $\detidealCorkOne$ has the \emph{weak Lefschetz
property}.

\section{Determinantal ideals and the Lefschetz properties}\label{sec:lefschetz}

We devote this section to exploring conditions under which determinantal ideals
possess the so-called Lefschetz properties, in the hope of shedding some light
on \cref{conj:revlex:det-ideal-is-revlex}. These properties have been widely
studied in various contexts, notably that of Artinian Gorenstein algebras.

This section is entirely self-contained, and for ease of exposition we do not
provide definitions of several classical properties from commutative algebra
(e.g. Artinian, Gorenstein). Such definitions and a wealth of related facts can
be found, e.g., in \cite{Matsumura1987}.

\begin{definition}[{\cite[Definitions\,3.1 and 3.8]{Harima2013ch3}}]
	Let $A=\bigoplus_{d=0}^{c}A_d$ be a graded Artinian $\field$-algebra
	with $A_c\ne 0$. The algebra $A$ has the \emph{weak Lefschetz
	property}, or simply WLP if there exists some $\ell\in A_1$ such that
	for all $0\le d\le c-1$, the map of $\field$-vector spaces
	\[
		\times\ell:A_{d}\to A_{d+1}
	\] 
	given by multiplication by $\ell$ has full rank. Such an $\ell$ is
	called a \emph{weak Lefschetz element}. If, in addition, for all $0\le
	d\le c-1$ and  $1\le s\le c-d$, the map
	\[
		 \times\ell^{s}:A_d\to A_{d+s}
	\] 
	has full rank, then $A$ is said to have the \emph{strong Lefschetz
	property}, or simply SLP, and  $\ell$ is called a \emph{strong
	Lefschetz element}.
\end{definition}

A natural generalization is the $t$-Lefschetz property.

\begin{definition}[{\cite[Definition\,6.1]{Harima2013ch6}}]
	Let $A=\bigoplus_{d=0}^{c}A_d$ be a graded Artinian $\field$-algebra.
	For some $t\ge 1$, $A$ has the $t$-WLP (resp. $t$-SLP) if there are
	some $\ell_1,\dots,\ell_t\in A_1$ such that $\ell_1$ is a weak (resp.
	strong) Lefschetz element for $A$ and, for each $1<i\le t$, the linear
	form $\ell_i$ is a weak (resp. strong) Lefschetz element for
	$A/(\ell_1,\dots,\ell_{i-1})$.
\end{definition}

When $A$ is an Artinian ideal of a polynomial ring over $\field$, the notion of
the $t$-SLP has a useful description in terms of the Hilbert series of $A$. 

\begin{proposition}[{\cite[Remark\,6.11]{Harima2013ch6}}]
	Let $I\subseteq\ring$ be a graded Artinian ideal. Then $\ell$ is a
	strong Lefschetz element for $\ring/I$ if and only if for all $s\ge 1$,
	\[
		\HF_{\ring/(I+\langle
		\ell^{s}\rangle)}(d)=\max\{\HF_{\ring/I}(d)-\HF_{\ring/I}(d-s),
	0\}
	.\] 
\end{proposition}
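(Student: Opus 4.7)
The plan is to convert the full-rank condition on $\times\ell^s$ into a Hilbert function identity via the natural exact sequence and rank-nullity. First I would observe that, in each degree $d$, there is an exact sequence of finite-dimensional $\field$-vector spaces
\[
(\ring/I)_{d-s}\xrightarrow{\times\ell^s}(\ring/I)_d\to\bigl(\ring/(I+\langle\ell^s\rangle)\bigr)_d\to 0,
\]
since the image of the multiplication map is precisely the image of $\langle\ell^s\rangle$ in $\ring/I$ in degree $d$. Taking $\field$-dimensions and applying rank-nullity to $\times\ell^s$ yields
\[
\HF_{\ring/(I+\langle\ell^s\rangle)}(d)=\HF_{\ring/I}(d)-\HF_{\ring/I}(d-s)+\dim_\field\ker(\times\ell^s),
\]
with the convention $\HF_{\ring/I}(e)=0$ for $e<0$, which also handles the edge case $d<s$.

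For the forward direction, I would assume $\ell$ is a strong Lefschetz element, so $\times\ell^s:(\ring/I)_{d-s}\to(\ring/I)_d$ has full rank for all admissible $d,s$. I would then split into two cases. If $\HF_{\ring/I}(d-s)\le\HF_{\ring/I}(d)$, the map is injective, so $\dim_\field\ker(\times\ell^s)=0$, and the displayed identity becomes $\HF_{\ring/I}(d)-\HF_{\ring/I}(d-s)\ge 0$, matching the $\max$ expression. Otherwise the map is surjective, so the image has dimension $\HF_{\ring/I}(d)$, the kernel has dimension $\HF_{\ring/I}(d-s)-\HF_{\ring/I}(d)$, and the identity collapses to $\HF_{\ring/(I+\langle\ell^s\rangle)}(d)=0$, again matching the $\max$. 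The trivial ranges (where the source or target is zero) are handled by the same formula since full rank is automatic there.

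For the reverse direction, I would substitute the assumed Hilbert function formula into the displayed identity to extract
\[
\dim_\field\ker(\times\ell^s)=\max\{0,\HF_{\ring/I}(d-s)-\HF_{\ring/I}(d)\},
\]
which is exactly the minimal kernel dimension consistent with the source and target dimensions; hence $\times\ell^s$ has full rank for every $d$ and every $s\ge 1$, so $\ell$ is a strong Lefschetz element by definition. The main obstacle is essentially bookkeeping: keeping track of the two dimensional regimes and verifying that both readings of "full rank" (injective versus surjective) are simultaneously captured by the single $\max$ on the right-hand side. There is no deeper obstruction, since the proposition is a formal consequence of rank-nullity and the definition of a strong Lefschetz element.
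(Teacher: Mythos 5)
Your proof is correct. Note that the paper does not prove this proposition at all: it is quoted directly from the literature (Harima et al., Remark~6.11), so there is no in-paper argument to compare against. Your derivation is the standard one: the right-exact sequence $(\ring/I)_{d-s}\xrightarrow{\times\ell^s}(\ring/I)_d\to\bigl(\ring/(I+\langle\ell^s\rangle)\bigr)_d\to 0$ identifies $\HF_{\ring/(I+\langle\ell^s\rangle)}(d)$ with $\HF_{\ring/I}(d)-\dim_\field\im(\times\ell^s)$, and the stated identity is then exactly equivalent to $\dim_\field\im(\times\ell^s)=\min\{\HF_{\ring/I}(d-s),\HF_{\ring/I}(d)\}$, i.e.\ to full rank in every degree. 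Both directions and the edge cases ($d<s$, degrees beyond the socle degree) are handled cleanly by your convention $\HF_{\ring/I}(e)=0$ for $e<0$, and the reverse direction correctly observes that the $\max$ formula forces the kernel to have the minimal dimension compatible with rank--nullity. No gaps.
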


If $I\subseteq\ring$ is an Artinian ideal, then $\ring/I$ has dimension zero.
Thus, there is some $D$ such that for all $d>D$, $\HF_{\ring/I}(d)=0$. Clearly,
one can restrict to $s\le D$ in the above proposition.

\begin{theorem}[{\cite[Corollary\,6.30]{Harima2013ch6}}]
	Let $I\subseteq\field[x_1,x_2,x_3,x_4]$ be a graded Artinian ideal such
	that $\ring/I$ has the $2$-SLP. Then the generic initial ideal
	$\gin(I)$ of $I$ with respect to the grevlex order is the unique weakly
	reverse lexicographic ideal with Hilbert function $\HF_{\ring/I}$.
\end{theorem}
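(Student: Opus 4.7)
The plan is to combine the standard machinery of generic initial ideals with the numerical consequences of the $2$-SLP, and then invoke the uniqueness of the weakly reverse lexicographic monomial ideal with a prescribed Hilbert function.

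The first step is to recall that, by Galligo's theorem, $\gin(I)$ with respect to grevlex is Borel-fixed (strongly stable) and has the same Hilbert function as $I$. Hence it suffices to prove that $\gin(I)$ is weakly reverse lexicographic, since there is a known uniqueness result: among all monomial ideals with a fixed Hilbert function, the weakly reverse lexicographic one is unique (this is the content of Pardue's classification and is the reason the statement can even be phrased as a uniqueness claim). Thus the real content is the existence claim for $\gin(I)$.

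Next, I would translate the $2$-SLP hypothesis into numerical data on $\gin(I)$. Since Galligo's coordinates are generic, a pair of generic linear forms $\ell_1, \ell_2 \in A_1$ that witness the $2$-SLP of $A=\ring/I$ can, after a generic change of coordinates, be taken to be $x_4$ and $x_3$. By the characterization recalled in the previous proposition (the $2$-SLP in terms of Hilbert functions), for every $s,s'\ge 1$ one then has
\[
	\HF_{\ring/(\gin(I)+\langle x_4^{s}\rangle)}(d)=\max\{\HF_{\ring/I}(d)-\HF_{\ring/I}(d-s),0\},
\]
and similarly, iterating the $2$-SLP,
\[
	\HF_{\ring/(\gin(I)+\langle x_3^{s'},x_4^{s}\rangle)}(d)=\max\{\HF_{\ring/I}(d)-\HF_{\ring/I}(d-s)-\HF_{\ring/I}(d-s')+\HF_{\ring/I}(d-s-s'),0\}.
\]
These two families of identities, together with the Borel-fixed property, are the only structural inputs needed.

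The main step, and the one I expect to be the real obstacle, is to show that these numerical constraints force the monomial complement of $\gin(I)$ in each graded piece to consist exactly of the grevlex-smallest monomials in that degree (so that $\gin(I)$ is reverse lexicographic on its generators, i.e.\ weakly reverse lexicographic). The idea is to argue degree by degree: being Borel-fixed, the standard monomials in degree $d$ are downward-closed under the Borel moves, and the quotient by $x_4^s$ counts those standard monomials whose $x_4$-exponent is $<s$, while the quotient by $(x_3^{s'},x_4^{s})$ refines this by $x_3$-exponent. Matching these counts with the $2$-SLP formulas above forces, for each bidegree $(a,b)$ in $(x_3,x_4)$, that the standard monomials of $\gin(I)$ with $x_3^a x_4^b$ fixed form an initial segment of the monomials in $\field[x_1,x_2]$ with respect to grevlex. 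This initial-segment structure is precisely the weakly reverse lexicographic condition in four variables, and completes the proof after invoking the uniqueness statement from the first paragraph.
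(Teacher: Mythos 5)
First, note that the paper does not prove this statement at all: it is imported verbatim from the literature (Harima et al., \cite[Corollary~6.30]{Harima2013ch6}), so your attempt is being measured against the known proof rather than anything in this text. Your overall strategy (Galligo/Borel-fixedness, transfer of the Lefschetz data to $\gin(I)$, then a combinatorial argument on the monomial ideal, then Pardue-type uniqueness) is indeed the right skeleton, but two steps have genuine gaps.

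The first gap is the numerical translation. The $2$-SLP gives you control of $\HF_{\ring/(I+\langle \ell_1^{s}\rangle)}$ and of $\HF_{(\ring/(I+\langle\ell_1\rangle))/\langle \ell_2^{s'}\rangle}$; it says nothing about $\ring/(I+\langle \ell_1^{s},\ell_2^{s'}\rangle)$ for $s>1$, and even for $s=1$ the two truncations compose as \emph{nested} maxima, not as the single $\max$ of an inclusion--exclusion sum that you wrote. Moreover, passing from ``a generic $\ell$ is a strong Lefschetz element for $\ring/I$'' to ``$x_4$ (resp.\ $x_3$) is a strong Lefschetz element for $\ring/\gin(I)$ (resp.\ $\ring/(\gin(I)+\langle x_4\rangle)$)'' is not a formal consequence of genericity of Galligo's coordinates: $\gin(I)$ is an initial ideal, not a coordinate change of $I$, and this transfer is itself a nontrivial theorem (Wiebe's theorem on the Lefschetz properties of $\ring/\gin(I)$) that you must cite or prove.

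The second and more serious gap is the ``main step.'' The conclusion you extract --- that for each fixed $(x_3,x_4)$-part $x_3^{a}x_4^{b}$ the standard monomials of $\gin(I)$ form a grevlex-initial segment in $\field[x_1,x_2]$ --- is already an automatic consequence of Borel-fixedness (the elementary Borel move $x_2\mapsto x_1$ within each slice), so it uses none of the Lefschetz data and is strictly weaker than the weakly reverse lexicographic property. The latter is a condition on the full grevlex order in four variables: if $\tau$ is a minimal generator, then \emph{every} monomial of the same degree that is grevlex-larger than $\tau$, including those with a different $(x_3,x_4)$-part, must lie in the ideal. What is actually needed, and what the $2$-SLP is for, is the cross-slice coherence: the SLP with respect to $x_4$ pins down how the slices $\{m : x_4^{b}\,\|\,m\}$ shrink as $b$ grows, and the SLP of the quotient with respect to $x_3$ then pins down the slices in $x_3$, reducing everything to the two-variable case where Borel-fixed ideals are automatically revlex segments. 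As written, your argument never establishes this coherence, so the claim ``this initial-segment structure is precisely the weakly reverse lexicographic condition in four variables'' is false and the proof does not close.
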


To establish \cref{conj:revlex:det-ideal-is-revlex}, it is therefore sufficient
to prove, in our setting, that the $2$-SLP is $(4,n-2,n,1)$-generic. On the
other hand, while it would not establish
\cref{conj:revlex:det-ideal-is-revlex}, it would still be useful to discern
whether or not the WLP is $(4,n-2,n,1)$-generic.

\begin{theorem}[{\cite[Remark\,4.4]{Migliore2003}}]\label{thm:lefschetz:det-ideal-wlp}
	Let $I\subseteq\field[x_1,\dots,x_k]$ be a graded ideal. If  $I$ is
	Artinian and has no generator in degree $1$, and if the quotient
	$\ring/I$ is Gorenstein and compressed with even socle degree, then
	$\ring/I$ has the weak Lefschetz property.
\end{theorem}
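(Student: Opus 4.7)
I would aim to show the stronger statement that \emph{every} nonzero linear form is a weak Lefschetz element for $A=\ring/I$, so that the existence of such an $\ell$ is automatic. Let $c=2e$ denote the socle degree. The first step is to pin down the Hilbert function: compressedness gives $\HF_A(d)=\min\{\binom{k+d-1}{k-1},\binom{k+c-d-1}{k-1}\}$, and since $c=2e$ a direct comparison shows that for every $d\le e$ the minimum equals $\binom{k+d-1}{k-1}=\dim_\field\ring_d$. Hence $I_d=0$ and $A_d=\ring_d$ for all $d\le e$ (consistent with, and slightly stronger than, the assumption that $I$ has no degree-$1$ generator as soon as $e\ge 1$). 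Gorenstein symmetry $\HF_A(d)=\HF_A(c-d)$ then shows that $\HF_A$ is non-decreasing on $[0,e]$ and non-increasing on $[e,c]$.

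With this in hand, WLP for a fixed nonzero $\ell\in\ring_1$ is equivalent to: (i) injectivity of the multiplication map $\mu^d_\ell:A_d\to A_{d+1}$ for each $0\le d\le e-1$; and (ii) surjectivity of $\mu^d_\ell$ for each $e\le d\le c-1$. Part (i) is essentially free: in this range both $A_d$ and $A_{d+1}$ coincide with the full polynomial pieces $\ring_d$ and $\ring_{d+1}$, so $\mu^d_\ell$ is literally multiplication by $\ell$ in the integral domain $\ring$, which is injective whenever $\ell\ne 0$.

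For part (ii), I would invoke Macaulay duality for Artinian Gorenstein algebras. The perfect pairing $A_d\times A_{c-d}\to A_c\cong\field$ induced by multiplication satisfies the associativity relation $\langle\ell f,g\rangle=\langle f,\ell g\rangle$, which identifies $\mu^d_\ell$ with the transpose of $\mu^{c-d-1}_\ell$ under the pairing-induced isomorphisms $A_{d+1}\cong A_{c-d-1}^*$ and $A_d\cong A_{c-d}^*$. Since a linear map is surjective if and only if its transpose is injective, $\mu^d_\ell$ is surjective precisely when $\mu^{c-d-1}_\ell$ is injective. For $d\ge e$ the dual index $c-d-1\le e-1$ lies in the range already handled by part (i), and so surjectivity follows. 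Crucially this holds for \emph{every} nonzero $\ell$, proving the WLP unconditionally rather than only generically.

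The only real content is the adjointness identification used in part (ii), a standard consequence of the Macaulay pairing on Artinian Gorenstein algebras, and I expect this is the step to carry out most carefully. The hypothesis of \emph{even} socle degree is essential to this short argument: it places the turning point of $\HF_A$ cleanly at the integer $e$, so that the ranges $[0,e-1]$ and $[e,c-1]$ are exchanged by the involution $d\mapsto c-d-1$. In the odd case $c=2e+1$ one instead gets a middle plateau $\HF_A(e)=\HF_A(e+1)$ which WLP would require to be realized by a \emph{bijection}, and that cannot be read off from polynomial-ring injectivity alone.
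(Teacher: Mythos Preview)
The paper does not supply its own proof of this statement: it is quoted verbatim as \cite[Remark~4.4]{Migliore2003} and used as a black box, the surrounding text only verifying that the hypotheses hold generically for $\detidealCorkOne$. So there is no in-paper argument to compare against.

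Your proposed argument is correct and is in fact the standard one. The two substantive points are exactly where you place them: (i) compressedness with even socle degree $c=2e$ forces $I_d=0$ for all $d\le e$, so that $\mu_\ell^d$ is literally multiplication by a nonzero element in the integral domain $\ring$ on the range $0\le d\le e-1$; and (ii) the Gorenstein pairing $A_d\times A_{c-d}\to A_c\cong\field$ satisfies $\langle \ell f,g\rangle=\langle f,\ell g\rangle$, so that under the induced isomorphisms $A_d\cong A_{c-d}^{*}$ the transpose of $\mu_\ell^d$ is $\mu_\ell^{c-d-1}$, reducing surjectivity in the upper half to injectivity in the lower half. Your observation that this proves the stronger fact that \emph{every} nonzero $\ell\in A_1$ is a weak Lefschetz element is also correct, and your diagnosis of why the argument breaks for odd socle degree (the middle map $A_e\to A_{e+1}$ must be a bijection between spaces of equal dimension, but $A_{e+1}$ is a proper quotient of $\ring_{e+1}$ so injectivity is no longer free) is accurate. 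One cosmetic point: the hypothesis ``no generator in degree~$1$'' is, as you note, already implied by compressedness once $e\ge 1$; it is stated separately in the source reference presumably to exclude degenerate cases.
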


The socle degree of a zero-dimensional ideal $I\subseteq\field[x_1,\dots,x_k]$
is simply the degree of its Hilbert series, which is a polynomial. In this
context, to be compressed simply means that $\HF_{\ring/I}(d)=\HF_{\ring}(d)$
for all $0\le d\le \frac{s}{2}$, where $s=\deg(H_{\ring/I}(t))$ is the socle
degree of $\ring/I$.

By \cite[Theorem\,10]{FaugereSafeySpaenlehauer2013}, the property of being
zero-dimensional is $(4,n-2,n,1)$-generic, thus so is the property of being
Artinian. 

As soon as $n\ge 3$, the $2$-minors of $M$ are of degree at least $2$ and
generate $\detidealCorkOne$. Again by
\cite[Theorem\,10]{FaugereSafeySpaenlehauer2013}, the property that $\ring/I$
has socle degree exactly $2n-2$ is $(4,n-2,n,1)$-generic.

If $M$ is a matrix of linear forms the degree of the $(n-1)$-minors of $M$ is
$n-1$. Thus, the property that $\ring/I$ is compressed is also
$(4,n-2,n,1)$-generic.

Finally, by \cite[Theorem\,2.26]{BrunsVetter1988}, the Gulliksen-Neg{\aa}rd
complex is a free resolution of $M$ as soon as $\detidealCorkOne$ is
Cohen-Macaulay and the final module of this resolution has rank one (see
\cref{subsec:gn}). Thus, the property that $\detidealCorkOne$ is Gorenstein is
also  $(4,n-2,n,1)$-generic.

%
%

\section{A signature-based Gr\"obner basis algorithm for \texorpdfstring{$\detidealCorkOne$}{I[n-2](M)}}
We describe here the algorithm \textsc{DetGB}, an altered version of the $F_5$
algorithm from \cite{GoNeSa23}, which given a matrix $M$ of linear forms over
$\ring$ computes the reduced grevlex Gr\"obner basis for $\detidealCorkOne$. It
is precisely this algorithm which we analyze in the subsequent section. In
contrast to the standard matrix-$F_5$ algorithm (see \cite{Faugere2002} and
\cite{BardetFaugereSalvy2015}), the algorithm which we analyze does not compute
Gr\"obner bases for subsequences of the input sequence. However, as in the
matrix-$F_5$ algorithm, the algorithm described below does compute the
Gr\"obner basis degree by degree.

\subsection{A signature-based Gr\"obner basis algorithm for modules}
We begin by describing an algorithm (\cref{alg:modgb}) which, given a set
$\modF$ of homogeneous elements (all of the same degree) of the free module
$\ring^{m}$, a monomial order $\succ$ on $\ring$, a subset
$Z\subseteq\LMmod(\Syz(\modF))$, and a degree bound $D$, computes the reduced
$D$-$\modord$-Gr\"obner basis of $\langle\modF\rangle$ while avoiding those
reductions to zero which arise from the leading monomials given by $Z$.

\begin{algorithm}[ht]
	\caption{$\modgb(F,\succ,Z,D)$}
	\label{alg:modgb}
	\begin{algorithmic}[1]
	\Require{A collection of homogeneous module elements $\modF=\{\modf_1,\dots,\modf_s\}\subseteq\ring^{m}$ all of degree $d_0$, a monomial order $\succ$ on $\ring$, a subset $Z\subseteq\LMmod(\Syz(\modF))$, and a degree bound $D$.}
		\Ensure{The reduced $D$-Gr\"obner basis of the module $\langle \modF\rangle$ with respect to the module order $\mathrm{TOP}_\succ$.}
		\For{$i\in[1,\dots,s]$}
		\State $\macmat_{d_0}\gets$ concatenate $\modf_i$ to $\macmat_{d_0}$ with signature $(i,1)$
		\EndFor 
		\State $\macmatred_{d_0}\gets\rref(\macmat_{d_0})$ 
		\State $G\gets\rows(\macmatred_{d_0})$
		\For{$d\in[d_0+1,\dots, D]$}
		\For{$g\in\rows(\macmatred_{d-1})$}
		\State $(i,\tau)\gets\signature(g)$
		\For{$j\in\max_{k}\{x_k\mid\tau\}$}
		\If{$x_k\tau e_i\notin Z$}
		\State $\macmat_d\gets$ concatenate $x_kg$ to  $\macmat_d$ with signature $(i,x_k\tau)$
		\EndIf
		\EndFor
		\EndFor
		\State $\macmatred_{d}\gets\rref(\macmatred_d)$
		\State $G\gets G\cup\rows(\macmat_d)$
		\EndFor
		\State \Return $G$
	\end{algorithmic}
\end{algorithm}

In \cref{alg:modgb}, $\rref(\macmat_d)$ is the reduced row echelon form of a
Macaulay matrix $\macmat_d$, and $\rows(\macmatred_d)$ is the set of rows of
$\macmatred_d$, interpreted as elements of $\ring^{m}$.

The algorithm works by building Macaulay matrices in various degrees for the
module $\langle \modf_1,\dots,\modf_s\rangle$. It then echelonizes these
Macaulay matrices using a general-purpose echelon form algorithm. From these
echelonized matrices, it extracts polynomials whose leading terms do not belong
to the ideal generated by the leading terms of the intermediate Gr\"obner basis
computed. \cref{alg:modgb} uses the additional data of leading
terms of syzygies (the input $Z$) to avoid adding to the Macaulay matrix rows which are known
to reduce to zero upon
echelonization. Furthermore, it builds the Macaulay matrix
$\macmat_d$ from $\macmatred_{d-1}$ rather than from the original system
$\modf_1,\dots,\modf_s$ since in doing so, a portion of $\macmat_d$ will
already be echelonized. It is precisely by exploiting this specific structure
of $\macmat_d$ that we arrive at sharp complexity analyses in \cref{sec:analysis}.

The termination and correctness of \cref{alg:modgb} follow essentially from
\cref{thm:preliminaries:macmat-echelonization}. A detailed proof can be found
in \cite[Theorem~9]{BardetFaugereSalvy2015}.

\begin{remark}
	When $m=1$, \cref{alg:modgb} is essentially just Lazard's algorithm
	(see \cite{Lazard1983}), with the additional input of a set of
	precomputed syzygies. Given this, the standard matrix-$F_5$ algorithm
	is recovered as a very slight alteration to \cref{alg:modgb} by
	updating the set $Z$ with the leading monomials of the matrices
	$\macmatred_d$ along the way. 
\end{remark}

\subsection{The \textsc{DetGB} algorithm}

Using \cref{alg:modgb} combined with syzygy information from the
Gulliksen-Neg{\aa}rd complex leads to \cref{alg:detgb} (see also
\cite[Algorithm~3]{GoNeSa23}) to compute Gr\"obner bases for the determinantal
ideals considered in this paper.

\begin{algorithm}
	\caption{$\detgb(M)$}
	\label{alg:detgb}
	\begin{algorithmic}[1]
		\Require{An $n\times n$ matrix $M$ of homogeneous linear forms in four variables.}
		\Ensure{The reduced grevlex Gr\"obner basis for $\detidealCorkOne$.}
		\State $S_2\gets$ a set of generators for $\Syz_2(\detsystemCorkOne)$ computed using the Gulliksen-Neg{\aa}rd complex.
		\State $S_1\gets$ a set of generators for $\Syz(\detsystemCorkOne)$ computed using the Gulliksen-Neg{\aa}rd complex.
		\State $L_2\gets\modgb(S_2,\mathrm{grevlex},\emptyset,n-3)$
		\State $L_1\gets\modgb(S_1,\mathrm{grevlex},\LMmod(L_2),n-2)$
		\State \Return $\modgb(\detsystemCorkOne,\mathrm{grevlex},\LMmod(L_1),2n-3)$
	\end{algorithmic}
\end{algorithm}

The termination and correctness of \cref{alg:detgb} is proven in
\cite[Proposition~21]{GoNeSa23}.

\begin{remark}
	If $\detidealCorkOne$ is not Cohen-Macaulay, the Gulliksen-Neg{\aa}rd
	complex need not be a free resolution for $\detidealCorkOne$. However,
	it is still a complex. Thus, the syzygy modules of $\detidealCorkOne$
	contain, possibly properly, the modules computed from the
	Gulliksen-Neg{\aa}rd complex. It is for this reason that we do not need
	to assume any genericity properties in order for \cref{alg:detgb} to be
	correct.

	See \cite[Remarks~10 and~20]{GoNeSa23} for a
	more detailed discussion.
\end{remark}

\begin{remark}
	The image of the standard basis elements of $\freeresmod_0$ under the
	augmentation map $\epsilon$ in the Gulliksen-Neg{\aa}rd complex are
	actually the cofactors of order $n-1$ of $M$, not the minors of order
	$n-1$. Computing the images of the boundary maps in the
	Gulliksen-Neg{\aa}rd complex therefore gives syzygy modules for the
	cofactors of order $n-1$, rather than the minors, as we would like.
	This can be corrected by simply replacing $\detsystemCorkOne$ with the
	cofactors of order $n-1$ of $M$ in \cref{alg:detgb}. Alternatively, we
	can easily turn the syzygies of the cofactors obtained from the
	Gulliksen-Neg{\aa}rd complex into syzygies of the minors, as explained
	in the proof of \cite[Theorem~9]{GoNeSa23}.
\end{remark}

\section{Complexity analysis}
\label{sec:analysis}

We consider a matrix of the form $M=\phi_a(\genmat[1])$, where $a$ is taken to
be a point in some suitable Zariski open subset of $\affspaceCorkOne$. We make
precise which Zariski open subsets we must take $a$ to lie in below, appealing
to the various genericity statements we have established above (e.g.
\cref{prop:genericity:det-ideal-is-cm,prop:hilb:hilb,conj:revlex:det-ideal-is-revlex}).
Our complexity analysis begins by computing the number of polynomials of each
degree in the reduced grevlex Gr\"obner basis of $\detidealCorkOne$. 

The coefficient of $t^d$ in the Hilbert series of $\detidealCorkOne$ is, by
definition, the dimension of the $\field$-vector space of homogeneous
polynomials in $\detidealCorkOne$ of degree $d$. This dimension is also
precisely the rank of the Macaulay matrix of $\detsystemCorkOne$ in degree $d$.
Combining these ranks with the aforementioned count of polynomials of degree $d$
in the reduced grevlex Gr\"obner basis of $\detidealCorkOne$ allows us to
compute tight bounds on the complexity of the overall Gr\"obner basis
computation using fast linear algebra techniques.

Following the standard for complexity bounds, we use the Bachmann-Landau
notation $O(\cdot)$ (see e.g.\ \cite[Section~3.1]{Cormen22}).

\subsection{Bounding \texorpdfstring{$\#(\LM(G_{d+1})\smallsetminus\LM(G_{d}))$}{the growth of the Gr\"obner basis}}

The work of computing the number of polynomials of degree $d$ in the reduced
grevlex Gr\"obner basis for $\detidealCorkOne$ is already accomplished by our
analysis of staircases of reverse lexicographic ideals in \cref{sec:revlex_ideals}. The following
proposition arises from plugging in the relevant quantities into the formula
given in \cref{prop:revlex:new-gb-bound}.

\begin{proposition}\label{prop:analysis:new-gb-size}
	Suppose \cref{conj:revlex:det-ideal-is-revlex} is true. Fix $a\in
	U_{\RL}\cap U_{\HS}\subseteq\affspaceCorkOne$. Let
	$M=\phi_a({\genmat[1]})$. For any integer $n-1\le d<2n-3$, 
	\[
		\#(\LM(G_{d+1})\smallsetminus\LM(G_d))=\frac{(d-2n+3)(d-2n+2)}{2}
	,\]
	where $G_d$ is the reduced $d$-Gr\"obner basis for $\detidealCorkOne$
	with respect to the grevlex order.
\end{proposition}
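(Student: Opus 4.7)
The plan is to apply \cref{prop:revlex:new-gb-bound} directly, with $k=4$ and with $h_d$ equal to the explicit polynomial expression for the Hilbert function of $\detidealCorkOne$. Under the hypothesis that $a\in U_\mathrm{RL}\cap U_\mathrm{HS}$, \cref{conj:revlex:det-ideal-is-revlex} combined with the genericity statement of \cref{prop:hilb:hilb} ensures both that $\detidealCorkOne$ is reverse lexicographic (so \cref{prop:revlex:new-gb-bound} is applicable with $F=\detsystemCorkOne$) and that $h_d$ is given by the closed form established in the proof of \cref{lem:analysis:h-increasing}, namely $h_d=\frac{(2+d-n)(d^2+(-2n+4)d+4n^2-4n+3)}{3}$.

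The first substantive step is to determine $\ell_d$ in the range $n-1\le d<2n-3$. I expect $\ell_d=3$ throughout this range, which amounts to showing the two inequalities $\binom{d+2}{2}<h_d\le\binom{d+3}{3}$. The upper bound is automatic, since $h_d=\dim_\field(\detidealCorkOne)_d\le\dim_\field\ring_d=\binom{d+3}{3}$. For the lower bound I would substitute the polynomial expression for $h_d$ and compare term by term: at the left endpoint $d=n-1$ one computes $h_{n-1}=n^2$ and verifies $n^2>\binom{n+1}{2}$ for $n\ge 3$, and then I would check that the polynomial $h_d-\binom{d+2}{2}$ (of degree $3$ in $d$) is positive throughout $[n-1,2n-4]$, for instance by bounding its derivative or by direct evaluation at the endpoints using that $h_d$ is strictly increasing there (\cref{lem:analysis:h-increasing}). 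This is the step that requires the most care; everything else is a routine symbolic computation.

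Once $\ell_d=3$ is established, the formula from \cref{prop:revlex:new-gb-bound} specializes to
\[
\#(\LM(G_{d+1})\smallsetminus\LM(G_{d}))=h_{d+1}-h_d+\sum_{j=1}^{3}\binom{j+d-2}{j-1}(j-1)-3\binom{d+2}{2},
\]
since $\ell_d-k=-1$. The finite sum telescopes to $0+d+d(d+1)=d^2+2d$, and $3\binom{d+2}{2}=\tfrac{3(d+2)(d+1)}{2}$. Using the closed form of $h_d$, a direct computation gives $h_{d+1}-h_d=d^2+(5-2n)d+(2n^2-5n+6)$.

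Substituting and clearing the common denominator of $2$ yields
\[
\#(\LM(G_{d+1})\smallsetminus\LM(G_{d}))=\frac{d^2+(5-4n)d+(4n^2-10n+6)}{2},
\]
which factors as $\frac{(d-2n+3)(d-2n+2)}{2}$, as required. The main obstacle is the verification that $\ell_d=3$ on the whole range; the remainder is algebraic bookkeeping.
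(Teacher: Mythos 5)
Your proposal is correct and follows essentially the same route as the paper: apply \cref{prop:revlex:new-gb-bound} with $k=4$, establish $\ell_d=3$ on the whole range, and then carry out the algebraic simplification, which you execute accurately (the sum evaluates to $d^2+2d$, and $h_{d+1}-h_d=d^2+(5-2n)d+2n^2-5n+6$ indeed yields $\tfrac{(d-2n+3)(d-2n+2)}{2}$). The one place you diverge is the inductive step for $\ell_d=3$ when $d>n-1$. The paper argues combinatorially: $\ell_d=3$ forces some leading monomial of degree $d$ to be divisible by $x_4$ (because the $\binom{d+2}{2}$ grevlex-largest degree-$d$ monomials are exactly those free of $x_4$, and the reverse lexicographic hypothesis makes $\LM(I)_d$ the $h_d$ largest), and multiplying that monomial by any variable propagates this to degree $d+1$, giving $h_{d+1}>\binom{d+3}{2}$. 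You instead propose to verify $\binom{d+2}{2}<h_d$ numerically from the closed form of $h_d$. That is a viable alternative, but note that evaluating at the endpoints together with the monotonicity of $h_d$ from \cref{lem:analysis:h-increasing} is not by itself sufficient, since $\binom{d+2}{2}$ also increases with $d$; you would need the sharper comparison $h_{d+1}-h_d>\binom{d+3}{2}-\binom{d+2}{2}=d+2$ on the range (which does hold: the discriminant of $d^2+(4-2n)d+2n^2-5n+4$ is $4n(1-n)<0$), combined with positivity at $d=n-1$. Either fix makes your argument complete; the paper's induction avoids this polynomial bookkeeping entirely.
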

\begin{proof}
	If \cref{conj:revlex:det-ideal-is-revlex} is true, then the ideal
	$\detidealCorkOne$ is reverse lexicographic. Therefore, we can apply
	\cref{prop:revlex:new-gb-bound}. We begin by showing that for all $d\ge
	n-1$, the integer $\ell_d$ is $3$. Recall, from the statement of
	\cref{prop:revlex:new-gb-bound}, that $\ell_d$ is defined to be the
	largest integer such that 
	\[
		\binom{\ell_d+d-1}{\ell_d-1}<h_d
	,\] 
	where $h_d=\HF_{\detidealCorkOne}(d)$. For any $a\in U_{\HS}$, the
	Hilbert series $H_{\detidealCorkOne}(t)$ is given by
	\cref{prop:hilb:hilb}.

	First, note that $h_{n-1}=n^2$ and
	\[
		\binom{(n-1)+2}{2}=\frac{n^2+n}{2}
	.\]
	Since $n>1$, this shows that $\ell_{n-1}\ge 3$.	As $k=4$,
	$\ell_{n-1}\le 3$.

	We proceed by induction. Suppose $\ell_{d}=3$ for some $d\ge n-1$. Then
	there must be at least one monomial $\tau$ in which the variable $x_4$
	appears in $\LM(G_d)$. Subsequently for any variable $x$,
	$x\tau\in\LM(G_{d+1})$ and $x\tau$ contains the variable $x_4$ as well,
	showing that $\ell_{d+1}=3$. 

	Finally, applying \cref{prop:revlex:new-gb-bound}, we obtain
	\begin{align*}
		\#\LM(G_{d+1})\smallsetminus\LM(G_d)&=h_{d+1}-h_d+\binom{d}{1}+2\binom{d+1}{2}-3\binom{d+2}{2}\\
						    &=h_{d+1}-h_d+d+2\binom{d+1}{2}-3\binom{d+2}{2}\\
						    &=\frac{(d-2n+3)(d-2n+2)}{2}
	.
  \qedhere
  \end{align*} 
\end{proof}

\subsection{Lower bounds}

As a first application of \cref{prop:analysis:new-gb-size}, we establish an
exact expression for the size of the reduced grevlex Gr\"obner basis of ideals
of the form $\detidealCorkOne$ under certain genericity assumptions.

\begin{proposition}\label{prop:analysis:total-gb}
	Fix $a\in U_{\RL}\subseteq\affspaceCorkOne$. Let
	$M=\phi_a({\genmat[1]})$. The total number of polynomials in the
	reduced grevlex Gr\"obner basis for $\detidealCorkOne$ is
	\[
		\#G=\frac{n(n+1)(n+2)}{6}
	.\] 
\end{proposition}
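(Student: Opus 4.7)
The plan is to telescope the size of the reduced Gröbner basis degree-by-degree, using \cref{prop:analysis:new-gb-size} to control the increments and a direct count in the base degree. By \cite[Corollary~19]{FaugereSafeySpaenlehauer2013}, for any $a \in U_{\RL} \subseteq O$ the maximal degree appearing in the reduced grevlex Gröbner basis of $\detidealCorkOne$ is bounded above by $2n-3$, and since $\detidealCorkOne$ is generated by the $(n-1)$-minors of $M$, which all have degree $n-1$, no element of $G$ has degree below $n-1$. Since $G_d \subseteq G_{d+1}$ holds for every $d$, I can therefore write
\[
  \#G \;=\; \#G_{n-1} \;+\; \sum_{d=n-1}^{2n-4} \#\bigl(\LM(G_{d+1})\smallsetminus\LM(G_d)\bigr).
\]

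First I would establish that $\#G_{n-1} = n^2$. Every element of $G$ of degree $n-1$ has a leading monomial of degree $n-1$, and since such monomials cannot divide one another, the count $\#G_{n-1}$ equals the number of monomials of degree $n-1$ in $\LM(\detidealCorkOne)$, which in turn equals $\dim_\field (\detidealCorkOne)_{n-1} = h_{n-1}$. Plugging $d = n-1$ into the closed form of $h_d$ given in the proof of \cref{lem:analysis:h-increasing} (valid since $U_{\RL} \subseteq U_{\HS}$) yields $h_{n-1} = n^2$.

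Next I would apply \cref{prop:analysis:new-gb-size} to each term in the sum, giving
\[
  \#G \;=\; n^2 \;+\; \sum_{d=n-1}^{2n-4} \frac{(d-2n+3)(d-2n+2)}{2}.
\]
The substitution $j = 2n-3-d$ rewrites the sum as $\sum_{j=1}^{n-2}\binom{j+1}{2}$, which by the hockey-stick identity equals $\binom{n}{3}$. Therefore $\#G = n^2 + \binom{n}{3}$, and a short algebraic simplification gives
\[
  n^2 + \frac{n(n-1)(n-2)}{6} \;=\; \frac{n(n^2+3n+2)}{6} \;=\; \frac{n(n+1)(n+2)}{6},
\]
as claimed.

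The only non-routine step is identifying $\#G_{n-1}$: one must argue that no cancellation occurs among the $n^2$ generating minors at the lowest degree, and this is exactly the content of the equality $h_{n-1} = n^2$ under the genericity hypothesis $a \in U_{\RL}$. Once this is in place, everything else is a telescoping sum and a binomial identity.
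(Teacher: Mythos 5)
Your proposal is correct and follows essentially the same route as the paper: the same telescoping decomposition $\#G=\#G_{n-1}+\sum_{d=n-1}^{2n-4}\#(\LM(G_{d+1})\smallsetminus\LM(G_d))$ combined with \cref{prop:analysis:new-gb-size} and a direct evaluation of the resulting sum. You are somewhat more explicit than the paper in justifying $\#G_{n-1}=h_{n-1}=n^2$ and the degree bound $2n-3$, but these are refinements of the same argument rather than a different one.
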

\begin{proof}
	Enumerating the polynomials in the reduced grevlex Gr\"obner basis for
	$\detidealCorkOne$ is equivalent to enumerating the leading monomials
	of these polynomials. That is,
	\[
		\#G=\#\LM(G_{n-1})+\sum_{d=n-1}^{2n-4}\#(\LM(G_{d+1})\smallsetminus\LM(G_d))
	.\] 
	Using \cref{prop:analysis:new-gb-size},
	\[
		\#G=n^2+\sum_{d=n-1}^{2n-4}\frac{(d-2n+3)(d-2n+2)}{2}=\frac{n(n^2+3n+2)}{6}
	.
  \qedhere
  \] 
\end{proof}

Recall that here, we consider the computation of a dense representation of the
sought Gr\"obner basis, meaning that all coefficients in \(\field\) of all
elements in this basis are explicitly computed.

The expression obtained in \cref{prop:analysis:total-gb} counts only the
leading monomials of the polynomials in the reduced grevlex Gr\"obner basis of
$\detidealCorkOne$, and not the smaller monomials in these polynomials. In the
following theorem, we compute \,---\,under our genericity assumptions\,---\,the
number of nonzero coefficients in each of these polynomials.

\begin{theorem}\label{thm:analysis:lower-bound}
	Suppose \cref{conj:revlex:det-ideal-is-revlex} is true. Fix $a\in
	U_{\RL}\cap U_{\HS}\subseteq\affspaceCorkOne$. Let
	$M=\phi_a({\genmat[1]})$. The number of elements of $\field$ in the
	dense representation of the reduced grevlex Gr\"obner basis of
	$\detidealCorkOne$ is asymptotically bounded from below by $n^{6}$.
\end{theorem}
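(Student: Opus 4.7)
The plan is to express the total number of field coefficients in the dense representation of $G$ as an explicit sum over degrees, and then extract an $\Omega(n^6)$ lower bound from its leading asymptotics.

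In a dense representation, each homogeneous polynomial of degree $d$ in four variables is stored as a vector of $\binom{d+3}{3}$ coefficients, so the total count is $T(n) = \sum_{d=n-1}^{2n-3} n_d \binom{d+3}{3}$, where $n_d = \#\{g \in G : \deg g = d\}$. The numbers $n_d$ are enumerated as follows: at $d = n-1$ all $h_{n-1} = n^2$ of the $(n-1)$-minors of $M$ are linearly independent and lie in $G$, giving $n_{n-1} = n^2$; and for $n \le d \le 2n-3$, a re-indexing of \cref{prop:analysis:new-gb-size} yields $n_d = \binom{2n-1-d}{2}$.

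To extract the asymptotic lower bound, I would set $t = d - n + 1$ and $\alpha = t/n \in [0,1]$. Then $\binom{2n-1-d}{2} = \binom{n-t}{2} \sim \frac{(1-\alpha)^2}{2} n^2$ and $\binom{d+3}{3} = \binom{n+2+t}{3} \sim \frac{(1+\alpha)^3}{6} n^3$, so each summand is asymptotic to $\frac{(1-\alpha)^2 (1+\alpha)^3}{12}\, n^5$. Interpreting the sum over $t \in \{0,1,\dots,n-2\}$ as a Riemann sum of width $1/n$ sampling $\alpha$ over $[0,1]$ yields
\[
T(n) \sim \frac{n^6}{12} \int_0^1 (1-\alpha)^2 (1+\alpha)^3 \, d\alpha = \frac{7 n^6}{120},
\]
which gives $T(n) = \Omega(n^6)$.

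The principal obstacle is verifying that the Riemann-sum approximation indeed provides a lower bound, which entails ensuring the leading asymptotic dominates uniformly in $\alpha$ and that boundary contributions (e.g.\ the degree-$(n-1)$ term of size $n^2 \binom{n+2}{3} = O(n^5)$) are negligible. Alternatively, since $n_d \binom{d+3}{3}$ is a polynomial in $n$ and $d$ of known bi-degree, $T(n)$ is itself a polynomial in $n$ of degree $6$ whose leading coefficient can be computed explicitly and equals $7/120$, yielding a rigorous lower bound of the form $T(n) \ge c n^6$ for any $c < 7/120$ and all sufficiently large $n$.
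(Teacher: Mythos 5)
Your decomposition is the same as the paper's: enumerate the basis elements degree by degree via \cref{prop:analysis:new-gb-size} (your re-indexing $n_{n-1}=n^2$ and $n_d=\binom{2n-1-d}{2}$ for $n\le d\le 2n-3$ is correct), multiply by a per-polynomial coefficient count, sum, and extract the degree-$6$ leading term of the resulting polynomial in $n$; both your Riemann-sum extraction and your exact-polynomial fallback are sound. The one substantive divergence is the per-polynomial count. You charge $\binom{d+3}{3}$ field elements to each degree-$d$ element, i.e.\ every slot of a dense vector indexed by all monomials of degree $d$. The paper instead counts the \emph{nonzero} coefficients (this is stated explicitly in the introduction and is what the proof computes): a reduced Gr\"obner basis element of degree $d$ is supported on its leading monomial together with standard monomials of degree $d$, of which there are $\binom{d+3}{3}-h_d$, where $h_d$ denotes the dimension of the degree-$d$ graded piece of $\detidealCorkOne$. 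This is not a constant-factor discrepancy: by the symmetry of the Hilbert function of the (Gorenstein, compressed) quotient one finds $\binom{d+3}{3}-h_d=\binom{2n-1-d}{3}$ for $d\ge n-1$, which vanishes at the top degree $d=2n-3$, whereas your $\binom{d+3}{3}$ is $\sim\frac{4}{3}n^3$ there. Concretely, your integrand $(1-\alpha)^2(1+\alpha)^3$ should be $(1-\alpha)^2(1-\alpha)^3=(1-\alpha)^5$, giving the paper's leading coefficient $\frac{1}{72}$ rather than your $\frac{7}{120}$. Under a strictly literal reading of ``number of elements of $\field$ in the dense representation'' your count is exact and the $\Omega(n^6)$ bound follows; but if the intended quantity is the number of nonzero coefficients, your $T(n)$ is an overcount and an overcount cannot establish a lower bound. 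The repair is one line --- replace $\binom{d+3}{3}$ by $\binom{2n-1-d}{3}+1$ --- after which your asymptotic argument goes through unchanged and still yields $\Theta(n^6)$.
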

\begin{proof}
	Since $a\in U_{\HS}$, the Hilbert series of $\detidealCorkOne$ is the
	one given in \cref{prop:hilb:hilb}. Let
	$h_d=\HF_{\ring/\detidealCorkOne}(d)$. The number of monomials
	appearing in a degree $d$ polynomial in the reduced grevlex Gr\"obner
	basis for $\detidealCorkOne$ is
	\[
		\binom{3+d}{3}-h_d+1
	.\] 
	Therefore, using \cref{prop:analysis:new-gb-size}, we find that the
	number of nonzero monomials appearing in the grevlex Gr\"obner basis for
	$\detidealCorkOne$ is
	\begin{align*}
		N=n^2\bigg(\binom{2+n}{3}&-n^2+1\bigg)\\
					&+\sum_{d=n-1}^{2n-4}\frac{(d-2n+3)(d-2n+2)}{2}\left(\binom{3+d}{3}-h_d+1\right) 
	.\end{align*} 
	Expanding this gives 
	\[
	N=\frac{1}{72}n^6+\frac{13}{120}n^5-\frac{4}{9}n^4+\frac{13}{24}n^3+\frac{31}{72}n^2+\frac{7}{20}n\qedhere
	\] 
\end{proof}

\subsection{Upper bounds}

For a given degree $d>n-1$, \cref{alg:detgb} builds the Macaulay matrix in
degree $d$ by multiplying each row of the Macaulay matrix in degree $d-1$ by
each variable, utilizing the signatures attached to each row to avoid
redundancies in rows. The reverse lexicographic property of determinantal
ideals provides the unreduced Macaulay matrix in degree $d$ with a precise
structure, which we analyze to obtain complexity upper bounds.

\begin{proposition}\label{prop:analysis:macmat-structure}
	Suppose \cref{conj:revlex:det-ideal-is-revlex} is true. Fix $a\in
	U_{\RL}\subseteq\affspaceCorkOne$. Let $M=\phi_a({\genmat[1]})$ and let
	$d\in\ZZpos$, $n-1\le d<2n-3$. Then after a suitable row permutation,
	the unreduced Macaulay matrix $\macmat_{d+1}$ of $\detidealCorkOne$
	built by \cref{alg:detgb} is of the form
	\[
	\left(
		\begin{array}{c|c}
			T_{d+1} & X_{d+1}\\ \hline
			A_{d+1} & Y_{d+1}
		\end{array}
	\right)
	\] 
	where $T_{d+1}$ is a square upper triangular block of size
	$h_{d+1}-\frac{(d-2n+3)(d-2n+2)}{2}$ and
	$h_{d+1}=\HF_{\detidealCorkOne}(d+1)$ is the Hilbert function of
	$\detidealCorkOne$ evaluated at $d+1$.
\end{proposition}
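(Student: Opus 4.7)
The plan is to leverage the reverse-lexicographic block structure of the reduced Macaulay matrix $\macmatred_d$ together with the specific row-generation rule of \cref{alg:modgb}. By \cref{conj:revlex:det-ideal-is-revlex} combined with \cref{lem:revlex:revlex-macaulay}, after a row permutation one has $\macmatred_d = (I \mid X)$ where $I$ is the $h_d \times h_d$ identity on the grevlex-largest $h_d$ monomials of degree $d$, so every row $g$ of $\macmatred_d$ has leading coefficient one and a distinct pivot $\tau \in \LM(G_d)_d$.

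First I would analyze how $\macmat_{d+1}$ is built in \cref{alg:modgb}: each of its rows is of the form $x_k g$ for some row $g$ of $\macmatred_d$ and some variable $x_k$ admissible under the signature-based criterion. Each such row has leading monomial $x_k \tau$ and leading coefficient one, so the set $P$ of leading monomials appearing among the rows of $\macmat_{d+1}$ is contained in the set of ``old'' monomials of $\LM(I)_{d+1}$, i.e.\ those divisible by some element of $\LM(G_d)_d$. Conversely, soundness of the signature-based criterion together with the fact that the syzygies in $Z$ pruned by \cref{alg:detgb} are exactly those known to reduce to zero (via \cref{prop:hilb:gn-free-res}) ensures that every old monomial arises as $x_k \tau$ for at least one admissible $(k, \tau)$. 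Combining this equality with \cref{prop:analysis:new-gb-size} yields $|P| = h_{d+1} - \frac{(d-2n+3)(d-2n+2)}{2}$.

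For the block structure, for each $\mu \in P$ I would select a single representative row of $\macmat_{d+1}$ with leading monomial $\mu$; when several rows share the same leading monomial, the extras are collisions relegated to $A_{d+1}$ and $Y_{d+1}$. Permuting these $|P|$ selected rows to occupy the top of $\macmat_{d+1}$ in decreasing grevlex order of their pivots, and arranging the columns so that those indexed by $P$ appear first, the resulting top-left block $T_{d+1}$ is upper triangular with ones on its diagonal: the nonzero entries of the $i$-th selected row are supported in columns indexed by monomials $\preceq \mu_i$, so its entries in the first $i-1$ columns (indexed by grevlex-strictly-larger pivots) vanish.

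The main obstacle will be the enumeration step $|P| = h_{d+1} - \frac{(d-2n+3)(d-2n+2)}{2}$: the inclusion $P \subseteq \text{OLD}$ is immediate, but the reverse inclusion requires carefully combining the soundness of the F5 criterion, the correctness of the syzygy pruning from the Gulliksen-Neg{\aa}rd complex, and the combinatorial structure of reverse-lex ideals encoded in \cref{lem:revlex:collisions}---given any old $\mu \in \LM(I)_{d+1}$, one must exhibit a decomposition $\mu = x_k \tau$ with $\tau$ a pivot of $\macmatred_d$ such that $(k, \tau)$ passes both the signature-based criterion and the syzygy filter.
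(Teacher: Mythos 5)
Your overall strategy---identify the set $P$ of leading monomials of the rows of $\macmat_{d+1}$, show that $P$ equals the set of ``old'' monomials with $\#P=h_{d+1}-\frac{(d-2n+3)(d-2n+2)}{2}$, and sort one representative row per element of $P$ to the top---matches the skeleton of the paper's proof, and you correctly flag the realizability of every old monomial under the signature and syzygy filters as a point needing care. But there is a genuine gap at the step where you write ``arranging the columns so that those indexed by $P$ appear first.'' The proposition permits only a row permutation: the columns of a Macaulay matrix are fixed in decreasing grevlex order by definition, and the subsequent complexity analysis (\cref{prop:analysis:reduction-complexity}) relies on $T_{d+1}$ occupying the actual leftmost $\beta_{d+1}$ columns. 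Your argument yields upper triangularity of $T_{d+1}$ only after a column permutation, which is a strictly weaker statement: if the old monomials were interleaved in grevlex order with the new leading monomials of $G_{d+1}$, the top-left block in the true column order would have pivots escaping to the right and would not be triangular.

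The missing ingredient is the paper's central claim that every element of $\LM(G_{d+1})\smallsetminus\LM(G_d)$ is grevlex-smaller than every old monomial, so that $P$ is exactly the set of $\beta_{d+1}$ grevlex-largest monomials of degree $d+1$ and hence indexes precisely the first $\beta_{d+1}$ columns. This is where the specific structure of the ideal enters: by \cref{prop:analysis:new-gb-size}, all degree-$(n-1)$ monomials in $x_1,x_2,x_3$ already lie in $\LM(G_{n-1})$, so any new leading monomial $\sigma$ in degree $d+1>n-1$ satisfies $x_4\mid\sigma$ and, by reducedness of $G$, $\sigma/x_4\notin\langle\LM(G_d)\rangle$; the reverse lexicographic property then forces $\sigma/x_4\prec\tau/x_j$, where $\tau$ is the grevlex-smallest old monomial and $\tau/x_j\in\LM(G_d)$, whence $\sigma\prec\tau$. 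You need to supply this argument (or an equivalent one): \cref{lem:revlex:revlex-macaulay} alone does not give it, since reverse-lexicographicity constrains $\LM(I)$ degree by degree but says nothing a priori about where the minimal generators of the leading term ideal sit relative to the non-minimal ones within a given degree.
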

\begin{proof}
	Let $G$ be the reduced grevlex Gr\"obner basis of $\detidealCorkOne$.
	We partition the rows of $\macmatred_{d+1}$ into the following two sets
	\[
		R_1=\{f\in\rows(\macmatred_{d+1}):f\in G\}\quad R_2=\{f\in\rows(\macmatred_{d+1}):f\notin  G\}
	.\] 
	Let $\tau=\min_{f\in R_2}\{\LM(f)\}$. Then there must exist some
	variable $x_j$ such that $\frac{\tau}{x_j}\in\LM(G)$. Fix $\sigma\in
	R_1$. In \cref{prop:analysis:new-gb-size} it was shown that all
	monomials involving $x_1,x_2,x_3$ appear in $\LM(G_{n-1})$. Thus, since
	$d+1>n-1$, $x_4\mid\sigma$. As $G$ is a reduced Gr\"obner basis and
	$\sigma\in\LM(G)$, the monomial $\frac{\sigma}{x_4}$ is not in
	$\LM(G)$. Since we assume \cref{conj:revlex:det-ideal-is-revlex}, this
	forces $\frac{\tau}{x_j}\succ\frac{\sigma}{x_4}$. Subsequently, since
	$x_j\succ x_4$, we have that $\tau\succ\sigma$.

	This shows that any monomial of $\LM(R_1)$ is smaller than $\tau$. On
	the other hand, assuming \cref{conj:revlex:det-ideal-is-revlex}, for
	any $\tau'\succ\tau$, there exists some $g\in R_2$ such that
	$\LM(g)=\tau'$. 

	Now for any polynomial $g\in R_2$, there exists some
	$h\in\rows(\macmatred_d)$ such that $\LM(h)\mid\LM(g)$. Since
	\cref{alg:modgb} constructs the rows of $\macmat_{d+1}$ by multiplying
	the rows of $\macmatred_d$ by suitable variables, we see that there
	must be a row of $\macmat_{d+1}$ with leading monomial precisely
	$\LM(g)$. Thus, the set of rows of $\macmat_{d+1}$ which, upon
	echelonization, are in $R_2$ form a submatrix of $\macmat_{d+1}$ of the
	form
	\[
	\left(
		\begin{array}{c|c}
			T_{d+1} & X_{d+1}
		\end{array}
	\right)
	\] 
	with $T_{d+1}$ upper triangular. The rows of $T_{d+1}$ are in bijection
	with $R_2$, and by \cref{prop:analysis:new-gb-size}, the set $R_1$ has
	cardinality $\frac{(d-2n+3)(d-2n+2)}{2}$. As $\macmat_{d+1}$ has
	exactly $h_{d+1}$ rows, we see that $T_{d+1}$ has
	$h_{d+1}-\frac{(d-2n+3)(d-2n+2)}{2}$ rows.
\end{proof}

By \cref{prop:analysis:macmat-structure}, $A_{d+1}$ is a matrix with
$\alpha_{d+1}=\frac{(d-2n+3)(d-2n+2)}{2}$ rows, $T_{d+1}$ is a matrix with
$\beta_{d+1}=h_{d+1}-\alpha_{d+1}$ rows, and and $X_{d+1}$ is a matrix with
$\gamma_{d+1}=\binom{4+d}{3}-\beta_{d+1}$ columns. We begin by establishing
various useful facts about the behavior of $h_{d+1},
\alpha_{d+1},\beta_{d+1},\gamma_{d+1}$.

\begin{lemma}\label{lem:analysis:alpha-decreasing}
	For all $n-1<d<2n-3$, $\alpha_{d+1}<\alpha_d$.
\end{lemma}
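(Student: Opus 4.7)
The plan is to carry out a direct algebraic computation of the difference $\alpha_d - \alpha_{d+1}$ using the explicit formula for $\alpha_{d+1}$ established in \cref{prop:analysis:new-gb-size} and recalled right before the statement.

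Specifically, since $\alpha_{d+1}=\frac{(d-2n+3)(d-2n+2)}{2}$ as a function of $d$, substituting $d-1$ for $d$ gives $\alpha_d=\frac{(d-2n+2)(d-2n+1)}{2}$. I would then factor out the common factor $(d-2n+2)$ from the difference:
\[
\alpha_d - \alpha_{d+1} = \frac{(d-2n+2)\bigl[(d-2n+1)-(d-2n+3)\bigr]}{2} = \frac{(d-2n+2)(-2)}{2} = 2n-2-d.
\]
Under the hypothesis $n-1 < d < 2n-3$, in particular $d \le 2n-4$, so $2n-2-d \ge 2 > 0$, which gives $\alpha_{d+1} < \alpha_d$.

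There is no real obstacle here: the statement is a one-line numerical inequality about the quadratic $\frac{(d-2n+3)(d-2n+2)}{2}$ on the integer interval $[n,2n-4]$. The only thing to watch for is the correct re-indexing of $\alpha_d$ from the formula written for $\alpha_{d+1}$, and to verify that the interval of $d$ in the hypothesis is compatible with that of \cref{prop:analysis:new-gb-size} (which requires $n-1 \le d < 2n-3$) so that the formula is indeed applicable at both $d$ and $d-1$.
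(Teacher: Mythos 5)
Your proof is correct. You take a mildly different route from the paper: the paper views $\alpha_{d+1}=\frac{(d-2n+3)(d-2n+2)}{2}$ as a continuous function of $d$, computes its derivative $d-2n+\frac{5}{2}$, and concludes it is strictly decreasing on $\left(-\infty,2n-\frac{5}{2}\right]$, which contains the relevant integer range. You instead compute the finite difference directly, factoring out $(d-2n+2)$ to get $\alpha_d-\alpha_{d+1}=2n-2-d\ge 2>0$ for $d\le 2n-4$. The two arguments are equally short; yours is purely algebraic and avoids the (harmless but slightly awkward) step of interpolating a discrete index by a real variable, and it even yields the exact value of the decrement. Your re-indexing of the formula and the check that it applies at both $d$ and $d+1$ are both handled correctly.
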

\begin{proof}
	Let $f(d)=\alpha_d$ and view $f$ as a continuous function in one real
	variable. Then $f'(d)=d-2n+\frac{5}{2}$ has its unique root at
	$d=2n-\frac{5}{2}$. Thus, $f$ is strictly decreasing on the interval
	$\left(-\infty,2n-\frac{5}{2}\right]$.
\end{proof}

\begin{lemma}\label{lem:analysis:alpha-beta}
	For all $n-1\le d<2n-3$, $\alpha_{d+1}<\beta_{d+1}$.
\end{lemma}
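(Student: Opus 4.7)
The plan is to rewrite the desired inequality using the definition $\beta_{d+1} = h_{d+1} - \alpha_{d+1}$ from the paragraph immediately preceding \cref{lem:analysis:alpha-decreasing}, so that $\alpha_{d+1} < \beta_{d+1}$ becomes the equivalent statement $2\alpha_{d+1} < h_{d+1}$. I would then invoke the two monotonicity results already at hand to reduce verification of this inequality to a single extreme case.

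Specifically, by \cref{lem:analysis:h-increasing} the sequence $(h_{d+1})_{d=n-1}^{2n-4}$ is strictly increasing in $d$, so its minimum over the range $n-1 \le d < 2n-3$ is $h_n$, attained at $d=n-1$. By \cref{lem:analysis:alpha-decreasing}, applied successively for $d=n, n+1, \ldots, 2n-4$, the sequence $(\alpha_{d+1})_{d=n-1}^{2n-4}$ is non-increasing (strictly decreasing once $d \ge n$), so its maximum over the same range is $\alpha_n$, again at $d = n-1$. Hence $h_{d+1} - 2\alpha_{d+1}$ is minimized at $d = n-1$, and it suffices to check $h_n > 2\alpha_n$.

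For the evaluation at $d=n-1$, I would plug into the closed form
\[
h_d \;=\; \frac{(2+d-n)\bigl(d^2+(-2n+4)d+4n^2-4n+3\bigr)}{3}
\]
recorded in the proof of \cref{lem:analysis:h-increasing}, which after simplification gives $h_n = 2n^2+2$. The definition of $\alpha_{d+1}$ yields directly $2\alpha_n = (n-2)(n-1) = n^2-3n+2$. The difference $h_n - 2\alpha_n = n^2 + 3n$ is then manifestly positive for every $n \ge 3$, which is exactly the standing hypothesis ensuring the range for $d$ is nonempty. There is no real conceptual obstacle beyond careful bookkeeping of indices, in particular making sure the decreasing property in \cref{lem:analysis:alpha-decreasing} is applied only for $d > n-1$ and chained through the intermediate indices to deduce $\alpha_{d+1} \le \alpha_n$ for every $d$ in the asserted range.
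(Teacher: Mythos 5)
Your proof is correct and follows essentially the same route as the paper's: both reduce the claim to the single inequality $\alpha_n < h_n/2$ via \cref{lem:analysis:h-increasing} and \cref{lem:analysis:alpha-decreasing}, then verify it by computing $h_n = 2n^2+2$ and $\alpha_n = \frac{n^2-3n+2}{2}$. Your write-up merely spells out the monotonicity reduction in more detail than the paper does.
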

\begin{proof}
	In view of
	\cref{lem:analysis:h-increasing,lem:analysis:alpha-decreasing}, it
	suffices to show that $\alpha_n<\frac{h_n}{2}$. We have
	$\alpha_n=\frac{n^2-3n+2}{2}$ and $h_n=2n^2+2$. Thus,
	$\frac{h_n}{2}-\alpha_n=\frac{n^2+3n}{2}$, which is certainly positive
	for $n\ge 3$.
\end{proof}

\begin{lemma}\label{lem:analysis:alpha-gamma}
	For all $n-1\le d<2n-3$, $\alpha_{d+1}\le\gamma_{d+1}$.
\end{lemma}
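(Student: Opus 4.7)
The plan is to reduce the desired inequality to the trivial fact that $h_{d+1}$, being the dimension of the degree-$(d+1)$ part of the ideal $\detidealCorkOne$, cannot exceed the dimension of $\ring_{d+1}$ itself. Specifically, I will simply unpack the definitions of $\alpha_{d+1}, \beta_{d+1}, \gamma_{d+1}$ given immediately after \cref{prop:analysis:macmat-structure} and compute the difference $\gamma_{d+1} - \alpha_{d+1}$.

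From $\beta_{d+1} = h_{d+1} - \alpha_{d+1}$ and $\gamma_{d+1} = \binom{d+4}{3} - \beta_{d+1}$, I will obtain
\[
\gamma_{d+1} - \alpha_{d+1} \;=\; \binom{d+4}{3} - \beta_{d+1} - \alpha_{d+1} \;=\; \binom{d+4}{3} - h_{d+1}.
\]
Recall that by assumption $a \in U_{\HS}$, so $h_{d+1} = \HF_{\detidealCorkOne}(d+1)$, which by definition is the $\field$-dimension of the homogeneous degree-$(d+1)$ component of the ideal $\detidealCorkOne$. Since this component is a $\field$-linear subspace of $\ring_{d+1}$, whose dimension with $k=4$ is $\binom{(d+1)+k-1}{k-1} = \binom{d+4}{3}$, we have $h_{d+1} \le \binom{d+4}{3}$, and hence $\gamma_{d+1} - \alpha_{d+1} \ge 0$.

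I do not foresee a genuine obstacle here: the only subtlety is bookkeeping, namely to be sure that $h_{d+1}$ denotes the Hilbert function of the ideal (not of the quotient), and that the total number of columns of $\macmat_{d+1}$ in \cref{prop:analysis:macmat-structure} is indeed $\binom{d+4}{3}$ (the number of monomials of degree $d+1$ in four variables). Both points are consistent with the conventions fixed earlier in the paper, so no genericity hypotheses beyond $a \in U_{\HS}$ are needed, and in particular \cref{conj:revlex:det-ideal-is-revlex} is not required for this lemma. As a sanity check, at the top degree $d+1 = 2n-3$ one computes $\alpha_{2n-3} = 1$ while a short calculation using \cref{lem:analysis:h-increasing} yields $\binom{2n+1}{3} - h_{2n-3} = n(2n-1)$, so the inequality is strict and far from tight at the extremes.
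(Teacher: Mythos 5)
Your main argument is correct and is essentially identical to the paper's proof: both reduce $\gamma_{d+1}-\alpha_{d+1}=\binom{d+4}{3}-h_{d+1}\ge 0$ to the fact that the degree-$(d+1)$ piece of $\detidealCorkOne$ is a subspace of $\ring_{d+1}$. Your closing sanity check is off by one, however: at $d+1=2n-3$ the ambient dimension is $\binom{2n}{3}$, not $\binom{2n+1}{3}$, and one computes $h_{2n-3}=\tfrac{2n(n-1)(2n-1)}{3}=\binom{2n}{3}$, so $\gamma_{2n-3}=\alpha_{2n-3}$ and the inequality is an equality (not ``far from tight'') at the top degree, which is why the lemma is stated with $\le$.
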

\begin{proof}
	Recall that $h_{d+1}=\dim_{\field}(\detidealCorkOne_{d+1})$. The
	$\field$-vector space $\detidealCorkOne_{d+1}$ is a sub-$\field$ vector
	space of $\ring_{d+1}$. Since
	$\dim_{\field}(\ring_{d+1})=\binom{4+d}{3}$, we have that
	$h_{d+1}\le\binom{4+d}{3}$. Finally,
	\begin{align*}
		\gamma_{d+1}&=\binom{4+d}{3}-\beta_{d+1}\\
			    &=\binom{4+d}{3}-h_{d+1}+\alpha_{d+1}\\
			    &\ge\alpha_{d+1}
	\end{align*}
\end{proof}

Before turning to the complexity of echelonizing a Macaulay matrix in a fixed
degree, we need one final auxiliary lemma regarding the cost of solving
several triangular systems.

\begin{lemma}[see {\cite[Lemma~3.1]{DuGiPe04}}]\label{lem:analysis:trsm}
	Let $U\in\field^{p\times p}$ be an invertible $p\times p$ upper
	triangular matrix and $V\in\field^{p\times q}$ a $p\times q$ matrix.
	Let $C(p,q)$ be the arithmetic complexity of computing $U^{-1}V$ using
	\cite[\texttt{ULeft-TRSM}$(U,V)$]{DuGiPe04}. Then 
	\[
		C(p,q)\in\begin{cases}
			O(qp^{\omega-1})&\text{if }p\le q\\
			O(p^2q^{\omega-2})&\text{if }p>q
		\end{cases}
	.\] 
\end{lemma}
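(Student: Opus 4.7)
The plan is to apply a block-recursive strategy for solving the upper triangular system $UX = V$, reducing both cases to a collection of smaller TRSM subproblems and rectangular matrix multiplications whose costs can then be summed. As a preliminary base case I would establish $C(p,p) \in O(p^\omega)$ for the square problem. For this, I would split $U$ into a $2 \times 2$ block upper triangular matrix with $(p/2) \times (p/2)$ diagonal blocks and $V$ into two row blocks, so that solving $UX = V$ reduces to two square TRSMs of size $p/2$ and one $(p/2) \times (p/2)$ matrix multiplication. The resulting recurrence $C(p,p) = 2 C(p/2, p/2) + O(p^\omega)$ resolves to $O(p^\omega)$ by the master theorem.

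For the case $p \le q$, I would partition $V$ column-wise into $\lceil q/p \rceil$ blocks of width at most $p$; each block gives rise to an independent square TRSM with the same matrix $U$, at cost $O(p^\omega)$. Summing over the $\lceil q/p \rceil$ blocks yields total arithmetic cost $O((q/p) \cdot p^\omega) = O(q p^{\omega-1})$, as claimed.

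For the case $p > q$, I would instead split $U$ horizontally into a $2 \times 2$ block upper triangular matrix whose diagonal blocks $U_{11}, U_{22}$ are of size $p/2$, and split $V$ vertically into $V_1, V_2$ of size $(p/2) \times q$. Solving $UX = V$ then reduces to: first recursively solve $U_{22} X_2 = V_2$, then compute the correction $V_1 - U_{12} X_2$, then recursively solve $U_{11} X_1 = V_1 - U_{12} X_2$. Each recursive call has size $(p/2, q)$, and the single matrix multiplication has dimensions $(p/2) \times (p/2)$ by $(p/2) \times q$. To bound this multiplication, I would tile the $(p/2) \times (p/2)$ factor into $q \times q$ sub-blocks: since $p/2 > q$, there are $O((p/q)^2)$ such sub-blocks, each paired with a $q \times q$ sub-block of the second factor and combined at cost $O(q^\omega)$, giving total $O(p^2 q^{\omega-2})$. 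Unrolling the recurrence $C(p,q) = 2 C(p/2, q) + O(p^2 q^{\omega-2})$ geometrically until $p/2 \le q$ yields $O(p^2 q^{\omega-2})$ as the dominant term; the residual from the base case $C(q,q) = O(q^\omega)$, weighted by the $O(p/q)$ branches at the bottom of the recursion, contributes only $O(p q^{\omega-1})$, which is absorbed since $p > q$.

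The main obstacle is the careful accounting in the $p > q$ case: one must verify that the $q \times q$ block tiling of the unbalanced rectangular multiplication yields the precise exponent $p^2 q^{\omega-2}$ rather than a naive $O(p^\omega)$ bound, and that summing the cost of this multiplication step over the $O(\log(p/q))$ levels of the recursion leaves $O(p^2 q^{\omega-2})$ as the dominant contribution rather than incurring an additional logarithmic factor.
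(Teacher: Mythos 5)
Your proposal is correct and follows essentially the same route as the paper: the $p\le q$ case is the standard column-blocking argument of \cite[Lemma~3.1]{DuGiPe04} (which the paper simply cites), and the $p>q$ case is the same halving recursion $C(p,q)=2C(p/2,q)+O(p^2q^{\omega-2})$ whose per-level costs decrease geometrically and whose $p/q$ base cases of cost $O(q^\omega)$ contribute only $O(pq^{\omega-1})\subseteq O(p^2q^{\omega-2})$, exactly as in the paper's unrolling. The only cosmetic difference is that you re-derive the square base case and the rectangular multiplication bound by tiling, where the paper invokes the bound $C_\omega\min\{p,q,s\}^{\omega-2}\max\{pq,ps,qs\}$ directly.
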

\begin{proof}
	For $p\le q$, this is precisely the statement of
	\cite[Lemma~3.1]{DuGiPe04}. Assume then that $p>q$. In the following,
	we denote by $C_\omega$ the constant associated to rectangular matrix
	multiplication with exponent $\omega$. That is, the cost of multiplying
	a $p\times q$ matrix by a $q\times s$ matrix (all with entries in
	$\field$) is bounded by $C_\omega
	\min\{p,q,s\}^{\omega-2}\max\{pq,ps,qs\}$. We use the case $p\le q$ as
	a base case for the recursion. The complexity in this case is given
	directly by \cite[Lemma~3.1]{DuGiPe04}. That is,
	\[
		C(p,q)=\begin{cases}
			\frac{C_\omega}{2(2^{\omega-2}-1)}qp^{\omega-1}&\text{if }p\le q\\
			C(\left\lceil \frac{p}{2} \right\rceil,q)+C(\left\lfloor \frac{p}{2} \right\rfloor,q)+C_\omega p^2q^{\omega-2} &\text{otherwise}
		\end{cases}
	.\] 
	By padding $U$ with an identity block and $V$ with zeroes, we may
	assume that $\frac{p}{q}$ is a power of two. Subsequently, we have
	\begin{align*}
		C(p,q)&=\frac{p}{q}C(q,q)+C_\omega p^2q^{\omega-2}\sum_{j=0}^{\log_2\left(\frac{p}{q}\right)-1}\frac{1}{2^{j}}\\
		      &=\frac{C_\omega}{2(2^{\omega-2}-1)}pq^{\omega-1}+2C_\omega p^2q^{\omega-2}\left(1-\frac{q}{p}\right)\\
		      &=2C_\omega p^2q^{\omega-2}+\left(\frac{C_\omega}{2(2^{\omega-2}-1)}-2C_\omega\right)pq^{\omega-1}
	\end{align*} 
	as desired.
\end{proof}

Putting together
\cref{lem:analysis:h-increasing,lem:analysis:alpha-decreasing,lem:analysis:alpha-beta,lem:analysis:alpha-gamma},
we can compute an upper bound on the cost of echelonizing a Macaulay matrix in
a fixed degree.

\begin{proposition}\label{prop:analysis:reduction-complexity}
	For any $n-1\le d\le 2n-3$, the number of arithmetic operations in
	$\field$ required to compute the matrix $\macmatred_{d+1}$ from
	$\macmat_{d+1}$ is in 
	\[
		O\left(\beta_{d+1}^2\alpha_{d+1}^{\omega-2}+\alpha_{d+1}^{\omega-2}\beta_{d+1}\gamma_{d+1}\right)
	.\] 
\end{proposition}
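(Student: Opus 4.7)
The plan is to exploit the block decomposition of $\macmat_{d+1}$ given by \cref{prop:analysis:macmat-structure},
\[
\macmat_{d+1}=\begin{pmatrix} T_{d+1} & X_{d+1}\\ A_{d+1} & Y_{d+1}\end{pmatrix},
\]
and structure the echelonization in three stages: a triangular solve to produce the multiplier that kills $A_{d+1}$ using $T_{d+1}$, a rectangular matrix product to update the bottom-right block, and then a standard echelonization of that updated block. A preliminary remark is that $T_{d+1}$ is invertible: by construction in \cref{prop:analysis:macmat-structure}, each of its rows arises as a variable multiple of a row of $\macmatred_d$, and the latter is in reduced row-echelon form with monic leading entries, so $T_{d+1}$ is upper triangular with a diagonal of ones.

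In the first stage, I compute the multiplier $N := A_{d+1} T_{d+1}^{-1}$ by solving the triangular system $N T_{d+1} = A_{d+1}$ (which after a transposition is exactly the setting of \cref{lem:analysis:trsm}, with coefficient matrix of size $\beta_{d+1}$ and $\alpha_{d+1}$ right-hand sides). Applying \cref{lem:analysis:trsm} with $p = \beta_{d+1}$ and $q = \alpha_{d+1}$, and using \cref{lem:analysis:alpha-beta} to guarantee $\beta_{d+1}>\alpha_{d+1}$ so we are in the case $p>q$, yields the cost $O(\beta_{d+1}^{2}\alpha_{d+1}^{\omega-2})$, which accounts for the first term of the claimed bound.

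In the second stage, I compute $Y'_{d+1} := Y_{d+1} - N X_{d+1}$, which zeroes the $A_{d+1}$ block while producing the corresponding update of the bottom-right block. This is a rectangular matrix multiplication of an $\alpha_{d+1}\times\beta_{d+1}$ matrix by a $\beta_{d+1}\times\gamma_{d+1}$ matrix. Combining \cref{lem:analysis:alpha-beta} and \cref{lem:analysis:alpha-gamma}, $\alpha_{d+1}$ is the smallest of the three dimensions and the maximum product of two of them is $\beta_{d+1}\gamma_{d+1}$, so fast rectangular matrix multiplication performs the product in $O(\alpha_{d+1}^{\omega-2}\beta_{d+1}\gamma_{d+1})$ operations. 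In the third stage, I echelonize $Y'_{d+1}$, which is an $\alpha_{d+1}\times\gamma_{d+1}$ matrix with $\alpha_{d+1}\le\gamma_{d+1}$ by \cref{lem:analysis:alpha-gamma}; using fast linear algebra this costs $O(\alpha_{d+1}^{\omega-1}\gamma_{d+1})$, and is absorbed into the second term of the bound via $\alpha_{d+1}\le\beta_{d+1}$.

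The main obstacle is the careful dimension bookkeeping required to ensure that $\alpha_{d+1}$ plays the role of the minimum dimension at every stage, so that both \cref{lem:analysis:trsm} and fast rectangular matrix multiplication yield the best possible exponent; the auxiliary \cref{lem:analysis:alpha-beta,lem:analysis:alpha-gamma} are precisely what make this possible. Summing the three contributions gives the claimed bound $O(\beta_{d+1}^{2}\alpha_{d+1}^{\omega-2}+\alpha_{d+1}^{\omega-2}\beta_{d+1}\gamma_{d+1})$.
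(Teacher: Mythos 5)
Your block strategy is essentially the one in the paper: exploit the structure from \cref{prop:analysis:macmat-structure}, use \cref{lem:analysis:trsm} for the triangular part, form the Schur complement by a fast rectangular product whose smallest dimension is $\alpha_{d+1}$ (justified by \cref{lem:analysis:alpha-beta,lem:analysis:alpha-gamma}), and echelonize the resulting $\alpha_{d+1}\times\gamma_{d+1}$ block. Your ordering of the first two stages (computing $A_{d+1}T_{d+1}^{-1}$ with only $\alpha_{d+1}$ right-hand sides, then multiplying by $X_{d+1}$) is a reasonable variant of the paper's Steps~1--2, which instead normalize the top block to $\left(I\mid T_{d+1}^{-1}X_{d+1}\right)$ first and then compute $A_{d+1}\cdot\left(T_{d+1}^{-1}X_{d+1}\right)$; your version has the merit of reading the cost $O(\beta_{d+1}^{2}\alpha_{d+1}^{\omega-2})$ directly off the $p>q$ case of \cref{lem:analysis:trsm}.

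There is, however, a genuine gap: your three stages produce only a row echelon form, whereas the proposition concerns $\macmatred_{d+1}$, which is by definition the \emph{reduced} row echelon form (and \cref{alg:modgb} explicitly calls $\rref$). Full reduction requires two further pieces of work that you never cost: (i) the top block must end up as $\left(I\mid T_{d+1}^{-1}X_{d+1}\right)$ rather than $\left(T_{d+1}\mid X_{d+1}\right)$, and (ii) the columns of the top block lying above the $\alpha_{d+1}$ new pivots contributed by the bottom block must be cleared by back-substitution. The paper's proof devotes its entire Step~4 to item (ii): that step is dominated by a product $X_{d+1}^{(1)}Y_{d+1}$ with $X_{d+1}^{(1)}$ of size $\beta_{d+1}\times\alpha_{d+1}$ and $Y_{d+1}$ of size $\alpha_{d+1}\times\left(\binom{4+d}{3}-h_{d+1}\right)$, and showing that this fits inside $O(\alpha_{d+1}^{\omega-2}\beta_{d+1}\gamma_{d+1})$ requires a short case analysis on whether $\alpha_{d+1}$ exceeds $\binom{4+d}{3}-h_{d+1}$. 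These reduction steps are not free and are not absorbed by anything in your argument, so as written you have bounded the cost of a weaker computation than the one claimed.
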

\begin{proof}
	The computation of $\macmatred_{d+1}$ from  $\macmat_{d+1}$ can be
	broken up into four steps.

	\emph{Step 1.} First, we echelonize the upper block, which is of the
	form $\left(\begin{array}{c|c}T_{d+1}&X_{d+1}\end{array}\right)$, with
	$T_{d+1}$ upper triangular. Applying \cref{lem:analysis:trsm}, the cost
	of this step is $O(\beta_{d+1}^2\alpha_{d+1}^{\omega-2})$ 

	\emph{Step 2.} Next, we use the $I_{d+1}$ block to eliminate
	$A_{d+1}$. The resulting matrix takes the form
	\[
		\left(
			\begin{array}{c|c}
				I_{d+1} & X_{d+1}\\\hline
				0 & Y_{d+1}-A_{d+1}X_{d+1}
			\end{array}
		\right)
	.\] 
	The arithmetic complexity of this step is bounded by the cost of
	computing $A_{d+1}X_{d+1}$. The matrix $A_{d+1}$ has $\alpha_{d+1}$
	rows and $\beta_{d+1}$ columns, while $X_{d+1}$ has $\beta_{d+1}$ rows
	and $\gamma_{d+1}$ columns. By
	\cref{lem:analysis:alpha-beta,lem:analysis:alpha-gamma},
	\[
	\min\{\alpha_{d+1},\beta_{d+1},\gamma_{d+1}\}=\alpha_{d+1}
	.\] 
	Hence, by \cite[Section~2.1]{Knight1995}, the matrix $A_{d+1}X_{d+1}$ can
	be computed using $O(\alpha_{d+1}^{\omega-2}\beta_{d+1}\gamma_{d+1})$
	arithmetic operations in $\field$.
	
	\emph{Step 3.} Next, we compute the reduced row echelon form of
	$Y_{d+1}-A_{d+1}X_{d+1}$ which has $\alpha_{d+1}$ rows and
	$\beta_{d+1}$ columns. By \cref{lem:analysis:alpha-beta}, and using the
	general results of \cite[Section~2.2]{Storjohann2000} (see also
	\cite[Appendix~A]{JePeSt13}), this can be done using
	$O(\alpha_{d+1}^{\omega-1}\gamma_{d+1})$ operations in $\field$.

	\emph{Step 4.} The matrix after the previous step takes the form
	\[
		\left(
			\begin{array}{c|c|c}
				I_{d+1} & X_{d+1}^{(1)} & X_{d+1}^{(2)}\\ \hline
				0 & I^{(\alpha_{d+1})} & Y_{d+1}
			\end{array}
		\right)
	,\]
	where $X_{d+1}=\left(\begin{array}{c|c} X_{d+1}^{(1)} &
	X_{d+1}^{(2)}\end{array}\right)$ and $I^{(\alpha_{d+1})}$ is an
	identity matrix of size $\alpha_{d+1}$. The final step of the
	echelonization process is then to reduce $X_{d+1}$ using the identity
	block $I^{(\alpha_{d+1})}$. The resulting matrix takes the form
	\[
		\left(
			\begin{array}{c|c|c}
				I_{d+1} & 0 & X_{d+1}^{(2)}-X_{d+1}^{(1)}Y_{d+1} \\ \hline
				0 & I^{(\alpha_{d+1})} & Y_{d+1}
			\end{array}
		\right)
	.\] 
	Similarly to above, the arithmetic complexity of this step is bounded
	by that of computing $X_{d+1}^{(1)}Y_{d+1}$. The matrix $X_{d+1}^{(1)}$
	has $\beta_{d+1}$ rows and  $\alpha_{d+1}$ columns, while the matrix
	$Y_{d+1}$ has $\alpha_{d+1}$ rows and $\binom{4+d}{3}-h_{d+1}$ columns.
	Therefore, by the general bound given in
	\cite[Section~2.1]{Knight1995}, the number of arithmetic $\field$
	operations required to compute the matrix $X_{d+1}^{(1)}Y_{d+1}$ is in
	\[
	\begin{cases}
		O\left(\left(\binom{4+d}{3}-h_{d+1}\right)^{\omega-2}\alpha_{d+1}\beta_{d+1}\right) &\text{if } \alpha_{d+1}>\binom{4+d}{3}-h_{d+1}\\
		O\left(\left(\binom{4+d}{3}-h_{d+1}\right)\alpha_{d+1}^{\omega-2}\beta_{d+1}\right) &\text{otherwise} 
	\end{cases}
	\] 
	In the first case, 
	\[
		O\left(\left(\binom{4+d}{3}-h_{d+1}\right)^{\omega-2}\alpha_{d+1}\beta_{d+1}\right)\subseteq O(\alpha_{d+1}^{\omega-1}\beta_{d+1})\subseteq O(\alpha_{d+1}^{\omega-2}\beta_{d+1}\gamma_{d+1})
	\] 
	and in the second case, since $\binom{4+d}{3}-h_{d+1}\ge\gamma_{d+1}$,
	 \[
		 O\left(\left(\binom{4+d}{3}-h_{d+1}\right)\alpha_{d+1}^{\omega-2}\beta_{d+1}\right)\subseteq O(\alpha_{d+1}^{\omega-2}\beta_{d+1}\gamma_{d+1})
	\] 
	so the complexity of the second step dominates.
\end{proof}

Our main complexity upper bound, given in the following theorem, is now an easy consequence of \cref{prop:analysis:reduction-complexity}. 

\begin{theorem}\label{thm:analysis:final-complexity}
	Fix $a\in U_{\RL}\subseteq\affspaceCorkOne$. Let
	$M=\phi_a({\genmat[1]})$. The number of arithmetic operations in
	$\field$ performed by \cref{alg:detgb} when computing the reduced
	grevlex Gr\"obner basis for $\detidealCorkOne$ is in
	$O\left(n^{2\omega+3}\right)$.
\end{theorem}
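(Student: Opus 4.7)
The plan is to apply \cref{prop:analysis:reduction-complexity} for each degree $d$ from $n-1$ to $2n-4$ and sum the resulting per-degree reduction costs to bound the total cost of the third (and dominant) call to $\modgb$ inside \cref{alg:detgb}. For each such $d$, the cost bound takes the factored form
\[
	O\left(\beta_{d+1}^2\alpha_{d+1}^{\omega-2}+\alpha_{d+1}^{\omega-2}\beta_{d+1}\gamma_{d+1}\right)
	= O\left(\alpha_{d+1}^{\omega-2}\beta_{d+1}(\beta_{d+1}+\gamma_{d+1})\right),
\]
which the uniform estimates below reduce to a clean polynomial bound in $n$.

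Next, I would establish the following uniform bounds over the range $n-1\le d\le 2n-4$: by \cref{lem:analysis:alpha-decreasing}, $\alpha_{d+1}\le\alpha_{n}=(n-1)(n-2)/2=O(n^2)$; by construction, $\beta_{d+1}+\gamma_{d+1}=\binom{4+d}{3}=\dim_{\field}\ring_{d+1}=O(n^3)$; and $\beta_{d+1}\le h_{d+1}=O(n^3)$ using the closed form for the Hilbert series given in \cref{prop:hilb:hilb} (equivalently, the explicit expression recalled in the proof of \cref{lem:analysis:h-increasing}). Substituting these into the factored cost yields a per-degree bound of $O(n^{2(\omega-2)}\cdot n^3\cdot n^3)=O(n^{2\omega+2})$.

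Summing over the $O(n)$ values of $d$ in the range gives a total reduction cost of $O(n^{2\omega+3})$ for the final $\modgb$ call. To finish, I would verify that the auxiliary steps of \cref{alg:detgb}---the construction of $S_2$ and $S_1$ from the Gulliksen-Neg{\aa}rd complex together with the two preparatory calls $\modgb(S_2,\mathrm{grevlex},\emptyset,n-3)$ and $\modgb(S_1,\mathrm{grevlex},\LMmod(L_2),n-2)$---are all absorbed by this bound: they involve strictly smaller degree bounds (at most $n-2$) and modules of rank $O(n^2)$, so an analogous degree-by-degree analysis yields costs of strictly smaller order. The main obstacle I anticipate lies in this verification for the module-valued calls, since working in free modules of positive rank requires carefully tracking the signature-based elimination of syzygy-induced rows to ensure the row counts, and hence the echelonization costs, genuinely remain well under the $O(n^{2\omega+3})$ budget.
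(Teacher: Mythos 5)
Your analysis of the final call to $\modgb$ matches the paper's: both sum the bound of \cref{prop:analysis:reduction-complexity} over $n-1\le d\le 2n-4$ and use $\alpha_{d+1}\le\alpha_n=O(n^2)$ from \cref{lem:analysis:alpha-decreasing}; where the paper evaluates $\sum_d(\beta_{d+1}^2+\beta_{d+1}\gamma_{d+1})$ exactly as a degree-$7$ polynomial in $n$, you instead use the uniform termwise bounds $\beta_{d+1}\le h_{d+1}=O(n^3)$ and $\beta_{d+1}+\gamma_{d+1}=\binom{4+d}{3}=O(n^3)$ over the $O(n)$ relevant degrees, which is an equally valid and arguably more transparent route to $O(n^{2\omega+3})$. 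One small omission: \cref{prop:analysis:reduction-complexity} only covers the matrices $\macmat_{d+1}$ built from $\macmatred_d$, so the initial echelonization of $\macmat_{n-1}$ ($n^2$ rows of full rank, $\binom{n+2}{3}$ columns) needs the separate, easy bound $O\left(n^{2\omega-2}\binom{n+2}{3}\right)\subseteq O(n^{2\omega+1})$, which the paper supplies explicitly.

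The genuine gap is exactly where you anticipate it, and your proposed fix does not work as stated. The claim that the preparatory calls $\modgb(S_2,\ldots)$ and $\modgb(S_1,\ldots)$ involve ``modules of rank $O(n^2)$ and strictly smaller degree bounds, so an analogous degree-by-degree analysis yields costs of strictly smaller order'' has the inequality pointing the wrong way: since $\Syz(\detsystemCorkOne)$ lives in $\ring^{n^2}$, its degree-$d$ Macaulay matrix has $n^2\binom{d+3}{3}=O(n^5)$ columns and a comparable number of candidate rows, so a generic echelonization estimate for these matrices vastly \emph{exceeds} the $O(n^{2\omega+3})$ budget rather than falling under it. The paper's argument is different and essential: the number of rows that actually require reduction in the degree-$d$ Macaulay matrix of the first (resp.\ second) syzygy module is precisely the number of a priori reductions to zero avoided in $\macmat_{d+1}$ (resp.\ in the degree-$(d+1)$ Macaulay matrix of the first syzygy module), so the work performed in the preparatory calls is controlled by quantities already accounted for in the analysis of the final call. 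Without this correspondence, or some substitute structural analysis of the syzygy-module Macaulay matrices in the spirit of \cref{prop:analysis:macmat-structure}, the absorption of the two preparatory calls into the stated bound remains unproven.
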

\begin{proof}
	Note first that all arithmetic operations occur when computing the
	$\macmatred_d$ from the $\macmat_d$. Secondly, note that the complexity
	of the final step of \cref{alg:detgb} bounds the complexity of the
	algorithm as a whole, since the number of rows to be reduced in the
	Macaulay matrix in degree $d$ for the first (resp.\ second) syzygy
	module is precisely the number of (a priori) reductions to zero
	encountered in $\macmat_{d+1}$ (resp.\ the Macaulay matrix in degree
	$d+1$ of the first syzygy module). Note also that $\macmat_{n-1}$ has
	$n^2$ rows, and $\binom{n+2}{3}$ columns, and is of rank $n^2$. Thus,
	by \cite[Section~2.2]{Storjohann2000} (see also
	\cite[Appendix~A]{JePeSt13}) the arithmetic complexity of computing
	$\macmatred_{n-1}$ from $\macmat_{n-1}$ is in
	\[
		O\left(n^{2\omega-2}\binom{n+2}{3}\right)
    \;\;\subseteq\;\;
    O\left(n^{2\omega+1}\right)
	.\] 
	It follows, by \cref{prop:analysis:reduction-complexity}, that the
	total complexity of computing the reduced grevlex Gr\"obner basis for
	$\detidealCorkOne$ is in
  \(O(n^{2\omega+1}+f_\omega(n))\),
	where
	\[
		f_\omega(n)=\sum_{d=n-1}^{2n-4}\beta_{d+1}^2\alpha_{d+1}^{\omega-2}+\alpha_{d+1}^{\omega-2}\beta_{d+1}\gamma_{d+1}
	.\] 
	By \cref{lem:analysis:alpha-decreasing}, for all $n-1\le d\le 2n-4$,
	$\alpha_{d+1}\le\alpha_{n}<n^2$, hence
	\begin{equation}\label{eq:analysis:comp-2}
		f_\omega(n) \;\in\; O\left(n^{2\omega-4}\sum_{d=n-1}^{2n-4}\beta_{d+1}\gamma_{d+1}+\beta_{d+1}^2\right)
	.\end{equation}
	One can verify (e.g.\ using the Maple computer algebra system \cite{Maple}) that
	\[
		\sum_{d=n-1}^{2n-4}\beta_{d+1}\gamma_{d+1}+\beta_{d+1}^2=\frac{619}{1260}n^{7}-\frac{341}{360}n^{6}-\frac{7}{360}n^{5}+\frac{7}{36}n^{4}-\frac{169}{360}n^{3}-\frac{89}{360}n^{2}-\frac{1}{420}n
	.\] 
  It follows that \(f_\omega(n) \in O(n^{2\omega+3})\), which concludes the proof.
\end{proof}

\begin{remark}
	The upper bound on \cref{alg:detgb} obtained in
	\cref{thm:analysis:final-complexity} is subquadratic in the size of the
	dense representation of the output Gr\"obner basis obtained in
	\cref{thm:analysis:lower-bound}.
\end{remark}

\begin{remark}
	When $\omega=2$, the bound $O(n^{2\omega+3})$ becomes $O(n^{7})$, which
	still differs from the lower bound on the output size obtained in
	\cref{thm:analysis:lower-bound} by a factor of $n$. This suggests that
	there might still be room for improvement upon the bound obtained in
	\cref{thm:analysis:final-complexity}. In experiments, when working on
	input of matrices of homogeneous linear forms in four variables with
	coefficients chosen uniformly at random from some large prime field,
	one can observe that the submatrices $A_d$ defined in
	\cref{prop:analysis:macmat-structure} are sparse. Perhaps by taking
	into account this sparsity, a tighter upper bound could be achieved.
\end{remark}

\subsection{The asymptotic behavior of \texorpdfstring{$f_\omega(n)$}{the dominating function}}

In the proof of \cref{thm:analysis:final-complexity}, only one upper bound is
actually used\,---\,the bound $\alpha_{d+1}<n^2$. We conclude our complexity
analysis by presenting precise asymptotics for $f_\omega(n)$ for various
$\omega$. Using the SageMath computer algebra system (see \cite{sagemath}), we obtain
the asymptotic data in \cref{table:analysis:asymptotics}. It suggests that the
asymptotic result $f_\omega(n)=O\left(n^{2\omega+3}\right)$ obtained in
\cref{thm:analysis:final-complexity} is sharp.

\begin{table}[htp]
	\caption{The asymptotics of $f_\omega(n)$ compared to $n^{2\omega+3}$
	for various $2\le \omega\le 3$.}
	\label{table:analysis:asymptotics}
	\centering
	\begin{tabular}{|c|c|c|}
		\hline
		$\omega$ & $f_\omega(n)\sim_{n\to\infty}$ & $n^{2\omega+3}$\\\hline
		$3$ & $\frac{401}{18144}n^{9}$ & $n^{9}$\\
		$2.7$ & $\frac{2^{\frac{7}{10}}\cdot76533282553747476335323}{2761171875000000000000000}n^{8.4}$ & $n^{8.4}$\\
		$2.5$ & $\frac{29\sqrt{2}}{10080}n^{8}$ & $n^{8}$\\
		$2.38$ & $\frac{2^{\frac{19}{50}}\cdot3808710545424609640564981876343720387}{41658431291580200195312500000000000000}n^{7.76}$ & $n^{7.76}$\\
		$2$ & $\frac{619}{1260}n^{7}$ & $n^{7}$\\\hline
	\end{tabular}
\end{table}

\section*{Acknowledgments}

Funding: The author is supported by \emph{Quantum Information Center Sorbonne}
(QICS); by the Agence nationale de la recherche (ANR) [ANR-19-CE40-0018
\textsc{De Rerum Natura}, ANR-18-CE33-0011 \textsc{SESAME}, and
ANR-23-CE48-0003 \textsc{CREAM}]; the joint ANR-Austrian Science Fund FWF
[ANR-22-CE91-0007 \textsc{EAGLES} and ANR-FWF ANR-19-CE48-0015 \textsc{ECARP}];
and the EOARD-AFOSR [FA8665-20-1-7029].

Some computations required for the proof of
\cref{thm:analysis:final-complexity} were performed using
Maple${}^\mathrm{TM}$.

The author would like to thank his Ph.D.\ advisors Vincent Neiger and Mohab
Safey El Din for numerous helpful discussions and suggestions.

Finally, the author would like to thank Alessio Caminata for his explanation of
why the determinantal ideals considered here are indeed generically Gorenstein
(see the comments following \cref{thm:lefschetz:det-ideal-wlp}).

\bibliographystyle{alpha}
\bibliography{complexity}

\end{document}